\documentclass{amsart}
\usepackage{amsmath, amsthm, amsfonts, amssymb} 
\usepackage{mathrsfs}
\usepackage{dsfont}
\numberwithin{equation}{section}
\usepackage{a4}
\usepackage{url}
\usepackage{enumerate}
\usepackage[english]{babel}
\usepackage{graphicx}
\usepackage{stmaryrd}
\usepackage[authoryear]{natbib}
\usepackage[usenames, dvipsnames]{color}
\usepackage{hyperref}

\newtheorem{Theorem}{Theorem}[section]
\newtheorem*{Metatheorem}{Metatheorem}

\newtheorem{Prop}[Theorem]{Proposition}
\newtheorem{Lem}[Theorem]{Lemma}
\newtheorem{Ass}[Theorem]{Assumption}

\theoremstyle{remark}
\newtheorem{Rem}[Theorem]{Remark}
\newtheorem{Ex}[Theorem]{Example}

\theoremstyle{definition}
\newtheorem{Defn}[Theorem]{Definition}

\newcommand{\coloneqq}{\mathrel{\mathop:}=}


\renewcommand{\epsilon}{\varepsilon}

\newcommand{\R}{\mathds{R}}
\newcommand{\Q}{\mathds{Q}}

\newcommand{\N}{\mathds{N}}

\newcommand{\E}{\mathbb{E}}
\newcommand{\F}{\mathcal{F}}

\renewcommand{\P}{\mathds{P}}

\newcommand{\Qc}{\mathcal{Q}}
\newcommand{\Ac}{\mathcal{A}}

\newcommand{\Pc}{\mathcal{P}}
\newcommand{\Nc}{\mathcal{N}}
\newcommand{\Fc}{\mathcal{F}}
\newcommand{\Fb}{\mathbb{F}}
\newcommand{\Nb}{\mathbb{N}}
\newcommand{\Sc}{\mathcal{S}}
\newcommand{\Uc}{\mathcal{U}}
\newcommand{\Pfr}{\mathfrak{P}}
\newcommand{\nf}{\mathrm{0}}

\newcommand{\adm}{\Ac_\Phi(\Fb)}
\newcommand{\admU}{\Ac_\Phi(\Fb^\Uc)}

\newtheoremstyle{TheoremNum}
        {\topsep}{\topsep}              
        {\itshape}                      
        {}                              
        {\bfseries}                     
        {.}                             
        { }                             
        {\thmname{#1}\thmnote{ \bfseries #3}}
\theoremstyle{TheoremNum}

\setlength{\parindent}{0cm}


\usepackage{tikz}
\usetikzlibrary{decorations.pathreplacing}
\usepackage{tikz-3dplot}

\usepackage{tikz}
\usetikzlibrary{decorations.markings}
\usetikzlibrary{shapes.geometric}

\pgfdeclarelayer{edgelayer}
\pgfdeclarelayer{nodelayer}
\pgfsetlayers{edgelayer,nodelayer,main}

\tikzstyle{none}=[inner sep=0pt]
\tikzset{new/.style={thick}}

\definecolor{cccccc}{rgb}{0.8,0.8,0.8}
\definecolor{qqqqff}{rgb}{0,0,1}
\definecolor{cqcqcq}{rgb}{0.75,0.75,0.75}
\definecolor{ao}{rgb}{0.0, 0.5, 0.0}
\definecolor{amber}{rgb}{1.0, 0.0, 0.0}
\definecolor{babyblue}{rgb}{0.54, 0.81, 0.94}
\definecolor{red}{rgb}{1,0,0}
\definecolor{blue}{rgb}{0,0,1}
\definecolor{qreen}{rgb}{0,1,0}


\begin{document}

\title[Robust Modelling of Financial Markets]{A Unified Framework for Robust Modelling of Financial Markets in discrete time}

\author{Jan Ob{\l}{\'o}j}
\thanks{We gratefully acknowledge funding received from the European Research Council under the European Union's Seventh Framework Programme (FP7/2007-2013) / ERC grant agreement no. 335421) and from St. John's College in Oxford. We also thank Matteo Burzoni for his helpful remarks and comments. JW further acknowledges support from the German Academic Scholarship Foundation. }
\author{Johannes Wiesel}
\email{jan.obloj@maths.ox.ac.uk}
\email{johannes.wiesel@maths.ox.ac.uk}
\address{Mathematical Institute and St. John's College, University of Oxford, Oxford}

\date{\today}

\begin{abstract}
We unify and establish equivalence between the pathwise and the quasi-sure approaches to robust modelling of financial markets in discrete time. In particular, we prove a Fundamental Theorem of Asset Pricing and a Superhedging Theorem, which encompass the formulations of \citep{bouchard2015arbitrage}
and \citep{bfhmo}. 
In bringing the two streams of literature together, we also examine and relate their many different notions of arbitrage. We also clarify the relation between robust and classical $\P$-specific results. 
Furthermore, we prove when a superhedging property w.r.t.\ the set of martingale measures supported on a set of paths $\Omega$ may be extended to a pathwise superhedging on $\Omega$ without changing the superhedging price.
\end{abstract}

\maketitle 


\section{Introduction}
Mathematical models of financial markets are of great significance in economics and finance and have played a key role in the theory of pricing and hedging of derivatives and of risk management. Classical models, going back to \citep{samuelson1965rational} and \citep{black1973pricing} in continuous time, specify a fixed probability measure $\P$ to describe the asset price dynamics. They led to a powerful theory of complete, and later incomplete, financial markets. The original models have undergone a myriad of variations including, amongst others, local and stochastic volatility models and have been widely applied. However, they also faced important criticism for ignoring the issue of model uncertainty, particularly so in the wake of the 2007/08 financial crisis. Consequently, inspired by the theoretical developments going back to \citep{knight2012risk}, new modelling approaches emerged which aim to address this fundamental issue. These can be broadly divided into two streams based on the so-called quasi-sure and pathwise approaches respectively.

The quasi-sure approach introduces a set of priors $\mathfrak{P}$ representing possible market scenarios. These priors can be very different and $\mathfrak{P}$ typically contains measures which are mutually singular. This presents significant mathematical challenges and led to the theory of quasi-sure stochastic analysis (see, e.g., \citep{peng2004nonlinear,denis2006theoretical}). In discrete time, this framework was abstracted in \citep{bouchard2015arbitrage}, which we call the quasi-sure formulation in the rest of this paper. By varying the set of probability measures $\mathfrak{P}$ between the ``extreme" cases of one fixed probability measure, $\mathfrak{P}=\{\P\}$, and that of considering all probability measures, $\mathfrak{P}=\mathcal{P}(X)$, this formulation allows for widely different  specifications of market dynamics. The quasi-sure approach has been employed to consider model uncertainty along market frictions and other related problems, see e.g. \citep{bayraktar2015arbitrage,bayraktar2016fundamental}. 
The pathwise approach addresses Knightian uncertainty in market modelling by describing the set of market scenarios in absence of a probability measure or any similar relative weighting of such scenarios. 
It is also referred to as the pointwise, or $\omega$ by $\omega$, approach and it bears similarity to the way central banks carry out stress tests using scenario generators. In discrete time a suitable theory was obtained in \citep{bfhmo}, based on earlier developments in \citep{burzoni2015model,burzoni2016universal}. The methodology builds on the notion of prediction sets introduced in \citep{mykland2003financial} and used in continuous time in \citep{hou2015robust}. The particular case of including all scenarios is often referred to as the model-independent framework and was pioneered in \citep{davis2007range} and \citep{acciaio2013model}. From here, a further model specification is carried out by including additional assumptions, which represent the different agents' beliefs. In this manner paths deemed impossible by all agents are eliminated. The remaining set of paths is then called the prediction set, or the model.

Both approaches, the quasi-sure and the pathwise, allow thus to interpolate between the two ends of the modelling spectrum, as identified by \citep{Merton73}: the model-independent and the model-specific settings (see Figure \ref{fig. 1}). In doing so, they allow to capture how their outputs change in function of adding or removing modelling assumptions, thus allowing to quantify the impact and risk that a given set of assumptions bear on the problem at hand, see \citep{Cont2006}.
\begin{figure}[h] 
	\centering
		\includegraphics[width=10cm]{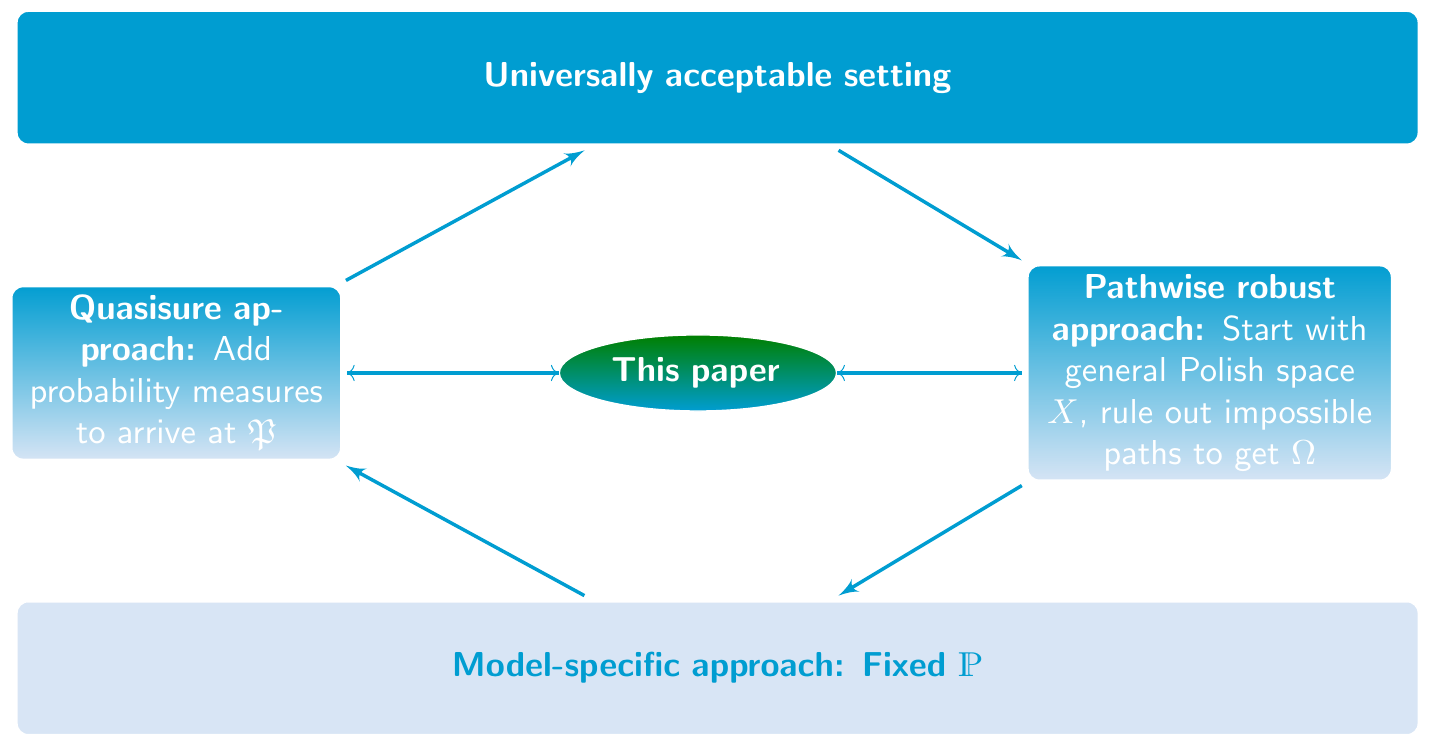}
	\caption[Different approaches to modelling financial markets]{Different approaches to modelling financial markets}
\label{fig. 1}
\end{figure}
Both approaches were successful in developing suitable notions of arbitrage and extending the core results from the classical $\P$-a.s. setting to their more general context. In particular, in both approaches, it is possible to establish a Fundamental Theorem of Asset pricing of the form
\begin{align*}
\text{No Arbitrage} \Leftrightarrow \text{Existence of martingale measures } \Q
\end{align*}
and a Superhedging Theorem of the form
\begin{align*}
\sup_{\Q} \E_{\Q}[g]= \inf \{x \ | \ x\text{ is the initial capital of a superhedging strategy of }g \}.
\end{align*}
Our main contribution is to unify these two approaches to model uncertainty. We show that, under mild technical assumptions, the pathwise and quasi-sure Fundamental Theorems of Asset Pricing and Superhedging Dualities can be inferred from one another and are thus equivalent. Our statements follow a meta-structure outlined below:
\begin{Metatheorem}
Suppose we are in the quasi-sure setting with a given set of priors $\mathfrak{P}$. Then, there exists a suitable selection of scenarios $\Omega^\mathfrak{P}$ such that the pathwise result for $\Omega^\mathfrak{P}$ implies the quasi-sure result for $\mathfrak{P}$.\\
Conversely, suppose we are given a selection of scenarios $\Omega$. Then, there is a set of priors $\mathfrak{P}^\Omega$ such that the quasi-sure result for $\mathfrak{P}^\Omega$ implies the pathwise result for $\Omega$.
\end{Metatheorem}

Establishing such equivalence allows us to gain significant additional insights into the core objects in both approaches, as well as clarify links to the classical model-specific setting. In particular, when transposing the results from the pointwise to the quasi-sure setup, the key technical \emph{analytic product structure} assumption in \cite{bouchard2015arbitrage}, see Definition \ref{def:APS} below, is deduced naturally from the analyticity of the set of scenarios in \citep{bfhmo}. When establishing the Superhedging Theorem, we not only show that the pathwise superhedging price of $g$ is equal to the quasi-sure one, but we also show that both are equal to the model-specific $\P$-superhedging price, where $\P$ depends on the setting, i.e., on $\Pfr$ or equivalently on $\Omega$, but also on the payoff $g$. Finally, the key implication in the proof of the robust Fundamental Theorem of Asset Pricing, i.e., \textit{(5)} $\Rightarrow$ \textit{(1)} in Theorem \ref{Thm. S} below, is obtained by carefully constructing a suitable $\P\in \Pfr$ which does not admit an arbitrage in the classical sense and hence admits an equivalent martingale measure.

Furthermore, we survey and relate the concepts of arbitrage used in both approaches. We provide an extensive list of arbitrage notions introduced and used across the literature on robust finance and establish clear relations between them.  
We also investigate in detail the notion of pathwise superhedging. As noted in \citep{burzoni2015model}, the pathwise superhedging duality does not hold for general claims $g$ when superhedging on a general set $\Omega$ is required. Instead, one has to consider hedging on a smaller ``efficient" set $\Omega^*$ (defined as the largest set supported by martingale measures and contained in $\Omega$) to retain the pricing-hedging duality. We clarify when this is necessary and when one can extend the superhedging duality from $\Omega^*$ to $\Omega$. Intuitively, since there are arbitrage opportunities on $\Omega \setminus \Omega^*$, one could try to superhedge the claim $g$ on $\Omega \setminus \Omega^*$ without any additional cost by implementing an arbitrage strategy. We provide a number of counterexamples to show this idea is not feasible in general and link this to measurability constraints on arbitrage strategies, which were also encountered in \citep{burzoni2016universal}. We then show that the above-mentioned intuition is only true for essentially uniformly continuous $g$ under certain regularity conditions on $\Omega$. 

The rest of the paper is organised as follows. Section \ref{sec:main} contains the main results. 
First, in Section \ref{sec 1}, we introduce the general setup in which we work. We discuss different notions of (robust) arbitrage in Section \ref{sec:arbitrage}. Then, in Section \ref{sec:FTAP}, we establish our version of the robust Fundamental Theorem of Asset Pricing which unifies the quasi-sure and pathwise perspectives. And in Section \ref{subseq superh}, we state a robust Superhedging Theorem. Section \ref{sec:additional} presents complementary results on extending the superhedging duality from $\Omega^*$ to $\Omega$ without additional cost and on relations between two strong notions of pathwise arbitrage. Finally, Section \ref{sec:proofs} contains technical results and most of the proofs. In particular, we give the proofs of Theorems \ref{Thm. bn vs bfhmo} and \ref{Thm. S}  in Section \ref{subseq. ftap} and of Theorem \ref{Thm super} in Section \ref{Sec proofsuper}.

\section{Unified Framework for Robust Modelling of Financial Markets}\label{sec:main}
\subsection{Trading strategies and pricing measures}\label{sec 1}

We use notation similar to \citep{bouchard2015arbitrage} and work in their setting, so we only recall the main objects of interest here and refer to \citep{bouchard2015arbitrage} and \citep[Chapter 7]{bertsekas1978stochastic} for technical details. Let $T \in \N$ and $X_1$ be a Polish space. We define for $t \in \{1, \dots T\}$ the Cartesian product $X_t\coloneqq X_1^t$ and define $X\coloneqq X_T$, with the convention that $X_0$ is a singleton. We denote by $\mathcal{B}(X)$ the Borel sets on $X$, by $\mathcal{P}(X)$ the set of probability measures on $\mathcal{B}(X)$ and define the function $\text{proj}_t\colon X \to X_1$ which projects $\omega \in X$ to the $t$-th coordinate, i.e., $\text{proj}_t(\omega)=\omega_t$. \\
Next we specify the financial market. Let $d \in \N$, $\mathbb{F}$ an arbitrary filtration and let $S_t=(S_t^1, \dots, S_t^d)\colon X_t \to \R^d$ be Borel-measurable, $0\leq t \leq T$, and adapted. 
All prices are given in units of a numeraire, $S^0$, which itself is thus normalised, $S^0_t\equiv 1$, $0\leq t\leq T$. 
Trading strategies $\mathcal{H}(\mathbb{F})$ are defined as the set of $\mathbb{F}$-predictable $\R^d$-valued processes. All trading is frictionless and self-financing. Given $H \in \mathcal{H}(\mathbb{F})$, we denote
\begin{align*}
H \circ S_t= \sum_{u=1}^t H_u \Delta S_u
\end{align*}
with $H\circ S_t$ representing the cashflow at time $t$ from trading using $H$. 
Above, and throughout, $H$ is a row vector, $S$ is a column vector and $1$ denotes either a scalar or a column vector $(1,\ldots, 1)^{\mathrm{T}}$.
We let $\Phi$ denote the vector of payoffs of the statically traded assets $\Phi = (\phi_\lambda: \lambda\in \Lambda)$, where $\Lambda$ is some index set. For notational convenience, we often identify $\Phi$ with the set of its elements. We assume that each $\phi \in \Phi$ is Borel-measurable. When there are no statically traded assets we write $\Phi = 0$. These assets, which we think of as options, can only be bought or sold at time zero (without loss of generality at zero cost) and are held until maturity $T$. A trading position $h$ can only hold finitely many of these assets, $h\in c_{00}(\Lambda)$ the space of sequences of reals indexed by $\Lambda$ with only finitely many non-zero elements, and generates the payoff $h \cdot \Phi=\sum_{\lambda \in \Lambda} h_\lambda \phi_\lambda$ at time T. We call a pair $(h,H) \in c_{00}(\Lambda)\times\mathcal{H}(\mathbb{F})$ a semistatic trading strategy. The class of such strategies is denoted $\adm:= c_{00}(\Lambda)\times\mathcal{H}(\mathbb{F})$.
For technical reasons we also introduce the level sets of $S$, which are denoted by
\begin{align*}
\Sigma_t^{\omega}= \{ \tilde{\omega} \in X \ | \ S_{0:t}(\omega)=S_{0:t}(\tilde{\omega}) \}
\end{align*}
for $t\in \{0, \dots, T\}$ and $\omega \in X_t$, where $S_{0:t} \coloneqq (S_0, \dots S_t)$. 
Finally, we denote by $\mathbb{F}^\nf=(\F_t^\nf)_{t=0, \dots, T}$ the natural filtration generated by $S$ and let $\F_t^{\mathcal{U}}$ be the universal completion of $\F_t^\nf$, $t=0, \dots, T$. Furthermore we write $(X, \mathcal{F}^{\mathcal{U}})$ for $(X_T , \mathcal{F}_T^{\mathcal{U}})$ and often consider $(X_t, \mathcal{F}_t^{\mathcal{U}})$ as a subspace of $(X, \mathcal{F}^{\mathcal{U}})$.\\
Within this setup, the literature on robust pricing and hedging adopts two approaches to model an agent's beliefs. One stream is scenario-based and proceeds by specifying a prediction set $\Omega \subseteq X$, which describes the possible price trajectories. The other stream proceeds by specifying a set  of probability measures $\mathfrak{P} \subseteq \mathcal{P}(X)$, which determines the set of negligible outcomes. 
We refer to the latter as the quasi-sure approach, while the former is usually called the pathwise, or pointwise, approach. In both cases, the model specification may depend on the agent's market information as well as on her specific modelling assumptions. Changing the sets $\Omega$ or $\mathfrak{P}$ can be seen as a natural way to interpolate between different beliefs. One of the principal aims of this paper is to show that both model approaches are equivalent in terms of corresponding FTAPs and Superhedging prices.\\

In order to aggregate trading strategies on different level sets $\Sigma_t^{\omega}$ in a measurable way, we always assume in this paper that $\Omega$ is analytic and $\mathfrak{P}$ has the following structure:

\begin{Defn}\label{def:APS}
A set $\mathfrak{P}\subseteq \mathcal{P}(X)$ is said to satisfy the Analytic Product Structure condition (APS), if
\begin{align*}
\mathfrak{P}=\{\P_0 \otimes \cdots \otimes \P_{T-1} \ | \ \P_t \text{ is}\  \mathcal{F}^{\mathcal{U}}_t\text{-measurable selector of }\mathfrak{P}_t\},
\end{align*}
where the sets $\mathfrak{P}_t(\omega) \subseteq \mathcal{P}(X_1)$ are nonempty, convex and 
\begin{align*}
\text{graph}(\mathfrak{P}_t) = \{ (\omega, \P)\ | \ \omega \in X_t, \ \P \in \mathfrak{P}_t (\omega)\}
\end{align*}
is analytic.
\end{Defn}
This structure facilitates a dynamic programming principle and allows to essentially paste together one-step results in order to establish their multistep counterparts.\\
In order to formulate a Fundamental Theorem of Asset pricing we need to define the dual objects to trading strategies: the pricing (martingale) measures. Given a set of measures $\mathfrak{P}$, following \citep{bouchard2015arbitrage}, we define
\begin{align*}
\Qc_{\Pfr,\Phi}\coloneqq \{ &\Q \in \mathcal{P}(X)\ | \ S \text{ is an } \mathbb{F}^{\mathcal{U}}\text{-martingale under }\Q, \  \exists \P \in \mathfrak{P} \text{ s.t. } \Q \ll \P, \\ 
&\ \E_{\Q}[\phi]=0 \ \forall \phi\in \Phi \},
\end{align*}
which, in the model-specific case $\mathfrak{P}=\{\P\}$, is simply the familiar set of all martingale measures equivalent to $\P$. 
Within the pathwise approach, for a set $\Omega\subseteq X$ and a filtration $\mathbb{F}$, we define
\begin{align*}
\mathcal{M}^f_{\Omega, \Phi}(\mathbb{F})\coloneqq \{&\Q \in \mathcal{P}^f(X) \ | \ S \text{ is an }\mathbb{F}\text{-martingale under }\Q, \ \Q(\Omega)=1, \\
 &\E_{\Q}[\phi]=0 \ \forall \phi\in \Phi \},
\end{align*}
where $\mathcal{P}^f(X)$ denotes the finitely supported Borel probability measures on $(X, \mathcal{B}(X))$. As a general convention, in this paper we interpret the above sub- and super-scripts as restrictions on the sets of measures. When we drop some of them it is to indicate that these conditions are not imposed, e.g., $\mathcal{M}_{\Omega}(\mathbb{F})$ denotes all $\mathbb{F}$-martingale measures supported on $\Omega$. Next let
\begin{align*}
\Omega^*_{\Phi}\coloneqq\{\omega \in \Omega \ | \ \exists \Q \in \mathcal{M}_{\Omega, \Phi}^f(\mathbb{F}^\nf) \text{ s.t. }\Q(\omega)>0 \} = \bigcup_{\Q \in \mathcal{M}^f_{\Omega, \Phi}(\mathbb{F}^\nf)} \text{supp}(\Q)
\end{align*}
with the same convention regarding sub- and super-scripts as above. We also define
\begin{align*}
\mathbb{F}^M\coloneqq (\mathcal{F}_t^M)_{t \in \{0, \dots, T\}}, \ \ \text{where } \mathcal{F}^M_t= \bigcap_{\Q \in \mathcal{M}_{\Omega}(\mathbb{F}^\nf)} \mathcal{F}_t^\nf \vee \mathcal{N}^{\Q}(\mathcal{F}_T^\nf),
\end{align*}
$\mathcal{N}^{\Q}(\mathcal{F}_T^\nf) \coloneqq \{ N \subseteq A \in \mathcal{F}_T^\nf \ | \ \Q(A)=0 \}$ and $\mathcal{F}^M_t$ is the power set of $\Omega$ if $\mathcal{M}_{\Omega}(\mathbb{F}^\nf)=\emptyset$. 
\begin{Rem}
Note that $\mathbb{F}^\nf \subseteq \mathbb{F}^{\mathcal{U}}\subseteq \mathbb{F}^M$ holds. All these filtrations generate the same martingale measures on $\Omega$ calibrated to $\Phi$, which we denote by $\mathcal{M}_{\Omega, \Phi}$.
\end{Rem}
For $\P\in \Pc(X)$, thus $\Nc^\P:= \Nc^\P(\Fc^\Uc)$ denotes the collection of its null sets. Likewise, given a family $\Pfr\subset \Pc(X)$, the collection of its polar sets if given by $\Nc^\Pfr=\bigcap_{\P \in \mathfrak{P}} \mathcal{N}^\P$. We say that a property holds $\Pfr$-q.s. if it holds outside a $\Pfr$-polar set. 
\subsection{Notions of Arbitrage}\label{sec:arbitrage}
One of the most important underlying concepts in financial mathematics is the absence of arbitrage. In the literature on robust pricing and hedging many notions of arbitrage have been proposed to date. We present these here together in a unified manned and discuss their relative dependencies. To complement the picture, we establish some novel technical results. These are postponed to Section \ref{Sec. compar}.
\begin{Defn}
Fix a filtration $\mathbb{F}$, a set $\mathfrak{P}$, a set $\mathcal{S}$ of subsets of $X$ and a set $\Omega$. Recall that semistatic admissible trading strategies are given by $(h,H)\in \adm$.
\begin{enumerate}[labelsep=1cm]
\item[\textbf{1pA}($\Omega$)] A One-Point Arbitrage (see \citep{riedel2015financial}) is a strategy $(h,H) \in \adm $ such that $h \cdot \Phi+H \circ S_T \ge 0$ on $\Omega$ with strict inequality for some $\omega \in \Omega$.
\item[\textbf{OA}($\Omega$)] An Open Arbitrage (see \citep{riedel2015financial}) is a strategy $(h,H) \in \adm $ such that $h \cdot \Phi+H \circ S_T \ge 0$ on $\Omega$ with strict inequality for some open subset of  $\Omega$.
\item[\textbf{SA}($\Omega$)] A Strong Arbitrage (see \citep{acciaio2013model}) is a strategy $(h,H) \in \adm $ such that $h \cdot \Phi + H \circ S_T >0$ on $\Omega$. 
\item[\textbf{USA}($\Omega$)] A Uniformly Strong Arbitrage (see \citep{davis2007range}) is a strategy $(h,H) \in \adm$ such that $h\cdot \Phi+ H \circ S_T \ge \epsilon$ on $\Omega$ for some $\epsilon>0$.
\item[\textbf{A($\mathfrak{P}$)}] A $\mathfrak{P}$-quasi-sure Arbitrage (see \citep{bouchard2015arbitrage}) is a strategy $(h,H) \in \adm$ such that $h \cdot \Phi+ H \circ S_T \ge 0$ holds $\mathfrak{P}\text{-q.s.}$ and $\P(h \cdot \Phi+H \circ S_T >0)>0$ for some $\P \in \mathfrak{P}$. If $\mathfrak{P}=\{\P\}$ a $\mathfrak{P}$-quasi-sure Arbitrage is called a $\P$-arbitrage  and is denoted \textbf{A}($\P$).
\item[\textbf{CA}($\mathfrak{P}$)] A Classical Arbitrage in $\Pfr$ (see \citep{davis2007range}) is a family of strategies $(h^{\P}, H^{\P})_{\P\in \mathfrak{P}}$ such that, for all  $\P\in \mathfrak{P}$, $(h^{\P}, H^{\P})$ is a $\P$-arbitrage.
\item[\textbf{WA}($\mathfrak{P}$)] A Weak Arbitrage (see \citep{blanchard2019no}) is a strategy $(h,H)\in \adm$ which is a $\P$-arbitrage for some $\P\in \mathfrak{P}$.
\item[\textbf{IntA}($\mathfrak{P}$)] An Interior Arbitrage (see \citep{bayraktar2014note}) is a sequence of strategies $(h^n,H^n)\in \adm$ such that $(h^n,H^n)$ is a $\mathfrak{P}$-quasi-sure Arbitrage relative to option payoffs given by $\Phi+\text{sign}(h^n)/n$ for all $n$ large enough.
\item[\textbf{WFLVR}($\Omega$)] A Weak Free Lunch With Vanishing Risk (see \citep{cox2011robusta}, \citep{cox2014robust}) is a sequence of strategies $(h^n,H^n) \in \adm$ such that there exists a constant $c\ge 0$ and $(h,H) \in \adm$ with $h^n\cdot \Phi+ H^n \circ S_T \ge h\cdot \Phi+ H \circ S_T-c$ on $\Omega$ for all $n \in \N$ and $$\lim_{n \to \infty} (h^n\cdot \Phi+ H^n \circ S_T) >0\quad \text{on }\Omega.$$
\item[\textbf{locA($\mathfrak{P}_t(\omega)$)}] A $(t,\omega)$-local $\mathfrak{P}$-quasi-sure Arbitrage (see \citep{bartl2016exponential}) is a strategy $H \in \R^d$ such that  $H\Delta S_{t+1}(\omega) \ge 0$ $\mathfrak{P}_t(\omega)$-q.s. (where $t\in \{0,\dots,T-1\}$ and $\omega\in X$) and there exists $\P\in \mathfrak{P}_t(\omega)$ such that $\P(H\Delta S_{t+1}>0)>0$.
\item[\textbf{A($\mathcal{S}$)}] An Arbitrage de la Classe $\mathcal{S}$ (see \citep{burzoni2016universal}) is a strategy  $(h,H) \in \adm $ such that $h \cdot \Phi+H \circ S_T \ge 0$ on $\Omega$ and $\{\omega \in \Omega \ | \ h\cdot \Phi+ H \circ S_T >0 \}\supseteq \Gamma$ for some $\Gamma\in\mathcal{S}$.
\end{enumerate}
When we want to stress the role of the filtration we include it as an argument, e.g., we write, e.g., \textbf{SA}($\Omega, \mathbb{F}$).
When the filtration is not specified it is implicitly taken to be $\mathbb{F}^{\mathcal{U}}$. 
We use a prefix \textbf{N} to indicate a negation of any of the above notions, e.g., we say that ``\textbf{NA}($\Pfr$) holds" when there does not exist a $\Pfr$-quasi-sure arbitrage strategy, likewise \textbf{NUSA}($\Omega$) denotes the absence of a uniformly strong arbitrage on $\Omega$, etc.
\end{Defn}

\begin{Lem}\label{lem easy}
The following relations hold:
\begin{enumerate}
\item ${\bf USA}(\Omega)\Rightarrow {\bf SA}(\Omega) \Rightarrow {\bf OA}(\Omega)\Rightarrow {\bf 1pA}(\Omega).$\\
\item ${\bf SA}(\Omega)\Rightarrow\bf{ WFLVR}(\Omega).$\\
\item ${\bf A(\mathfrak{P})}\Rightarrow {\bf A(\P)} \text{ for some }\P\in \mathfrak{P}\Leftrightarrow\bf{WA}(\mathfrak{P}).$\\
\item $\bf{WA}(\mathfrak{P}) \Leftarrow {\bf CA}(\mathfrak{P}) \Leftrightarrow {\bf A}(\P) \text{ for all }\P\in \mathfrak{P}.$\\
\item $\mathbf{A}(\mathfrak{P})\Rightarrow \mathbf{IntA}(\mathfrak{P})$.
\item when $\Phi=0$ then\\ $\mathbf{A}(\mathfrak{P}) \Leftrightarrow \P\left(\bigcup_{t=0}^{T-1} \left\{\omega\in X_t \ | \  \mathbf{locA}(\mathfrak{P}_t(\omega))\text{ holds }\right\} \right)>0$ for some $\P\in \Pfr$.\\
\end{enumerate}
\end{Lem}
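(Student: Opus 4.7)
The plan is to verify each of the six items by direct manipulation of the definitions; items (1)--(5) each reduce to a constant-sequence or single-point observation, while item (6) is the only one requiring substantive work. The main obstacle will be the forward direction of (6), which is handled by induction on the horizon $T$, exploiting the APS structure to convert q.s.\ statements under product measures into pointwise q.s.\ statements over $\Pfr_t(\omega)$.

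For (1): $\mathbf{USA}(\Omega)$ trivially gives $\mathbf{SA}(\Omega)$ since $\varepsilon>0$; $\mathbf{SA}(\Omega)$ gives $\mathbf{OA}(\Omega)$ as $\Omega$ is relatively open in itself; $\mathbf{OA}(\Omega)$ gives $\mathbf{1pA}(\Omega)$ by choosing any point of the open subset on which strict positivity holds. For (2), the constant sequence $(h^n,H^n)\equiv(h,H)$ with bounding strategy $(h,H)$ and $c=0$ realises $\mathbf{WFLVR}(\Omega)$. Items (3) and (4) are tautological: the $\Pfr$-q.s.\ nonnegativity in $\mathbf{A}(\Pfr)$ holds in particular $\P$-a.s.\ for the distinguished $\P$ witnessing strict positivity, so the same strategy is an $\mathbf{A}(\P)$; and $\mathbf{WA}(\Pfr)$, $\mathbf{CA}(\Pfr)$ are respectively the existential and universal aggregates of $\mathbf{A}(\P)$ over $\P\in\Pfr$. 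For (5), take $(h^n,H^n)\equiv(h,H)$ and observe $h\cdot\mathrm{sign}(h)=\|h\|_1\geq 0$, so the adjusted payoff $h\cdot\Phi+\|h\|_1/n+H\circ S_T$ only improves the $\Pfr$-q.s.\ lower bound while preserving the strict-positivity witness.

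For (6), write $B_t:=\{\omega\in X_t:\mathbf{locA}(\Pfr_t(\omega))\text{ holds}\}$. For the backward direction, assume $\P(\bigcup_tB_t)>0$ for some $\P\in\Pfr$ and fix the smallest $t$ with $\P(B_t)>0$. The APS assumption (Definition \ref{def:APS}) makes the set $\{(\omega,H)\in X_t\times\R^d:H\text{ is a local arbitrage at }(t,\omega)\}$ analytic, so Jankov--von Neumann selection yields a universally measurable $H^\ast:X_t\to\R^d$ together with a universally measurable witnessing kernel $\omega\mapsto\P_t^\ast(\omega)\in\Pfr_t(\omega)$ on $B_t$. The predictable strategy $\tilde H_{t+1}:=H^\ast\mathbf{1}_{B_t}$, $\tilde H_s:=0$ for $s\neq t+1$, then satisfies $\tilde H\circ S_T\geq0$ $\Pfr$-q.s.\ by the APS kernel decomposition, while swapping the $t$-th kernel of $\P$ for $\P_t^\ast$ on $B_t$ produces a prior in $\Pfr$ charging $\{\tilde H\circ S_T>0\}$ with positive mass.

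The forward direction is the main obstacle. The plan is induction on $T$ in contrapositive form: assume $\P(\bigcup_tB_t)=0$ for every $\P\in\Pfr$ and show $\mathbf{NA}(\Pfr)$. The base case $T=1$ is immediate since $H_1$ is $\Fc_0$-measurable, hence constant, so any $\Pfr$-q.s.\ arbitrage is itself a local arbitrage at $t=0$. For the inductive step, take $H$ with $H\circ S_T\geq0$ $\Pfr$-q.s.; a backward step shows $H\circ S_{T-1}\geq 0$ $\Pfr$-q.s., for otherwise on any set where $H\circ S_{T-1}(\omega)<0$ the final gain $H_T(\omega)\Delta S_T(\omega,\cdot)$ would be $\Pfr_{T-1}(\omega)$-q.s.\ bounded below by $|H\circ S_{T-1}(\omega)|>0$, producing $\mathbf{locA}$ at $(T-1,\omega)$ on a set of positive measure. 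The inductive hypothesis applied to the $(T-1)$-step subproblem then forces $H\circ S_{T-1}=0$ $\Pfr$-q.s., reducing the problem to whether $H_T\Delta S_T$ can be a one-step $\Pfr$-q.s.\ arbitrage. Using Fubini on the product $\P=\P_0\otimes\cdots\otimes\P_{T-1}$ and varying the last kernel across all selectors of $\Pfr_{T-1}(\omega)$ via APS, any positive-mass subset of $\{H_T\Delta S_T>0\}$ would yield $\mathbf{locA}$ at $(T-1,\omega)$ on a set of positive $\P$-measure, contradicting the hypothesis. Hence $H\circ S_T=0$ $\Pfr$-q.s., i.e.\ $\mathbf{NA}(\Pfr)$.
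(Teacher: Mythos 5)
Your proposal is correct and, for items (1)--(5), takes essentially the same route as the paper: the paper dismisses (1)--(4) as immediate and proves (5) with the same constant-sequence observation that the option perturbation $\Phi\mapsto\Phi+\mathrm{sign}(h)/n$ only adds the nonnegative term $|h|_1/n$ to the payoff. The one place you diverge is item (6), which the paper handles purely by citation to \citep[Lemma~4.6]{bouchard2015arbitrage}, while you give a self-contained argument by backward induction on $T$. That said, your argument is not a genuinely different approach but a reconstruction of the dynamic-programming proof the cited lemma itself runs: Jankov--von Neumann selection of a local arbitrage and its witnessing kernel for the backward direction, and a Fubini disintegration of the global $\Pfr$-q.s.\ inequality into stepwise $\Pfr_t(\omega)$-q.s.\ inequalities for the forward direction. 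Two small points in (6) are stated a bit loosely and would need to be spelled out: (i) the step ``a backward step shows $H\circ S_{T-1}\ge0$ $\Pfr$-q.s.'' requires a measurable-selection argument to rule out a positive-measure set of $\omega$ at which some $\P'\in\Pfr_{T-1}(\omega)$ charges $\{H_T(\omega)\Delta S_T(\omega,\cdot)<0\}$ --- without it one only gets the inequality along one particular product decomposition, not quasi-surely; and (ii) before invoking the inductive hypothesis on the $(T-1)$-step subproblem one should note that the marginal prior set $\Pfr'=\{\P_0\otimes\cdots\otimes\P_{T-2}\}$ again satisfies (APS) and that the local-arbitrage sets $B_t$ for $t\le T-2$ are unchanged under restriction. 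These are exactly the mechanics that the (APS) assumption is designed to supply, and you flag them in your preamble, so the gaps are cosmetic rather than structural.
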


\begin{proof}
Items \textit{(1)}-\textit{(4)} are immediate. Assertion \textit{(6)} follows from \citep[Lemma 4.6, p.842]{bouchard2015arbitrage}. 
For a strategy $(h,H)\in \adm$ satisfying $$h\cdot \Phi+H \circ S_T \ge 0 \quad \mathfrak{P}\text{-q.s.}$$ 
we have, for any $\epsilon>0$, 
\begin{align*}
h\cdot (\Phi+\text{sign}(h)\epsilon)+H\circ S_T= h\cdot\Phi+H\circ S_T+|h|_1\epsilon\ge |h|_1\epsilon\ge 0,
\end{align*}
where $|h|_1=\sum_{\lambda\in \Lambda}|h_\lambda|$. 
Absence of \textbf{IntA}$(\mathfrak{P})$ implies that there exists $\epsilon>0$ such that for any strategy as above we have
\begin{align*}
h\cdot \Phi+H\circ S_T=-|h|_1\epsilon \quad \mathfrak{P}\text{-q.s.},
\end{align*}
so that $h=0$ and absence of \textbf{A}$(\mathfrak{P})$ follows so \textit{(5)} holds. 
\end{proof}

\textbf{USA}($X$) was first discussed in \citep{davis2007range}, see also \citep{cox2011robusta} and \citep{cox2014robust} for a definition of \textbf{USA}($\Omega$) and \textbf{WFLVR}($\Omega$), where $\Omega \subseteq X$. Note that if we take $(h,H)=(0,0)$ in the definition of $\mathbf{WFLVR}(\Omega)$ and replace the pathwise inequalities by their $\P$-a.s.\ counterparts for some fixed $\P\in \mathcal{P}(X)$, we recover a discrete version of the $\mathbf{NFLVR}$ condition of \citep{delbaen1994general}. \\
\textbf{SA}(($\R_+^d)^T$) was used in \citep{acciaio2013model} in the canonical setup. We refer to \citep[Theorem 3]{bfhmo} for a general FTAP connecting the notion of Strong and Uniformly Strong Arbitrage under the condition that there exists an option with a strictly convex super-linear payoff in the market. See also \citep{bartl2017duality2} for an equivalence result under marginal constraints. In Section \ref{Sec. compar} we discuss the connection between \textbf{SA}($\Omega$) and \textbf{USA}($\Omega$) without the above assumptions.\\
\textbf{A}($\mathcal{S}$) is a unifying concept since  
\textbf{1pA}($\Omega$), \textbf{OA}($\Omega$), \textbf{SA}($\Omega$), \textbf{USA}($\Omega$) and \textbf{A}($\mathfrak{P}$) can all be seen as special cases of \textbf{A}$(\mathcal{S})$, see \citep[Section 4.6]{burzoni2016universal} for a detailed discussion. It was first defined in \citep{burzoni2016universal} in a pathwise setting, see in particular the pathwise Fundamental Theorem of Asset pricing in \citep[Theorem 2 \& Section 4]{burzoni2016universal}. This extends the results obtained in \citep{riedel2015financial} who introduced \textbf{1pA}($\Omega$) and \textbf{OA}($\Omega$).  \textbf{OA}($\Omega$) is furthermore defined in the setup of \citep{dolinsky2014robust}. \\
\textbf{A}$(\mathfrak{P})$ was introduced in the quasi-sure setting of \citep{bouchard2015arbitrage}, where they prove a quasi-sure Fundamental Theorem of Asset pricing and Superhedging Theorem. From Lemma \ref{lem easy} above we see that the crucial distinction between $\textbf{CA}(\mathfrak{P})$ and $\textbf{A}(\mathfrak{P})$ is the aggregation of arbitrage strategies, which poses a fundamental technical difficulty overcome in \cite{bouchard2015arbitrage} by the specific (APS) structure of $\mathfrak{P}$. We also note that $\textbf{CA}(\mathfrak{P})$ was actually referred to as \emph{weak arbitrage} in \citep{davis2007range}.
\\
The notion of interior arbitrage $\mathbf{IntA}(\mathfrak{P})$ was introduced, and called a \emph{robust arbitrage}, by \cite{bayraktar2014note} in the context of transaction costs. Absence of $\mathbf{IntA}(\mathfrak{P})$ is equivalent to absence of \textbf{A}($\Pfr$) not only at the current prices of statically traded options $\Phi$ but also under all, sufficiently small, perturbations of their prices. This notion was also used in \cite[Assumption 3.1]{hou2015robust}. It is equivalent to saying that the prices of the options $\Phi$ are strictly inside the region of their $\mathfrak{P}$-q.s.\ no-arbitrage prices, thus avoiding the delicate issue of boundary classification. In general, $\mathbf{IntA}(\mathfrak{P})$ does not imply $\mathbf{A}(\mathfrak{P})$. To see this, take $\Phi=\{(S_T-K)^+\}$ for some $K> S_0$ and $\emptyset\neq\mathfrak{P}\subseteq\{\P\in \mathcal{P}(X) \ | \ \P(S_T\le K)=1\}$. Then there is no $\mathfrak{P}$-q.s. arbitrage, while for every $\epsilon>0$ we have $(S_T-K)^++\epsilon\ge\epsilon>0$ and thus  $\mathbf{RA}(\mathfrak{P})$ holds.

\emph{Throughout the remainder of this paper, unless otherwise stated, we take $\Lambda = \{1, \ldots, k\}$, i.e., we have a finite $\Phi$ with $k$ statically traded options.}

\subsection{Robust Fundamental Theorem of Asset Pricing}\label{sec:FTAP}
The first Fundamental Theorem of Asset Pricing characterises absence of arbitrage in terms of existence of martingale (pricing) measures. In the classical discrete-time setting, this refers to the notion of  $\mathbb{P}$-arbitrage. However, in a robust setting, there are many possible notions of arbitrage one can consider. If we adopt a strong notion of arbitrage, its absence should be equivalent to a weak statement, e.g., $\mathcal{M}_{\Omega, \Phi} \neq \emptyset$. This is often done in the pathwise literature, see \citep{bfhmo}, and leads to a robust (multi-prior) version of the familiar Dalang-Morton-Willinger theorem. 
\begin{Theorem}[Robust DMW Theorem]\label{cor:robustDMW}
Let $\mathfrak{P}$ be a set of probability measures satisfying (APS). Then there exists a universally measurable set of scenarios $\Omega$ with $\P(\Omega)=1$ for all $\P \in \mathfrak{P}$ and a filtration $\tilde{\mathbb{F}}$ with $\mathbb{F}^\nf \subseteq \tilde{\mathbb{F}} \subseteq \mathbb{F}^{\mathcal{M}}$, such that the following are equivalent:
\begin{enumerate}
\item $\mathcal{Q}_{\Pfr, \Phi} \neq \emptyset$.
\item $\P(\Omega^*_{\Phi})> 0$ for some $\P \in \mathfrak{P}$.
\item $\mathcal{M}_{\Omega,\Phi}\neq \emptyset$.
\item $\Omega^*_{\Phi} \neq \emptyset$.
\item \textbf{NSA}($\Omega, \tilde{\mathbb{F}}$) holds.
\end{enumerate}
Conversely, for an analytic set $\Omega$ there exists a set $\mathfrak{P}$ satisfying (APS) such that for all $\omega\in \Omega$ there exists $\P \in \mathfrak{P}$ with $\P(\{\omega\})>0$ and such that (1)-(5) are equivalent.
\end{Theorem}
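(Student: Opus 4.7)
The plan is to prove the chain of equivalences in the forward direction and then handle the converse separately. For the forward direction, given $\mathfrak{P}$ satisfying (APS), I would construct $\Omega$ as a universally measurable $\mathfrak{P}$-full-measure set extracted by measurable selection from the analytic graphs $\text{graph}(\mathfrak{P}_t)$, so that the fibers of $\Omega$ are compatible with the supports of the selectors $\mathfrak{P}_t(\omega)$. The filtration $\tilde{\mathbb{F}}$ would then be chosen between $\mathbb{F}^{\mathcal{U}}$ and $\mathbb{F}^M$ so that pathwise strong arbitrage on $(\Omega,\tilde{\mathbb{F}})$ matches the $\mathfrak{P}$-quasi-sure notion; the Jankov--von Neumann theorem provides the selection, and $\mathfrak{P}$-negligibility of $X \setminus \Omega$ follows from the nonemptiness and convexity of the $\mathfrak{P}_t(\omega)$.

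The chain of implications then proceeds as follows. The equivalence (3) $\Leftrightarrow$ (4) is immediate from $\Omega^*_{\Phi} = \bigcup_{\Q \in \mathcal{M}^f_{\Omega,\Phi}} \text{supp}(\Q)$; (2) $\Rightarrow$ (4) is trivial, and (4) $\Rightarrow$ (2) uses the richness of $\mathfrak{P}$ implied by (APS) to produce a $\P \in \mathfrak{P}$ charging a prescribed $\omega \in \Omega^*_\Phi$. For (1) $\Rightarrow$ (2), I would argue that any $\Q \in \mathcal{Q}_{\Pfr,\Phi}$ must be carried by $\Omega^*_\Phi$: finitely supported martingale measures are dense among martingale measures on $\Omega$ in a Choquet-type sense, so $\Q(\Omega^*_\Phi) = 1$, and $\Q \ll \P$ then forces $\P(\Omega^*_\Phi) > 0$. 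Implication (1) $\Rightarrow$ (5) is easy: a strong pathwise arbitrage $(h,H)$ on $\Omega$ would yield $h\cdot\Phi + H \circ S_T > 0$ everywhere on $\Omega$, hence $\Q$-a.s.\ for any $\Q \in \mathcal{Q}_{\Pfr,\Phi}$, contradicting $\E_\Q[h\cdot\Phi + H \circ S_T] = 0$.

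The main obstacle is the implication (5) $\Rightarrow$ (1), as foreshadowed in the introduction. The strategy is to exploit absence of pathwise strong arbitrage on $\Omega$ to construct, via a one-step dynamic programming argument across the (APS) decomposition and measurable selection applied to $\text{graph}(\mathfrak{P}_t)$, a \emph{single} measure $\P \in \mathfrak{P}$ that admits no classical $\P$-arbitrage. At each step, the one-period no-arbitrage condition is picked out as a Borel subset of the analytic graph, and a selector is chosen; the resulting $\P$ is then obtained by iteratively pasting the one-step selectors. The classical Dalang--Morton--Willinger theorem applied to $\P$ yields an equivalent martingale measure $\Q \sim \P$, and calibration to $\Phi$ is preserved by enlarging the statically held portfolio before applying DMW, giving $\Q \in \mathcal{Q}_{\Pfr,\Phi}$. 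The analyticity and (APS) structure are exactly what makes this pasting measurable.

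For the converse direction, given analytic $\Omega$, I would set $\mathfrak{P}_t(\omega) \coloneqq \mathcal{P}(\Omega^\omega_t)$ where $\Omega^\omega_t \subseteq X_1$ is the $t$-th fiber of $\Omega$ above $\omega$; analyticity of $\Omega$ transfers to analyticity of $\text{graph}(\mathfrak{P}_t)$ via the projection theorem for analytic sets, so (APS) is satisfied. Including product Diracs $\delta_{\omega_1}\otimes\cdots\otimes\delta_{\omega_T}$ for $\omega \in \Omega$ ensures that each $\omega \in \Omega$ is charged by some $\P \in \mathfrak{P}^\Omega$, and the forward chain applied to this $\mathfrak{P}^\Omega$ then delivers the required equivalence of (1)--(5) for $\Omega$.
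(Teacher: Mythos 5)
Your overall architecture is right, but the crucial mechanism in the key implication is underspecified, and one equivalence is misattributed.

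For (5)~$\Rightarrow$~(1), you say you would pick out, at each step, the one-period no-arbitrage condition as a Borel subset of $\mathrm{graph}(\mathfrak{P}_t)$ and choose a selector to build a single $\P\in\Pfr$ with $\mathbf{NA}(\P)$. But there is no reason why, at a given $(t,\omega)$, any \emph{individual} $\P\in\mathfrak{P}_t(\omega)$ should satisfy one-period no-arbitrage: the set you propose to select from could be empty even when the market is arbitrage-free in the quasi-sure sense. What is true under $\mathbf{NSA}(\Omega)$ is that $0$ lies in the relative interior of the \emph{union} over $\P\in\mathfrak{P}_t(\omega)$ of $\mathrm{supp}(\P\circ\Delta S_{t+1}(\omega,\cdot)^{-1})$. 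The step that repairs this — and the step you omit — is to invoke \emph{convexity} of $\mathfrak{P}_t(\omega)$: one chooses $d$ additional measures $\P^1,\dots,\P^d\in\mathfrak{P}_t(\omega)$ (via the measurable-selection machinery you mention) such that the average $\tilde\P_t(\omega)=\tfrac{1}{d+1}(\hat\P_t(\omega)+\sum_i\P^i(\omega))$ has $0$ in the relative interior of the support of its increment law. Only then does Rokhlin's theorem apply to produce $\Q\sim\tilde\P$, and only then is $\tilde\P\in\mathfrak{P}$ (because $\mathfrak{P}_t(\omega)$ is convex). This is the geometric content of Figures~\ref{fig::ftap1}--\ref{fig::ftap2} in the paper, and without it your pasting has nothing to paste.

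A second, smaller point: you attribute (4)~$\Rightarrow$~(2) to ``richness of $\mathfrak{P}$ implied by (APS).'' This is not correct as stated. The implication relies on the \emph{specific construction} of $\Omega$: one must take $\Omega=\Omega^{\Pfr}\setminus\bigl((\Omega^{\Pfr})^*_\Phi\cap B\bigr)$ where $B$ is the Borel union of ``inefficient'' subsets that are $\Pfr$-polar, so that a nonempty efficient part $\Omega^*_\Phi$ is automatically charged by some $\P\in\Pfr$. (APS) alone does not guarantee that an arbitrary $\omega\in\Omega^*_\Phi$ is seen by $\Pfr$.

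Finally, a structural remark: the paper does not prove Theorem~\ref{cor:robustDMW} from scratch — it is obtained from Theorem~\ref{Thm. S} by specialising $\mathcal{S}=\{\Omega\}$, and the heavy lifting is done in the proof of Theorem~\ref{Thm. bn vs bfhmo}. Your direct route is viable, but the handling of $\Phi\neq 0$ (``enlarging the statically held portfolio before applying DMW'') glosses over the need for an inductive construction along a pathspace partition scheme; calibration to a static option cannot be forced merely by adding it as a dynamic asset. Your converse choice $\mathfrak{P}_t(\omega)=\mathcal{P}(\Omega_t^\omega)$ also differs from the paper's $\mathcal{P}^f(\cdot)$; the paper's choice is what makes $\Qc_{\Pfr,\Phi}=\mathcal{M}^f_{\Omega,\Phi}$ hold, which is the identity it leans on.
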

The above result follows from Theorem \ref{Thm. S} below by setting $\mathcal{S}=\{\Omega\}$. To see its opposite twin we should adopt a weak notion of arbitrage, its absence thus being equivalent to a strong statement, e.g., for all $\P \in \mathfrak{P}$ there exists $\Q \in \mathcal{Q}_{\Pfr,\Phi}$ such that $\P\ll \Q$. This route is most often taken in the quasi-sure literature, see \citep{bouchard2015arbitrage}, and leads to the following version of the robust FTAP. 
\begin{Theorem} \label{Thm. bn vs bfhmo}
Let $\mathfrak{P}$ be a set of probability measures satisfying (APS). Then there exists an analytic set of scenarios $\Omega$ with $\P(\Omega)=1$ for all $\P \in \mathfrak{P}$, such that the following are equivalent:
\begin{enumerate}
\item $\textbf{N1pA}(\Omega^*_{\Phi})$ holds and $\Omega=\Omega^{*}_{\Phi}$ $\Pfr$-q.s.
\item For all $\P \in \mathfrak{P}$ there exists $\Q \in \mathcal{Q}_{\Pfr, \Phi}$ such that $\P\ll \Q$. 
\item \textbf{NA}($\mathfrak{P},\mathbb{F}^{\mathcal{U}})$ holds.
\end{enumerate}
Conversely, if $\Omega$ is an analytic set, then there exists a set $\mathfrak{P}$ of probability measures satisfying (APS) such that for all $\omega\in \Omega$ there exists $\P \in \mathfrak{P}$ with $\P(\{\omega\})>0$ and such that the following are equivalent: 
\begin{enumerate}
\item $\textbf{N1pA}(\Omega)$ holds and $\Omega= \Omega_{\Phi}^*$.
\item For all $\P \in \mathfrak{P}$ there exists $\Q \in \mathcal{Q}_{\Pfr, \Phi}$ such that $\P \ll \Q$.
\item \textbf{NA}$(\mathfrak{P},\mathbb{F}^{\mathcal{U}})$ holds.
\end{enumerate}
\end{Theorem}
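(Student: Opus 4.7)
My plan is to derive Theorem \ref{Thm. bn vs bfhmo} from Theorem \ref{Thm. S} by constructing the scenario set $\Omega$ from $\mathfrak{P}$ (and, conversely, a prior family $\mathfrak{P}$ from $\Omega$) via supports and one-step concatenation, then translating the pathwise conclusions into quasi-sure ones via aggregation.

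For the forward direction, starting from $\mathfrak{P}$ satisfying (APS), I would first select an analytic support correspondence $(t,\omega)\mapsto\Omega^{\mathfrak{P}}_t(\omega)\subseteq X_1$ with the property that every $\P\in\mathfrak{P}_t(\omega)$ assigns mass one to $\Omega^{\mathfrak{P}}_t(\omega)$. Existence of such a correspondence with analytic graph follows from Jankov--von Neumann measurable selection applied to $\mathrm{graph}(\mathfrak{P}_t)$. Then set
\[
\Omega = \{\omega\in X : \omega_{t+1}\in\Omega^{\mathfrak{P}}_t(\omega_{1:t}) \text{ for all } t=0,\ldots,T-1\},
\]
which is analytic by projection and satisfies $\P(\Omega)=1$ for every $\P\in\mathfrak{P}$ by the product structure of (APS). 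The implication $(2)\Rightarrow(3)$ is immediate, since a $\mathfrak{P}$-q.s.\ arbitrage with $\P(h\cdot\Phi+H\circ S_T>0)>0$ for some $\P\in\mathfrak{P}$ would yield strictly positive expectation under any dominating $\Q\in\mathcal{Q}_{\mathfrak{P},\Phi}$, contradicting its martingale and calibration properties; $(3)\Rightarrow(2)$ is the Bouchard--Nutz FTAP applied directly, leveraging (APS); and the decisive equivalence $(1)\Leftrightarrow(2)$ is a reduction to Theorem \ref{Thm. S} on the constructed $\Omega$, with a suitably chosen class $\mathcal{S}$ encoding $\mathfrak{P}$-polarity.

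For the converse direction, given an analytic $\Omega$, I would construct
\[
\mathfrak{P}^{\Omega}_t(\omega) = \{\mu\in\mathcal{P}(X_1) : \mu \text{ is finitely supported in } \Omega_t(\omega)\},
\]
where $\Omega_t(\omega)\subseteq X_1$ denotes the one-step fiber of $\Omega$ after $\omega_{1:t}$. Analyticity of $\Omega$ transfers to analyticity of $\mathrm{graph}(\mathfrak{P}^{\Omega}_t)$, convexity is manifest, so (APS) holds; Dirac masses along any $\omega\in\Omega$ exhibit a $\P\in\mathfrak{P}^{\Omega}$ with $\P(\{\omega\})>0$. The three equivalences are then obtained by the same chain of arguments: $\mathfrak{P}^{\Omega}$ has no polar subsets of $\Omega$, so quasi-sure statements reduce to pointwise ones on $\Omega$, and the equivalence $(1)\Leftrightarrow(2)$ follows again from Theorem \ref{Thm. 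S} applied with $\mathcal{S}=\{\Omega\}$.

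The principal obstacle is the equivalence $(1)\Leftrightarrow(2)$: given $\P\in\mathfrak{P}$ concentrated on $\Omega^*_{\Phi}$, one must \emph{aggregate} the finitely supported martingale measures whose supports cover $\Omega^*_{\Phi}$ into a single $\mathbb{F}^{\mathcal{U}}$-martingale measure $\Q$ calibrated to $\Phi$ and dominating $\P$. This requires measurable pasting of one-step conditional martingale measures compatible with (APS), which is precisely the content of the key implication $(5)\Rightarrow(1)$ of Theorem \ref{Thm. S}. Hence, once $\Omega$ (respectively $\mathfrak{P}^{\Omega}$) has been properly constructed, Theorem \ref{Thm. bn vs bfhmo} follows essentially as a corollary.
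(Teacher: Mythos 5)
Your plan to obtain Theorem~\ref{Thm. bn vs bfhmo} as a corollary of Theorem~\ref{Thm. S} runs into two obstructions that the paper itself flags explicitly, and which are why the authors give a \emph{direct} proof of Theorem~\ref{Thm. bn vs bfhmo} rather than deducing it from Theorem~\ref{Thm. S}.

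First, the hypothesis \eqref{eq. app} of Theorem~\ref{Thm. S} (or its weaker form \eqref{eq: exact1}--\eqref{eq: exact2}) must be verified for the class $\mathcal{S}$ you choose. To encode $\mathbf{NA}(\mathfrak{P})$ you would take $\mathcal{S}=\{C\ |\ \P(C)>0\text{ for some }\P\in\mathfrak{P}\}$, but the paper's remark after Theorem~\ref{Thm. S} states that it is ``not straightforward'' to show this $\mathcal{S}$ satisfies \eqref{eq: exact2}; indeed \eqref{eq. app} asks for a countable family approximating every element of $\mathcal{S}$ from below, which will generally fail when $\mathfrak{P}$ contains atomless priors (the collection of sets of positive probability is then far too large). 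You never address this verification, so the reduction to Theorem~\ref{Thm. S} is not justified.

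Second, and more fundamentally, the conclusion of Theorem~\ref{Thm. bn vs bfhmo}(2) is strictly stronger than anything Theorem~\ref{Thm. S}(1) gives you: Theorem~\ref{Thm. S}(1) provides, \emph{for each} $C\in\mathcal{S}$, some $\Q\in\mathcal{Q}_{\Pfr,\Phi}$ with $\Q(C)>0$, but this $\Q$ may vary with $C$. Theorem~\ref{Thm. bn vs bfhmo}(2) requires, for each fixed $\P\in\mathfrak{P}$, a \emph{single} $\Q$ with $\P\ll\Q$, i.e.\ one $\Q$ simultaneously charging every set of positive $\P$-mass. The paper states this explicitly: ``This strengthens the results of Theorem~\ref{Thm. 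S} below.'' To get the domination $\P\ll\Q$ the authors argue directly via measurable selection: for each $t$ and $\omega$ they choose measures $\P_{(t,1)},\dots,\P_{(t,d)}\in\mathfrak{P}_t(\omega)$ so that mixing them with $\hat{\P}_t(\omega)$ (possible by the convexity built into (APS)) produces $\tilde{\P}_t(\omega)\in\mathfrak{P}_t(\omega)$ with $0$ in the relative interior of the conditional support, and then invoke Rokhlin to obtain $\Q\sim\tilde{\P}\gg\hat{\P}$; the direction $(3)\Rightarrow(1)$ is handled by an explicit construction of a $\mathfrak{P}$-q.s.\ arbitrage using the standard separator of \citep{bfhmo}, and $\Phi\neq 0$ is treated by pathspace partition schemes. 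None of this is subsumed by an application of Theorem~\ref{Thm. S}. Your converse direction (building $\mathfrak{P}^\Omega$ from $\mathcal{P}^f(\Omega)$) is essentially the approach in Proposition~\ref{prop 2}, but your claim that the forward direction is ``essentially a corollary'' is where the argument breaks.
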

Our proof of this theorem, given in Section \ref{subseq. ftap}, does not rely on the proof of (3) $\Rightarrow$ (2) given in \citep{bouchard2015arbitrage}. Instead we give pathwise arguments. In particular, given $\P \in \mathfrak{P}$ such that $\P(\Omega \setminus \Omega^*_{\Phi})>0$ we explicitly construct a quasi-sure Arbitrage strategy using the Universal Arbitrage Aggregator of \citep{bfhmo}. This strengthens the results of Theorem \ref{Thm. S} below. Indeed, using the fact that $\mathfrak{P}$ satisfies (APS), it is possible to select $\Q \in \mathcal{Q}_{\Pfr, \Phi}$ for each $\P \in \mathfrak{P}$ such that $\P \ll \Q$. Necessarily the support of each $\P$ is then concentrated on $\Omega_{\Phi}^*$.

Finally, we give our main abstract result, which establishes a pathwise and probabilistic characterisation of the absence of Arbitrage de la Classe $\mathcal{S}$. Its proof is presented in Section \ref{subseq. ftap}. As noted above, Arbitrage de la Classe $\mathcal{S}$ allows to consider many notions of arbitrage at once. 
Accordingly, the main result below implies  Theorem \ref{cor:robustDMW} and can be strengthened to imply Theorem \ref{Thm. bn vs bfhmo} as will be seen in Section \ref{sec:proofs}.
\begin{Theorem} \label{Thm. S}
Assume that $\mathfrak{P}$ satisfies (APS) and $\mathcal{S} \subseteq \mathcal{B}(X)$ is such that 
\begin{align} \label{eq. app}
\exists \{C_n\}_{n \in \N} \subseteq \mathcal{S}\text{ s.t. } \forall C \in \mathcal{S} \ \exists \{n_k\}_{k \in \N}\subseteq \N \text{ with }  \mathds{1}_{C_{n_k}} \uparrow \mathds{1}_{C} \ (k \to \infty).
\end{align}
Then there exists a co-analytic set of scenarios $\Omega$ such that $\P(\Omega)=1$ for all $\P \in \mathfrak{P}$ and  a filtration $\tilde{\mathbb{F}}$ with $\mathbb{F}^\nf \subseteq \tilde{\mathbb{F}} \subseteq \mathbb{F}^{\mathcal{M}}$, such that the following are equivalent:
\begin{enumerate}
\item For all $ C \in \mathcal{S}$ with $C \subseteq \Omega$ there exists $ \Q \in \mathcal{Q}_{\Pfr, \Phi}$ such that $\Q(C)>0$.
\item For all $C \in \mathcal{S}$ with $C \subseteq \Omega$ there exists $\P \in \mathfrak{P}$ with $\P(\Omega_{\Phi}^*\cap C)>0$.
\item For all $C \in \mathcal{S}$ with $C \subseteq \Omega$ there exists $\Q \in \mathcal{M}_{\Omega, \Phi}$ such that $\Q(C)>0$.
\item $\{C \in \mathcal{B}(X) \ | \ C \subseteq \Omega \setminus \Omega_{\Phi}^*\} \cap \mathcal{S}= \emptyset$.
\item There is no Arbitrage de la Classe $\mathcal{S}$ in $\Ac_{\Phi}(\tilde{\mathbb{F}})$ on $\Omega$.
\end{enumerate}
Conversely, for an analytic set $\Omega$ there exists a set $\mathfrak{P}$ satisfying (APS),  such that for all $\omega\in \Omega$ there exists $\P \in \mathfrak{P}$ with $\P(\{\omega\})>0$ and such that (1)-(5) are equivalent.
\end{Theorem}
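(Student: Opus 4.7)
My plan is to prove Theorem \ref{Thm. S} by establishing the cycle of implications $(1) \Rightarrow (3) \Rightarrow (4) \Rightarrow (5) \Rightarrow (1)$ together with the equivalence $(1) \Leftrightarrow (2)$, once the required $\Omega$ and $\tilde{\mathbb{F}}$ are constructed in the forward part and $\mathfrak{P}$ in the converse. For the forward direction, using analyticity of $\mathrm{graph}(\mathfrak{P}_t)$ I would build $\Omega$ layer by layer via Jankov--von Neumann selection, taking $\Omega$ to consist of those trajectories whose conditional increments are supported by some selector of $\mathfrak{P}_t$; peeling these layers gives a co-analytic set on which every $\P \in \mathfrak{P}$ is fully concentrated. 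The filtration $\tilde{\mathbb{F}}$ is then obtained by completing $\mathbb{F}^\nf$ with the polar sets of $\mathcal{M}_\Omega(\mathbb{F}^\nf)$, which sits between $\mathbb{F}^\nf$ and $\mathbb{F}^M$ and is large enough to aggregate pathwise strategies. For the converse, given an analytic $\Omega$, I would take $\mathfrak{P}_t(\omega)$ to be the convex combinations of Dirac measures on the $\omega$-compatible one-step continuations of $\Omega$: analyticity of $\Omega$ forces $\mathrm{graph}(\mathfrak{P}_t)$ to be analytic, hence (APS) holds, and every $\omega \in \Omega$ is charged by the product Dirac measure $\delta_{\omega_1} \otimes \cdots \otimes \delta_{\omega_T} \in \mathfrak{P}$.

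The easier implications run as follows. $(1) \Rightarrow (3)$ holds because any $\Q \in \mathcal{Q}_{\Pfr, \Phi}$ is dominated by some $\P \in \mathfrak{P}$, so $\Q(\Omega)=1$ and hence $\Q \in \mathcal{M}_{\Omega, \Phi}$. $(3) \Rightarrow (4)$ goes by contraposition: if some $C \in \mathcal{S}$ with $C \subseteq \Omega \setminus \Omega^*_\Phi$ were charged by $\Q \in \mathcal{M}_{\Omega, \Phi}$, a finite-support concentration argument from \citep{bfhmo, burzoni2016universal} would produce a finitely supported calibrated martingale measure intersecting $C$, contradicting $C \cap \Omega^*_\Phi = \emptyset$. $(4) \Rightarrow (5)$ is the pathwise FTAP of \citep{burzoni2016universal}: any arbitrage of class $\mathcal{S}$ furnishes a set of $\mathcal{S}$ contained in $\Omega \setminus \Omega^*_\Phi$, and here the approximation hypothesis \eqref{eq. app} is exactly what lets the countable pathwise arguments apply to an arbitrary $\mathcal{S}$. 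Finally $(1) \Leftrightarrow (2)$ combines the fact that any $\Q \in \mathcal{Q}_{\Pfr, \Phi}$ is concentrated on $\Omega^*_\Phi$ with a measurable selection of a dominating $\P \in \mathfrak{P}$ and, in the other direction, uses the APS gluing of \citep{bouchard2015arbitrage} to build $\Q$ from a $\P$ charging $\Omega^*_\Phi \cap C$.

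The principal obstacle is $(5) \Rightarrow (1)$, the robust analogue of the difficult direction of the Dalang--Morton--Willinger theorem. Given $C \in \mathcal{S}$ with $C \subseteq \Omega$ and absence of any Arbitrage de la Classe $\mathcal{S}$, one must exhibit a calibrated martingale measure $\Q$ charging $C$ and dominated by some $\P \in \mathfrak{P}$. My plan is to combine the universal arbitrage aggregator of \citep{bfhmo}, which localises any pathwise arbitrage to a maximal universally measurable set, with the dynamic programming scheme of \citep{bouchard2015arbitrage}: solve the one-period FTAP with calibration to $\Phi$ on each $\mathfrak{P}_t(\omega)$, select the resulting martingale kernels measurably from the analytic graphs via Jankov--von Neumann, and paste them into a global $\Q$ via Fubini. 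Keeping $\Q$ dominated by an element of $\mathfrak{P}$ while simultaneously charging $C$ and being calibrated to $\Phi$ is the delicate point, and it is here that the countable determining family $\{C_n\}$ supplied by \eqref{eq. app} enters decisively, allowing the selection to be performed uniformly in $C \in \mathcal{S}$.
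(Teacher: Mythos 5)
Your outline of the easy implications and the converse construction (taking $\mathfrak{P}$ to be the finitely supported measures on $\Omega$, which matches the paper's $\hat{\mathfrak{P}}=\mathcal{P}^f(\Omega)$) is in the right spirit. However, there is a genuine gap in the forward direction: your $\Omega$ is essentially the quasi-sure support set $\Omega^{\mathfrak{P}}$, whereas the paper's $\Omega$ is a strict subset obtained by \emph{trimming away} a $\mathfrak{P}$-polar part of the efficient set. Concretely, the paper defines
\begin{align*}
B \coloneqq \bigcup_{n\in\N}\bigl\{C_n \ \big|\ \P\bigl(C_n\cap(\Omega^{\mathfrak{P}})^*_{\Phi}\bigr)=0\ \text{for all }\P\in\mathfrak{P}\bigr\}\in\mathcal{B}(X)
\end{align*}
and then sets $\Omega\coloneqq \Omega^{\mathfrak{P}}\setminus\bigl((\Omega^{\mathfrak{P}})^*_{\Phi}\cap B\bigr)$. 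Without this trimming, the implication $\textit{(5)}\Rightarrow\textit{(1)}$ (equivalently $\textit{(5)}\Rightarrow\textit{(2)}$) breaks: absence of Arbitrage de la Classe $\mathcal{S}$ on $\Omega^{\mathfrak{P}}$ only yields the \emph{pathwise} statement $(\Omega^{\mathfrak{P}})^*_{\Phi}\cap C\neq\emptyset$, and nothing forces any $\P\in\mathfrak{P}$ to charge this set. For instance, if $(\Omega^{\mathfrak{P}})^*_{\Phi}\cap C$ is a single point and every $\P\in\mathfrak{P}$ is nonatomic, \textit{(5)} can hold while \textit{(2)} fails. The $B$-removal is precisely what restores the equivalence: if $C\cap(\Omega^{\mathfrak{P}})^*_{\Phi}$ were $\mathfrak{P}$-polar then, by \eqref{eq. app}, $C$ would be absorbed into $B$ and excised from $\Omega$, so the remaining $C\subseteq\Omega$ with $\Omega^*\cap C\neq\emptyset$ automatically satisfy $\P(\Omega^*\cap C)>0$ for some $\P\in\mathfrak{P}$.

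This also pins down the true role of \eqref{eq. app}, which you locate too vaguely (``allowing the selection to be performed uniformly in $C$''). The hypothesis provides a \emph{countable determining family} $\{C_n\}$ so that the union $B$ of inefficient pieces is a \emph{Borel} set even though $\bigcup\{C\in\mathcal{S}\mid C\cap(\Omega^{\mathfrak{P}})^*_{\Phi}\in\mathcal{N}^{\mathfrak{P}}\}$ is a priori an uncountable union. Once $\Omega$ is correctly trimmed, the remaining work in $\textit{(5)}\Rightarrow\textit{(1)}$ is to upgrade $\P(\Omega^*\cap C)>0$ to the existence of $\Q\in\mathcal{Q}_{\Pfr,\Phi}$ with $\Q(C)>0$; the paper does this via the measurable-selection construction from the proof of Theorem~\ref{Thm. bn vs bfhmo} (finding $\tilde{\P}\in\mathfrak{P}$ with $\tilde{\P}(C)>0$ and $0$ in the relative interior of the conditional supports of $\tilde{\P}(\cdot\mid\Omega^*)$) followed by Rokhlin's one-step result, rather than by the direct ``paste the one-period FTAP kernels via Fubini'' route you sketch, which leaves unaddressed how to simultaneously guarantee $\Q\ll\P$ for a \emph{single} $\P\in\mathfrak{P}$ and $\Q(C)>0$.
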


\begin{Rem}
Condition \eqref{eq. app} was first stated in \citep[Cor. 4.30 and the discussion thereafter]{burzoni2016universal}. 
It turns out that for the proof of Theorem 2.7 a weaker condition is sufficient: we only need the properties
\begin{align} \label{eq: exact1}
 \bigcup \{ C \in \mathcal{S} \ | \ C \cap (\Omega^{\mathfrak{P}})_{\Phi}^*\in \mathcal{N}^{\mathfrak{P}} \} \in \mathcal{B}(X)
\end{align}
and 
\begin{align} \label{eq: exact2}
\bigcup \{ C \in \mathcal{S} \ | \ C \cap (\Omega^{\mathfrak{P}})_{\Phi}^*\in \mathcal{N}^{\mathfrak{P}} \} \cap(\Omega^{\mathfrak{P}})_{\Phi}^* \in \mathcal{N}^{\mathfrak{P}}
\end{align}
to hold, where we refer to Section \ref{subseq. ftap} for a formal definition of $\Omega^{\mathfrak{P}}$. Conditions \eqref{eq: exact1} and \eqref{eq: exact2} are compatibility conditions on $\Omega$, $\mathcal{S}, \Phi$ and $\mathfrak{P}$. Indeed, they assert that the (likely uncountable) union of 
``inefficient" subsets of $\Omega^{\Pfr}\setminus (\Omega^{\mathfrak{P}})_{\Phi}^*$ (modulo $\mathfrak{P}$-polar sets), stays an ``inefficient" subset (modulo $\mathfrak{P}$-polar sets). If this condition is not satisfied  for some arbitrary $\Pfr$ and $\Sc$, then there is no reason why a set $\Omega$ for which \textit{(2)} holds should exist. Take for example a collection $\mathfrak{P}$ having densities and $\mathcal{S}$ a set of singletons in $X$. Then $\P(C)=0$ for any $\P\in \Pfr$ and any $C\in \mathcal{S}$ so the only $\Omega$ which could satisfy \textit{(2)} is the empty set. 
We note that when $\mathcal{S}= \{ C \subseteq X \ | \ C \text{ open} \}$ then  \eqref{eq: exact1} is always satisfied and \eqref{eq: exact2} is satisfied as soon as $X$ is separable. However, in general, conditions \eqref{eq: exact1} and \eqref{eq: exact2} may be hard to verify, which is  why we provide \eqref{eq. app} as an easier sufficient condition. Lastly, we remark that it is not straightforward to show that $\mathcal{S}= \{ C \ | \ \P(C)>0 \text{ for some }\P \in \mathfrak{P}\}$ corresponding to $\mathbf{NA}(\mathfrak{P})$ satisfies \eqref{eq: exact2}, which is why we give a direct proof of Theorem \ref{Thm. bn vs bfhmo} in  Section \ref{sec:proofs}.
\end{Rem}

We note that the set $\Omega$ can in general not be assumed to be analytic. The implications $\textit{(1)} \Rightarrow \textit{(2)} \Rightarrow \textit{(3)} \Rightarrow \textit{(4)} \Rightarrow \textit{(5)}$ follow directly from the definitions. Apart from measurability considerations regarding $\Omega$, the equivalence of \textit{(3)}, \textit{(4)} and \textit{(5)} essentially follows from \citep{burzoni2016universal}.  Furthermore, given an analytic set $\Omega$, we will simply define $\mathfrak{P}$ as all the finitely supported probability measures on $\Omega$. The analyticity of $\Omega$ then implies (APS) of $\mathfrak{P}$. We then also have $\Qc_{\Pfr, \Phi}=\mathcal{M}^f_{\Omega, \Phi}$ and equivalence of \textit{(1)} and \textit{(3)}-\textit{(5)} follows from \citep{bfhmo}. 
In this context, the essential connection we make is the combination of pathwise and quasi-sure criteria as stated in \textit{(2)}: for every $C \in \mathcal{S}$, the pathwise efficient subset $\Omega_{\Phi}^*\cap C$ is required to be ``seen" by at least one measure $\P$ in the set $\mathfrak{P}$. 

For a given $\Pfr$, the set $\Omega$ in Theorem \ref{Thm. S} can be explicitly constructed as the concatenation of the quasi-sure supports of $\mathfrak{P}_t \circ \Delta(S_{t+1})^{-1}$. The main difficulty of the proof is then to show the implication $\textit{(5)}\Rightarrow \textit{(1)}$, where one needs to establish existence of martingale measures $\Q \in \mathcal{Q}_{\Pfr,\Phi}$, which are compatible with $\Omega$ and $\mathcal{S}$ in the sense of \textit{(1)}. 
This, modulo measurable selection arguments, is achieved by finding an element $\P \in \mathfrak{P}_t(\omega)$ such that zero is in the relative interior of the support of $\P \circ \Delta(S_{t+1})^{-1}$. Indeed, let us explain the main idea of the proof of $\textit{(5)}\Rightarrow \textit{(1)}$
based on the following example: assume $T=1$, $d=3$, $\Phi=0,\mathcal{S}=\{\Omega\}$ and the set $\Omega^*$ is given by the grey polyhydron in Figure \ref{fig::ftap1}. Assume that the support of $\P \circ \Delta(S_{1})^{-1}$ for a given measure $\P \in \mathfrak{P}_0$ is given by the blue dot (see Figure \ref{fig::ftap1}). Then as $0 \in \text{ri}(\Omega^*)$, we can find three additional points in $\Omega^*$, such that zero is in the relative interior of the convex hull of the four points. By definition of $\Omega$, the three additional points are in the support of some measures in $\mathfrak{P}_0$, which we call $\P_1, \P_2, \P_3$ in $\mathfrak{P}_0$. As $\mathfrak{P}_0$ is convex, it follows that $$\tilde{\P}\coloneqq \frac{\P_1+\P_2+\P_3+\P}{4}$$ is an element of $\mathfrak{P}_0$ as well, as visualised in Figure \ref{fig::ftap2}. Since zero is in the relative interior of the support of $\tilde{\P}$, one can now use results from \citep{rokhlin2008proof} to find a martingale measure $\Q \sim \tilde{\P}$, in particular $\P \ll \Q$.


\tdplotsetmaincoords{60}{138}

\pgfmathsetmacro{\rvec}{.8} \pgfmathsetmacro{\thetavec}{30}
\pgfmathsetmacro{\phivec}{60}


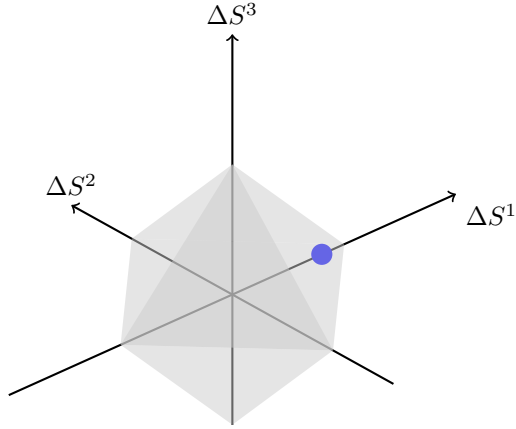
\begin{figure}[h!] 
\centering

\definecolor{cccccc}{rgb}{0.8,0.8,0.8}

\begin{tikzpicture}[scale=4,tdplot_main_coords]

\draw[thick,->] (1,0,0) -- (-1,0,0) node[anchor=north
west]{$\Delta S^1$}; \draw[thick,->] (0,0.8,0) -- (0,-0.8,0)
node[anchor=south]{$\Delta S^2$}; \draw[thick,->] (0,0,-0.5) --
(0,0,1) node[anchor=south]{$\Delta S^3$};

\fill [color=blue] (-0.4,0,0) circle (1pt); 

\fill[color=cccccc,fill=cccccc,fill
opacity=0.5] (0.5,0,0) -- (0,0.5,0)-- (-0.5,0,0) -- (0,-0.5,0) -- 
cycle;
\fill[color=cccccc,fill=cccccc,fill
opacity=0.5] (0.5,0,0) -- (0,0.5,0)-- (0,0,0.5) -- 
cycle;
\fill[color=cccccc,fill=cccccc,fill
opacity=0.5] (-0.5,0,0) -- (0,-0.5,0) -- (0,0,0.5) -- 
cycle;
\fill[color=cccccc,fill=cccccc,fill
opacity=0.5] (0.5,0,0) -- (0,0.5,0) -- (0,0,-0.5) -- 
cycle;

\end{tikzpicture}
\caption{Construction of a martingale measure $\Q \gg \P$ for $d=3$: the set $\Omega$ (grey) and $\text{supp}(\P \circ (\Delta S_1)^{-1})$ (blue)\label{fig::ftap1}}
\end{figure}

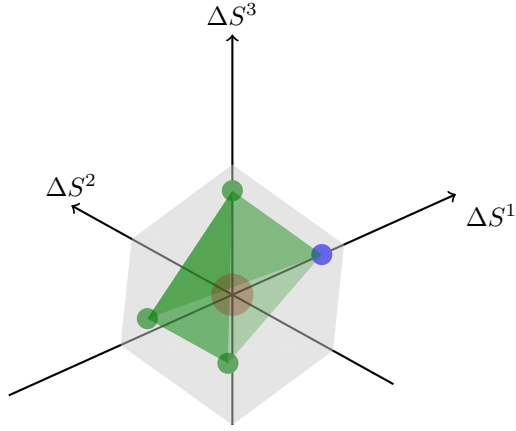
\begin{figure}[h!] 
\centering

\definecolor{cccccc}{rgb}{0.8,0.8,0.8}
\definecolor{darkred}{rgb}{0.64,0,0}

\begin{tikzpicture}[scale=4,tdplot_main_coords]

\draw[thick,->] (1,0,0) -- (-1,0,0) node[anchor=north
west]{$\Delta S^1$}; \draw[thick,->] (0,0.8,0) -- (0,-0.8,0)
node[anchor=south]{$\Delta S^2$}; \draw[thick,->] (0,0,-0.5) --
(0,0,1) node[anchor=south]{$\Delta S^3$};

\fill [color=ao] (0.2,0.2,-0.1) circle (1pt); 
\fill [color=blue] (-0.4,0,0) circle (1pt); 
\fill [color=ao] (0.2,-0.2,-0.1) circle (1pt); 
\fill [color=ao] (0,0,0.4) circle (1pt);

\fill[color=cccccc,fill=cccccc,fill
opacity=0.5] (0.5,0,0) -- (0,0.5,0)-- (-0.5,0,0) -- (0,-0.5,0) -- 
cycle;
\fill[color=cccccc,fill=cccccc,fill
opacity=0.5] (-0.5,0,0) -- (0,-0.5,0) -- (0,0,0.5) -- 
cycle;
\fill[color=cccccc,fill=cccccc,fill
opacity=0.5] (0.5,0,0) -- (0,0.5,0) -- (0,0,-0.5) -- 
cycle;

\fill[color=ao,fill=ao,fill
opacity=0.5] (0.2,0.2,-0.1) -- (0.2,-0.2,-0.1) -- (0,0,0.4) -- 
cycle;
\fill[color=ao,fill=ao,fill
opacity=0.25] (0.2,0.2,-0.1) -- (-0.4,-0,0) -- (0,0,0.4) -- 
cycle;
\fill[color=ao,fill=ao,fill
opacity=0.25] (0.2,-0.2,-0.1) -- (-0.4,-0,0) -- (0,0,0.4) -- 
cycle;

\fill [color=darkred,opacity=0.25] (0,0,0) circle (2pt);

\end{tikzpicture}
\caption{Construction of a martingale measure $\Q \gg \P$ for $d=3$: finding a measure $\tilde{\P}$ such that NA$(\tilde{\P})$ and $\tilde{\P}\gg \P$ holds.
\label{fig::ftap2}}
\end{figure}
Note that this argument fundamentally relies on the convexity of $\mathfrak{P}_t$. The analytic product structure assumption then grants suitable measurability for the concatenation procedure in the multiperiod case.


\subsection{Robust Superhedging Theorem}\label{subseq superh}

In this section we focus on the key result which characterises superhedging prices: the pricing-hedging duality, or the Superhedging Theorem. As before, we 
compare pathwise and quasi-sure superhedging approaches as extensions of the classical model-specific result, see \cite[Chapter 5, Theorem 5.30]{follmer2011stochastic}. 

For a set $\Omega \subseteq X$ we denote the pathwise superhedging price on $\Omega$ by
\begin{align*}
\pi_{\Omega}(g):=\inf \{ x \in \R \ | \ \exists (h,H) \in \admU  \text{ s.t. } x+ h \cdot \Phi+(H \circ S_T) \ge g \ \text{ on } \Omega \}
\end{align*} 
and denote the $\mathfrak{P}$-q.s. superhedging price by
\begin{align*}
\pi^{\mathfrak{P}}(g):= \inf \{ x \in \R \ | \ \exists (h,H) \in \admU \text{ s.t. } x + h \cdot \Phi + (H \circ S_T) \ge g \ \mathfrak{P}\text{-q.s.}\}.
\end{align*}
Take an analytic set $\Omega$ such that for all $\P \in \mathfrak{P}$ we have $\P(\Omega^*_{\Phi})=1$. Using the Superhedging Theorems of \citep{bouchard2015arbitrage} and \citep{bfhmo} it is immediate that the following relationships hold for all upper semianalytic $g$: 
\begin{align*}
\sup_{\Q \in \mathcal{M}_{\Omega,\Phi}} \E_{\Q}[g]=\pi_{\Omega_{\Phi}^*}(g) &\ge \pi^{\mathfrak{P}}(g)=\sup_{\Q \in \mathcal{Q}_{\Pfr,\Phi}} \E_{\Q}[g].
\end{align*}
The above inequality is strict in general. An easy way to see this is to take $d=T=S_0=1$, $\Phi=0$, $g(S_1)=\mathds{1}_{\{S_1=0\}}$ and $\mathfrak{P}=\{\frac{1}{2}\lambda|_{[0,2]}\}$, where $\lambda|_{[0,2]}$ denotes the Lebesgue measure on $[0,2]$. Then $\Omega=\Omega^*=[0,2]$ and the pathwise superhedging price is equal to $1/2$, while the quasi-sure superhedging price is equal to zero. In fact, to link the super-hedging and pathwise formulations, we have to choose a specific set $\Omega^{\mathfrak{P}}_g$ which depends not only on $\Pfr$ but also on $g$. We determine this set $\Omega^{\mathfrak{P}}_g$ by reducing to superhedging under a fixed measure $\P^g$ as stated in the following theorem: 

\begin{Theorem}\label{Thm super}
Let $\mathfrak{P}$ be a set of probability measures satisfying (APS). Assume \textbf{NA}$(\mathfrak{P})$ holds and let $g: X \to \R$ be upper semianalytic.
Then there exists a measure $\P^g= \P^g_0 \otimes \cdots \otimes \P^g_{T-1}$ and an $\mathcal{F}^{\mathcal{U}}$-measurable set $\Omega_g^{\mathfrak{P}}$ with $\P(\Omega^{\mathfrak{P}}_g)=1$ for all $\P \in \mathfrak{P}$, such that 
\begin{align*}
\sup_{\Q \in \mathcal{M}_{\Omega^{\mathfrak{P}}_g,\Phi}} \E_{\Q}[g]=\pi_{(\Omega_g^{\mathfrak{P}})_{\Phi}^*}(g)=\pi^{\P^g}(g)= \pi^{\mathfrak{P}}(g)=
\sup_{\Q \in \mathcal{Q}_{\Pfr, \Phi}}\E_{\Q}[g].\nonumber
\end{align*}
Conversely, let $\Omega$ be an analytic subset of $X$ with $\Omega^*_{\Phi} \neq \emptyset$ and let $g: X \to \R$ be upper semianalytic. For any set $\mathfrak{P}\subseteq \mathcal{P}(X)$, which satisfies (APS) and $\Nc^\Pfr=\Nc^{\mathcal{M}_{\Omega, \Phi}^f}$, we have
\begin{align*}
\sup_{\Q \in \mathcal{M}^f_{\Omega, \Phi}} \E_{\Q}[g]
&= \pi_{\Omega_{\Phi}^*}(g) = \pi^{\mathfrak{P}}(g)=\sup_{\Q \in \mathcal{Q}_{\Pfr, \Phi}}\E_{\Q}[g].
\end{align*}
In both cases, the value, if finite, is attained by a superhedging strategy $(h,H) \in \admU$.
\end{Theorem}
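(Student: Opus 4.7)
The plan is to assemble three existing Superhedging Theorems — the quasi-sure one of Bouchard--Nutz, the pathwise one of Burzoni et al., and the classical F\"ollmer--Schied $\P$-specific result — using as a bridge a ``worst-case'' measure $\P^g$ that depends on the payoff. For the forward direction, I would first invoke Bouchard--Nutz to obtain the outermost equality $\pi^\Pfr(g) = \sup_{\Q \in \Qc_{\Pfr,\Phi}} \E_\Q[g]$ together with the associated dynamic-programming value functions $\mathcal{E}_t g$, defined by backward recursion as the one-step worst-case conditional value over $\Pfr_t(\omega)$, with a Lagrangian term incorporating the statically traded options $\Phi$. The measure $\P^g = \P^g_0 \otimes \cdots \otimes \P^g_{T-1}$ is then constructed by measurable selection: at each $(t,\omega)$ pick $\P^g_t(\omega) \in \Pfr_t(\omega)$ which $\epsilon$-attains the supremum defining $\mathcal{E}_t g(\omega)$. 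The APS hypothesis — analyticity of $\mathrm{graph}(\Pfr_t)$, non-emptiness and convexity of the fibres — makes the Jankov--von Neumann selection theorem applicable, yielding an $\Fc_t^\Uc$-measurable selector so that $\P^g \in \Pfr$. The set $\Omega_g^\Pfr$ is defined via the quasi-sure supports of the one-step kernels, analogously to the $\Omega$ constructed in Theorem \ref{Thm. S}, so that $\P(\Omega_g^\Pfr) = 1$ for all $\P \in \Pfr$ and, by \textbf{NA}$(\Pfr)$ combined with Theorem \ref{Thm. S}, the efficient subset $(\Omega_g^\Pfr)_\Phi^*$ is non-empty.

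With these ingredients in hand, the chain of equalities is assembled piece by piece. The leftmost identity is Burzoni et al.'s pathwise duality applied to $\Omega_g^\Pfr$; the rightmost is Bouchard--Nutz. For $\pi^{\P^g}(g) = \pi^\Pfr(g)$, the inequality $\leq$ is immediate on the primal side since $\P^g \in \Pfr$ and any $\Pfr$-q.s.\ superhedge is a fortiori a $\P^g$-a.s.\ one; the reverse follows by induction on $t$, using the selection property $\E_{\P^g_t(\omega)}[\mathcal{E}_{t+1} g] \geq \mathcal{E}_t g(\omega) - \epsilon_t$ and the classical F\"ollmer--Schied duality under the single measure $\P^g$, yielding $\pi^{\P^g}(g) \geq \mathcal{E}_0 g - \epsilon = \pi^\Pfr(g) - \epsilon$ and then sending $\epsilon \downarrow 0$. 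The identity $\pi_{(\Omega_g^\Pfr)_\Phi^*}(g) = \pi^{\P^g}(g)$ follows by combining the preceding equalities with the observation that the finitely supported martingale measures appearing in the pathwise dual are all absolutely continuous with respect to $\P^g$ when restricted to $(\Omega_g^\Pfr)_\Phi^*$.

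For the converse direction, the hypothesis $\Nc^\Pfr = \Nc^{\mathcal{M}^f_{\Omega,\Phi}}$ immediately yields $\pi^\Pfr(g) = \pi^{\mathcal{M}^f_{\Omega,\Phi}}(g)$, which by Burzoni et al.'s pathwise duality equals $\pi_{\Omega_\Phi^*}(g) = \sup_{\Q \in \mathcal{M}^f_{\Omega,\Phi}} \E_\Q[g]$; the equality with $\sup_{\Q \in \Qc_{\Pfr,\Phi}} \E_\Q[g]$ follows from the inclusion $\mathcal{M}^f_{\Omega,\Phi} \subseteq \Qc_{\Pfr,\Phi}$ together with the observation that every $\Q \in \Qc_{\Pfr,\Phi}$ is dominated by some $\P \in \Pfr$ and hence vanishes on $\Nc^{\mathcal{M}^f_{\Omega,\Phi}}$. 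Attainment in both directions is inherited from the corresponding attainment statements in the quoted Superhedging Theorems. The main obstacle is the construction of $\P^g$ and the verification that $\pi^{\P^g}(g) = \pi^\Pfr(g)$: since $\Pfr$ is generally non-dominated, no single measure can capture all polar sets at once, so it is crucial that $\P^g$ depends on the payoff $g$ and captures only those paths relevant to superhedging $g$, not the full polar structure of $\Pfr$.
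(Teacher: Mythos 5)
Your overall strategy — combine the Bouchard--Nutz quasi-sure duality, the Burzoni et al.\ pathwise duality, and a classical model-specific duality through a payoff-dependent measure $\P^g$, using dynamic programming and measurable selection — is indeed the paper's strategy. But the way you produce $\P^g$ does not close, and the missing step is precisely the crux of the proof.

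You propose to select $\P^g_t(\omega) \in \Pfr_t(\omega)$ which $\epsilon$-attains the supremum in $\mathcal{E}_t g(\omega)$ and then to conclude $\pi^{\P^g}(g) \ge \pi^\Pfr(g) - \epsilon$ ``and send $\epsilon \downarrow 0$.'' The theorem, however, asserts the existence of a \emph{single} measure $\P^g$ with $\pi^{\P^g}(g) = \pi^\Pfr(g)$, and your construction yields a family $\P^{g,\epsilon}$ that changes with $\epsilon$. There is no mechanism in your argument to pass to the limit: the $\P^{g,\epsilon}$ need not converge, and even if a weak limit existed it would not a priori preserve either $\ll \Pfr$ or the superhedging price. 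The paper's fix is to take, for each $(t,\omega)$, a countable sequence $(\P_n)_n \subseteq \Pfr_t(\omega)$ whose associated dominated martingale measures exhaust the supremum (with $\textbf{NA}(\P_n)$ arranged by an absorbing step as in the proof of Theorem \ref{Thm. bn vs bfhmo}), and to set $\P^g_t(\omega) := \sum_{k\ge 1}2^{-k}\P_k$. This single measure dominates every approximant and vanishes on $\Pfr_t(\omega)$-polar sets; note it need not lie in $\Pfr_t(\omega)$, which is why the theorem only asserts that $\P^g$ is a product measure and not that $\P^g\in \Pfr$. Even with this mixture in hand, establishing $\pi^{\P^g_t(\omega)}(g) = \pi^{\Pfr_t(\omega)}(g)$ is not automatic: the paper needs a genuine contradiction argument (conditioning on compact sets $K_n$, exhibiting a countable dense family of near-optimal strategies, and producing an offending measure $\P^n$ whose mixture with $\P^g_t(\omega)$ would raise the value). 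None of this appears in your sketch.

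Two further omissions worth flagging. First, you invoke ``the classical F\"ollmer--Schied duality'' as a black box, whereas the paper cannot simply cite it in this form: one-step payoffs arising from the dynamic programming recursion are merely $\mathcal{F}^{\mathcal{U}}$-measurable, and the paper proves the one-step $\P$-a.s.\ duality from the \emph{pathwise} result of Burzoni et al.\ via Lusin's theorem (Lemma \ref{lem. p-superhedging}); this bridge lemma is essential to the logic and is not in your proposal. Second, your description of $\Omega^\Pfr_g$ (``defined via the quasi-sure supports of the one-step kernels, analogously to the $\Omega$ in Theorem \ref{Thm. S}'') is wrong: such a set depends only on $\Pfr$, not on $g$, and the counterexample in Section \ref{subseq superh} with $g=\mathds{1}_{\{S_1=0\}}$ shows the pathwise price on that set is strictly larger. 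The paper's $\Omega^\Pfr_g$ is the $\Pfr$-support set \emph{intersected with} the set of paths on which the optimal $\Pfr$-q.s.\ superhedging strategy actually dominates $g$ pointwise — this second, payoff-dependent component is what makes the pathwise price on $(\Omega^\Pfr_g)^*_\Phi$ collapse to the quasi-sure one. Your converse direction is essentially right, though the equality $\pi^{\mathcal{M}^f_{\Omega,\Phi}}(g)=\pi_{\Omega^*_\Phi}(g)$ deserves a word on why $\mathcal{M}^f_{\Omega,\Phi}$-polar subsets of $\Omega^*_\Phi$ are empty, and the paper additionally checks analyticity of $\Omega^*_\Phi$ (via the set $\mathcal{P}_{Z,\Phi}$ of Burzoni--Frittelli--Maggis) so that (APS) holds for $\mathcal{P}^f(\Omega^*_\Phi)$.
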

The proof of this result is postponed to Section \ref{Sec proofsuper}.
In particular, Theorem \ref{Thm super} lets us interpret robust superhedging prices $ \pi^{\mathfrak{P}}(g)$ as classical superreplication prices $\pi^{\P^g}(g)$ under an ``extremal" measure $\P^g$. Determining such measures $\P^g$ is not straightforward in general. 
In a one-period case and for a continuous $g$, we can use the arguments in the proof of Lemma \ref{lem. p-superhedging} to see that any measure $\P$ which attains the one-step quasisure support $\{\P\circ (\Delta S_{T}(\omega, \cdot))^{-1}\ | \ \P\in \mathfrak{P}_{T-1}(\omega)\}$ can be chosen. To extend this result to the multiperiod-case, certain continuity properties of the maps $\omega \mapsto \mathfrak{P}_t(\omega)$ have to be guaranteed: we refer to \cite[Prop. 3.7]{cow} for a sufficient condition.

\section{Complementary results on superhedging and arbitrage}\label{sec:additional}

\subsection{Extension of Pathwise Superhedging from $\Omega^*$ to $\Omega$}\label{sec. ext}
The preceding results show that quasi-sure and pathwise superhedging are essentially equivalent. As $\mathfrak{P}$-q.s.~superhedging strategies might be difficult to compute and implement in practice, it might be preferable to work on a prediction set $\Omega$ using pathwise arguments. Given that determining $\Omega^*$ is computationally expensive as well, the quantity of interest is then the superhedging price on $\Omega$ and not on $\Omega^*$ seen in the duality results in Section \ref{subseq superh}. Thus, we would like to find sufficient conditions under which the superhedging strategy associated with $\pi_{\Omega^{*}_{\Phi}}(g)$ can be extended to $\Omega$ without any additional cost.  The intuition is that as $\Omega \setminus \Omega^*$ describes non-efficient beliefs, we should be able to superhedge $g$ on this set implementing an arbitrage strategy. It turns out that this intuition is not true in general. Indeed, we run into problems regarding measurability of these arbitrage strategies, which means that this procedure only works in special cases. 

To simplify the analysis, throughout this section only, we assume that $\Phi=0$ and $\omega \mapsto S_t(\omega)$ is continuous. The latter is satisfied, e.g., when $\omega \mapsto S_t(\omega)$ is the coordinate mapping, i.e., $S_t(\omega)=\omega_t$. In order to give some intuition and to identify necessary conditions for the sets $\Omega$, $\Omega^*$ and the function $g$ we first give two counterexamples:

\begin{Ex}
Let $d=1$, $T=1$ and $(\Omega, \F)=(\R_+\setminus\{0\}, \mathcal{B}(\R_+\setminus\{0\}))$. We set $S_0=2$ and $S_1(\omega)= 2+\omega$. Then $\Omega^*=\emptyset$ and trivially 
\begin{align*}
\pi _{\Omega^*}(1)=&\inf \{ x \in \R \ | \ \exists H \in \R^d \text{ such that }x+H \circ S_T \ge 1 \text{ on } \Omega^*\} = -\infty, \\
\pi _{\Omega}(1)= &\inf \{ x \in \R \ | \ \exists H \in \R^d \text{ such that }x+H \circ S_T \ge 1 \text{ on } \Omega\} = 1.
\end{align*}
Thus we have to assume that $\Omega^*\cap \Sigma_t^{\omega}\neq \emptyset$ in the remainder of this section. We also note that here \textbf{SA}($\Omega$) holds whilst \textbf{USA}($\Omega$) does not, see Section \ref{Sec. compar}.
\end{Ex}

\begin{Ex}
\label{Ex. joh2}
Let $d=2$, $T=1$ and $\Omega=\left((2, \infty) \times[0, \infty) \right) \cup \left(\{2 \} \times [0,7]\right)$ and $\F=\mathcal{B}(\Omega)$. We set $S_0=(2,2)$ and $S_1(\omega)=\omega$. In particular, 
$\Delta S_1 (\Omega)$ is not a closed set. 
Note that $\Omega^*= \{2\} \times [0,7]$. Now, introduce the claim
\begin{align*}
g(S^1, S^2)= \Delta S_1^2 \mathds{1}_{\{\Delta S_1^2 \le 5\}}+ 5(\Delta S_1^2 -4) \mathds{1}_{\{\Delta S^2_1 >5\}}.
\end{align*}
It is easy to see that $\pi_{\Omega^*}(g)=0$ and any trading strategy $H_1=(H^1_1, 1)$ with $H^1_1\in \R$ is a superhedging strategy. We now claim that we cannot extend superhedging to $\Omega\setminus \Omega^*$ with initial capital zero. For this we show that even for initial capital one, there exist no superreplication strategies on $\Omega$. Indeed for this we would need
\begin{align*}
1+H^1_1 \Delta S^1_1+H_1^2\Delta S^2_1 &\ge 5 (\Delta S^2_1-4)\quad \text{on } \Delta S_1^1 >0,\ \Delta S_1^2>5,
\end{align*}
 which is equivalent to 
\begin{align*}
H_1^1 \ge \frac{(5-H_1^2)\Delta S_1^2-21}{\Delta S_1^1}.
\end{align*}
As $H_1^2\in [4/5, 3/2]$, this means $H^1_1\to \infty$ if $\Delta S_1^1$ is arbitrarily close to 0 and $\Delta S_1^2$ is sufficiently large. Even if we look at $\Omega=[2+\epsilon, \infty) \times[0, \infty) \cup \{2 \} \times [0,7]$ for some positive $\epsilon$, then taking $\Delta S_1^2$ arbitrarily large still leads to non-existence of superhedging strategies.
In conclusion we will only consider compact sets $\Omega\cap \Sigma_t^{\omega}$ in the rest of this section, on which ``arbitrage strategies are effective for superhedging" in a sense defined below. Furthermore, modifying the function $g$ on $\Omega \setminus \Omega^*$ in the above example, we can easily construct situations, in which $\pi_{\Omega}(g)\neq\pi_{\Omega^*}(g)$ for discontinuous payoffs $g$. In conclusion we will also assume that $g$ is continuous in this section.

We can also modify this example so that the $\Delta S_1(\Omega)$ is closed and there is no attainment of superreplication strategies for $\pi_{\Omega}(g)$. We stress that this is a fundamental difference to the case $\Omega= \Omega^*$, where attainment is always given (see Theorem \ref{Thm super}). Namely, take $\Omega=\{ (x,y) \in \R^2 \ | \ x \in [2, \infty), \ 0 \le y \le 5+\sqrt{x} \}$ with the other elements unchanged. Note that $\Omega^*$ did not change. Repeating the arguments above and looking at $\Delta S^1_1= 1/n$ and $\Delta S^2_1=5+1/ \sqrt{n}$ we find
\begin{align*}
H_1^1 \ge n\left(20+\frac{5}{\sqrt{n}}-20-\frac{1}{\sqrt{n}}\right)=n \frac{4}{\sqrt{n}} \to \infty
\end{align*}
for $n \to \infty$.
\end{Ex}

As we have seen in the examples above, in general it is necessary to assume that $\Omega^*\cap \Sigma_t^{\omega} \neq \emptyset$, $\omega \mapsto g(\omega)$ is continuous and that $\Omega$ is compact as well as ``well suited for superhedging by arbitrage strategies". We first address the second point and show continuity of the one-step superhedging prices $\omega \mapsto \pi_{t, \Omega^*}(g)(\omega)$, which are defined via a dynamic programming approach:

\begin{Defn}
For a Borel-measurable $g:X\to \R$ we define the one-step superhedging prices
\begin{align*}
\pi_{T,\Omega^*}(g)(\omega)&\coloneqq g(\omega),\\
\pi_{t,\Omega^*}(g)(\omega)&\coloneqq \inf \{x\in \R \ | \ \exists H\in \R^d \text{ s.t. }x+H\Delta S_{t+1}(v)\ge \pi_{t+1,\Omega^*}(g)(v)\ \forall v\in\Sigma_t^{\omega}\cap \Omega^* \},
\end{align*}
where $0\le t\le T-1$.
\end{Defn}

A sufficient condition for continuity of $\omega \mapsto \pi_{t, \Omega^*}(g)(\omega)$ was identified in \citep{cow} and relies on the following assumption:

\begin{Ass}\label{Ass. joh1}
The sets $\Sigma_t^{\omega} \cap \Omega^* \neq \emptyset $ and the sets $\Sigma_t^{\omega}\cap\Omega$ are compact  for all $\omega \in \Omega$ and all $0\le t\le T-1$. Furthermore, for all $0 \le t\le T-1$, the correspondence 
$\omega \twoheadrightarrow S_{t+1}(\Sigma_t^{\omega} \cap \Omega^*)$ is uniformly continuous from $(\Omega, d_t^S)$ to the subsets of $\R^d$ endowed with the Hausdorff distance, and where $d_t^S(\omega,\tilde{\omega}):=\max_{s=0, \dots, t} | S_s(\omega)- S_s(\tilde{\omega})|$.
\end{Ass}

We refer to \citep[Section 3]{cow} for a discussion and examples of sets $\Omega$ satisfying Assumption \ref{Ass. joh1}. The following lemma now follows from a direct application of \citep[Proposition 3.5]{cow}:

\begin{Lem}
Let $\omega \mapsto g(\omega)$ be continuous. Under Assumption \ref{Ass. joh1} the one-step superhedging prices $\omega \mapsto \pi_{t, \Omega^*}(g)(\omega)$ are continuous for all $0\le t \le T-1$.
\end{Lem}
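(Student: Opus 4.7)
The plan is to proceed by backward induction on $t$ from $T$ down to $0$. The base case $t = T$ is immediate, as $\pi_{T, \Omega^*}(g)(\omega) = g(\omega)$ is continuous by assumption on $g$.

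For the inductive step, I would assume continuity of $\omega \mapsto \pi_{t+1, \Omega^*}(g)(\omega)$ on $\Omega$ and rewrite $\pi_{t, \Omega^*}(g)(\omega)$ as the value of a parametric one-step super-replication problem. Namely, one seeks the smallest $x \in \R$ for which there exists $H \in \R^d$ such that $x + H \Delta S_{t+1}(v) \ge \pi_{t+1, \Omega^*}(g)(v)$ for all $v \in \Sigma_t^{\omega} \cap \Omega^*$. The key observation is that this value depends on $\omega$ only through $S_{0:t}(\omega)$ and through the compact set of admissible increments $S_{t+1}(\Sigma_t^\omega \cap \Omega^*) \subseteq \R^d$, with a continuous target payoff provided by the inductive hypothesis. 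Compactness is guaranteed by Assumption \ref{Ass. joh1}, and continuity of $S_{0:t}$ together with the uniform continuity of the correspondence $\omega \twoheadrightarrow S_{t+1}(\Sigma_t^\omega \cap \Omega^*)$ in the Hausdorff distance gives continuous dependence of the problem data on $\omega$ measured in $d_t^S$.

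At this point I would directly invoke \citep[Proposition 3.5]{cow}, whose hypotheses --- compactness of the correspondence's values, uniform Hausdorff continuity of the correspondence, and continuity of the target payoff --- match the setup above. The cited proposition yields continuity of the value of the one-step problem in $\omega$, i.e., continuity of $\omega \mapsto \pi_{t, \Omega^*}(g)(\omega)$, which closes the induction.

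The main (and essentially only non-trivial) obstacle will be to verify that the translation of our one-step super-replication problem to the abstract parametric optimisation problem addressed by \citep[Proposition 3.5]{cow} is literal: one has to check that the inductive hypothesis indeed provides the continuity of the payoff function that is required as input to the cited proposition on the whole family of compact sets $\Sigma_t^\omega \cap \Omega^*$ (rather than only on a single fibre), and that the parameterisation by $\omega$ via $d_t^S$ is exactly the one with respect to which the Hausdorff continuity is assumed. Both points are routine but need to be set up carefully before the cited result can be applied verbatim.
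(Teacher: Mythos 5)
Your proof is correct and follows essentially the same route as the paper, which disposes of the lemma by simply citing \citep[Proposition 3.5]{cow} as a direct application; you merely make explicit the backward-induction scaffolding and the verification of the cited proposition's hypotheses (compactness, Hausdorff continuity of the fibre correspondence, continuity of the target), which the paper leaves implicit in the phrase ``direct application.''
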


Secondly, Example \ref{Ex. joh2} also shows, that it is important to identify the subset of $\Sigma_t^{\omega}\cap \Omega$, on which ``arbitrage strategies are ineffective for superhedging" in the following sense:

\begin{Defn}\label{def:A}
We denote by $\text{proj}_{\Delta S_{t+1}(\Sigma_t^{\omega} \cap \Omega^*)}(\Delta S_{t+1}(v))$ the orthogonal projection of $\Delta S_{t+1}(v)$ onto the linear subspace spanned by $\Delta S_{t+1}(\Sigma_t^{\omega} \cap \Omega^*)$  and define the set $A_t^{\omega}$ as the collection of all $v \in \Sigma_t^{\omega} \cap \Omega$, for which $\text{proj}_{\Delta S_{t+1}(\Sigma_t^{\omega} \cap \Omega^*)}(\Delta S_{t+1}(v))$ is not an element of $\Delta S_{t+1}(\Sigma_{t}^{\omega}\cap \Omega^*)$.
\end{Defn}

For an illustration of the set $A_t^{\omega}$ see Figure \ref{figure:At}. 
We now state an assumption ensuring compatibility of $A_t^{\omega}$ and $\Sigma_t^{\omega}\cap \Omega^*$:

\begin{Ass}\label{Ass. joh2}
For each level set the following is true: if a sequence of points $(v_n) \subseteq A_t^{\omega}$ converges to a point $v\in \text{span}(\Delta S_{t+1}(\Sigma_t^{\omega} \cap \Omega^*))$, then necessarily $v \in \Delta S_{t+1}(\Sigma_t^{\omega} \cap \Omega^*)$.
\end{Ass}

\begin{figure}[h!]
\centering
\begin{tikzpicture}[line cap=round,line join=round,x=1cm,y=1cm]


\draw [color=cqcqcq,dash pattern=on 4pt off 4pt,
xstep=2.0cm,ystep=2.0cm] (0,-3) grid (7,7); 
\draw[->,color=black] (-1,0) -- (7.1,0); 
\foreach \x in {2,4,6}
\draw[shift={(\x,0)},color=black] (0pt,2pt) -- (0pt,-2pt)
node[below] {\footnotesize $\x$}; 
\draw[->,color=black] (0,-3) --
(0,7.1); 
\foreach \y in {-2,2,4,6,} \draw[shift={(0,\y)},color=black]
(2pt,0pt) -- (-2pt,0pt) node[left] {\footnotesize $\y$};
\draw[color=black] (0pt,-10pt) node[right] {\footnotesize $0$};
\clip(-1.2,-5) rectangle (8,8); \draw (-1.35,7) node[anchor=north
west] {$ \Delta S^2_{1} $}; \draw (7.1,-0.16) node[anchor=north west]
{$ \Delta S^1_{1} $};
\draw[-, color=amber, line width=2pt, color=babyblue ] (0,-3)--(0,7);
\draw[dashed,color=ao, line width=2pt] (0,-2) -- (0,5);
\draw[dashed, color=qqqqff, line width=2pt] (0,2) -- (5,2);
\fill[color=qqqqff] (0,2) circle (3.0pt);
\begin{scriptsize}
\fill [color=qqqqff] (5,2) circle (3pt); 
\fill[color=cccccc,fill=cccccc,fill
opacity=0.5] (0,7) -- (7,7)-- (7,5) -- (0,5) --
cycle;
\draw[->, color=amber, line width=1.2pt ] (0,0)--(1,0);
\draw (5.1,2) node[anchor=north west, color=qqqqff] {$ \Delta S_{1}(v) $};
\draw (0,2) node[anchor=south west, color=qqqqff] {$\text{proj}_{\Delta S_{1}(\Omega^*)}( \Delta S_{1}(v)) $};
\draw (0.5,0) node[anchor=north west, color=amber] {$ \xi_{1, \Omega} $};
\end{scriptsize}
\draw (0,-3.2) node[anchor=north west] {Legend:};
\draw[-, color=cccccc, line width=2pt] (1.7,-3.55) -- (1.7,-3.35);
\draw (2,-3.2) node[anchor=north west] { $A_1^{\omega}$};
\draw (2,-3.7) node[anchor=north west] { $\Delta S_{1}(\Omega^*)$};
\draw[-, color=ao, line width=2pt] (1.7,-3.85) -- (1.7,-4.05);
\draw (2,-4.2) node[anchor=north west] { $\text{Lin}(\Delta S_{1}(\Omega^*))$};
\draw[-, color=babyblue, line width=2pt] (1.7,-4.35) -- (1.7,-4.55);

\end{tikzpicture}
\caption[ Example \ref{Ex. joh2} ]{ Example \ref{Ex. joh2} with notation from Definition \ref{def:A}.\label{figure:At}}
\end{figure}
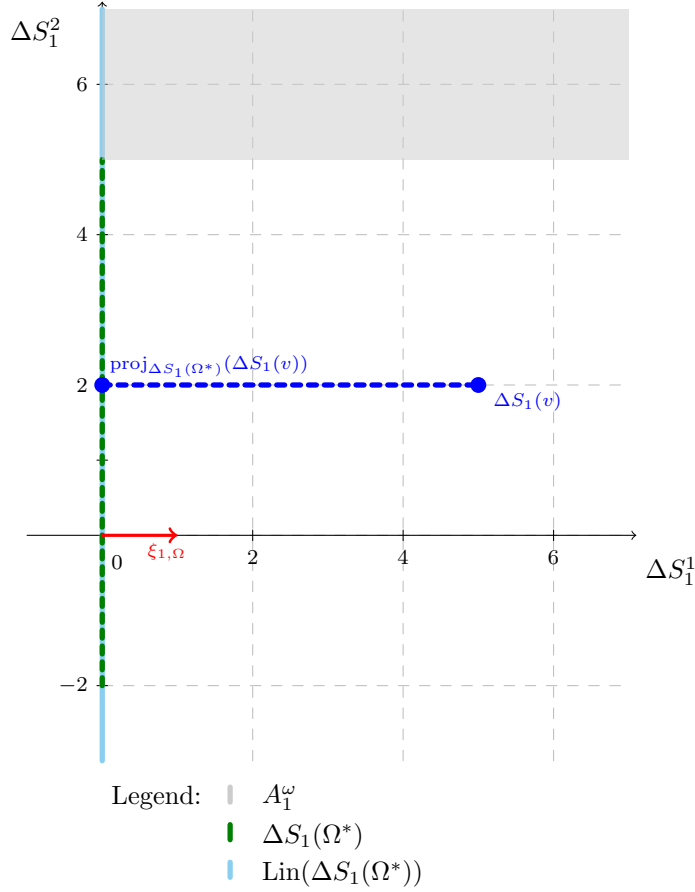

Alas, it turns out that while Assumptions \ref{Ass. joh1} and \ref{Ass. joh2} are sufficient to establish the equality $\pi_{\Omega}(g)=\pi_{\Omega^*}(g)$ for $d=2$, it is not so for $d>2$. It is linked with the notion of standard separators introduced in \citep{bfhmo}, which are measurable selectors of pointwise arbitrage strategies. We refer the reader to \citep[Proof of Lemma 1]{bfhmo} and the discussion therein for a detailed definition. Here we formulate an example, in which the existence of two standard separators together with the measurability constraint on $H_1$ implies $\pi_{\Omega}(g)>\pi_{\Omega^*}(g)$:

\begin{Ex}
\label{Ex. joh1}
Let $d=3$, $T=1$ and $(\Omega, \F)=(\R_+, \mathcal{B}(\R_+))$. We set $(S_0^1,S_0^2, S_0^3)=(2,2,2)$ and
\begin{align*}
S_1^1(\omega)= \left\{
			\begin{array}{ll}
					2 &\text{ if }\omega \in \R_+ \setminus \Q, \\
					2.5-\omega &\text{ if }\omega \in \Q \cap [1/2, \infty), \\
					4 &\text{ if }\omega \in \Q \cap [0, 1/2),
			\end{array}
				\right. \\
S_1^2(\omega)= \left\{
			\begin{array}{ll}
					\omega &\text{ if }\omega \in \R_+ \setminus \Q, \\
					0 &\text{ if }\omega \in \Q \cap [1/2, \infty), \\
					2 &\text{ if }\omega \in \Q \cap [0, 1/2),
			\end{array}
				\right. 
S_1^3(\omega)= \left\{
			\begin{array}{ll}
					2 &\text{ if }\omega \in \R_+ \setminus \Q, \\
					2 &\text{ if }\omega \in \Q \cap [1/2, \infty), \\
					2+\omega &\text{ if }\omega \in \Q \cap [0, 1/2).
			\end{array}
				\right. 
\end{align*}
Then $\Omega^*=\R_+ \setminus \Q$ and using the notation of \citep[Proof of Lemma 1]{bfhmo} the standard separators are given by $\xi_{0, A_0}=(0,0,1)$ and $\xi_{0, A_1}=(-1,0,0)$. Next we define
\begin{align*}
g(S^1, S^2, S^3)=
\left\{
			\begin{array}{ll}
					(\Delta S^2_1 + |\Delta S^1_1|+ |\Delta S^3_1|)^-&\text{ if } \Delta S^2_1 \le 0, \\
					(\Delta S^2_1 - |\Delta S^1_1|- |\Delta S^3_1|)^+&\text{ if } \Delta S^2_1 > 0, \\
					\end{array}
					\right.
\end{align*}
We note that for $\Delta S^1_1= \Delta S^3_1=0$ we have $g(S)=\Delta S_1^2$. So in particular to hedge $g$ on $\Omega^*$ we need initial capital $\pi_{\Omega^*}(g)=0$ and any hedging strategy satisfies $H_1=(H^1_1, 1, H^3_1)$ for $H^1_1, H^3_1\in \R$. For any such strategy to also superhedge on $\Q \cap [1/2, \infty)$ with initial capital $1/2$, $H_1^1$ has to satisfy in particular
\begin{align*}
1/2+H^1_1 \Delta S_1^1 -2H_1^2\ge 0 \hspace{0.5cm} \text{ for } \Delta S_1^1\le -2,
\end{align*}
so $H^1_1 \le -3/4$ as $H_1^2 \in [1,3/2]$. Lastly extending superhedging g on $\Q \cap [0,1/2)$ gives the constraint
\begin{align*}
1/2+2H^1_1+ H^3_1 \Delta S^3_1 \ge 0.
\end{align*}
Taking $\Delta S^3_1 = 0$ gives $H^1_1\ge -1/4$, a contradiction. Thus we will assume that all one-point arbitrages can be reduced to a single standard separator.
\end{Ex}

\begin{figure}[h]
\centering

\begin{tikzpicture}[scale=6,tdplot_main_coords]

\draw[thick,->] (1,0,0) -- (-1,0,0) node[anchor=north
west]{$\Delta S^1_{1}$}; \draw[thick,->] (0,0.8,0) -- (0,-0.8,0)
node[anchor=south]{$\Delta S^2_{1}$}; \draw[thick,->] (0,0,-0.5) --
(0,0,0.7) node[anchor=south]{$\Delta S^3_{1}$};

\fill [color=blue] (0,0.2,0) circle (0.5pt); \fill [color=blue]
(0,0.15,0) circle (0.5pt); \fill [color=blue] (0,0.1,0) circle
(0.5pt); \fill [color=blue] (0,0.05,0) circle (0.5pt); \fill
[color=blue] (0,0.0,0) circle (0.5pt); \fill [color=blue]
(0,-0.05,0) circle (0.5pt); \fill [color=blue] (0,-0.1,0) circle
(0.5pt); \fill [color=blue] (0,-0.15,0) circle (0.5pt); \fill
[color=blue] (0,-0.2,0) circle (0.5pt); \fill [color=blue]
(0,-0.25,0) circle (0.5pt); \fill [color=blue] (0,-0.3,0) circle
(0.5pt); \fill [color=blue] (0,-0.35,0) circle (0.5pt); \fill
[color=blue] (0,-0.4,0) circle (0.5pt); \fill [color=blue]
(0,-0.45,0) circle (0.5pt); \fill [color=blue] (0,-0.5,0) circle
(0.5pt); \fill [color=blue] (0,-0.55,0) circle (0.5pt); \fill
[color=blue] (0,-0.6,0) circle (0.5pt); \fill [color=blue]
(0,-0.65,0) circle (0.5pt); \fill [color=blue] (0,-0.7,0) circle
(0.5pt);
\fill [color=blue] (0,-0.75,0) circle
(0.5pt);

\fill [color=red] (-0.2,0,0) circle (0.5pt);
\fill [color=red] (-0.2,0,0.05) circle (0.5pt);
\fill [color=red] (-0.2,0,0.1) circle (0.5pt);
\fill [color=red] (-0.2,0,0.15) circle (0.5pt);
\fill [color=red] (-0.2,0,0.2) circle (0.5pt);
\fill [color=red] (-0.2,0,0.25) circle (0.5pt);
\fill [color=red] (-0.2,0,0.3) circle (0.5pt);

\fill [color=green] (0,0.2,0) circle (0.5pt);
\fill [color=green] (0.05,0.2,0) circle (0.5pt);
\fill [color=green] (0.1,0.2,0) circle (0.5pt);
\fill [color=green] (0.15,0.2,0) circle (0.5pt);
\fill [color=green] (0.2,0.2,0) circle (0.5pt);
\fill [color=green] (0.25,0.2,0) circle (0.5pt);
\fill [color=green] (0.3,0.2,0) circle (0.5pt);
\fill [color=green] (0.35,0.2,0) circle (0.5pt);
\fill [color=green] (0.4,0.2,0) circle (0.5pt);
\fill [color=green] (0.45,0.2,0) circle (0.5pt);
\fill [color=green] (0.5,0.2,0) circle (0.5pt);
\fill [color=green] (0.55,0.2,0) circle (0.5pt);
\fill [color=green] (0.6,0.2,0) circle (0.5pt);
\fill [color=green] (0.65,0.2,0) circle (0.5pt);
\fill [color=green] (0.7,0.2,0) circle (0.5pt);
\fill [color=green] (0.75,0.2,0) circle (0.5pt);
\fill [color=green] (0.8,0.2,0) circle (0.5pt);
\fill [color=green] (0.85,0.2,0) circle (0.5pt);
\fill [color=green] (0.9,0.2,0) circle (0.5pt);
\fill [color=green] (0.95,0.2,0) circle (0.5pt);
\fill [color=green] (1,0.2,0) circle (0.5pt);

\fill [color=blue] (4.58,3.2,3.48) circle (0.5pt);
\fill [color=green] (4.58,3.2,3.36) circle (0.5pt);
\fill [color=red] (4.58,3.2,3.24) circle (0.5pt);
\draw (4.5,3,3.55) node[anchor=north west] {Legend:};
\draw (4.5,3.2,3.5) node[anchor=north west] { $\mathbb{R}^+\setminus\mathbb{Q}$};
\draw (4.5,3.2,3.39) node[anchor=north west] { $\mathbb{Q} \cap [1/2,\infty)$};
\draw (4.5,3.2,3.26) node[anchor=north west] { $\mathbb{Q} \cap [0,1/2)$};

\end{tikzpicture}
\caption[Example \ref{Ex. joh1}]{$\Omega\subseteq \R^3$ in Example \ref{Ex. joh1}}
\end{figure}
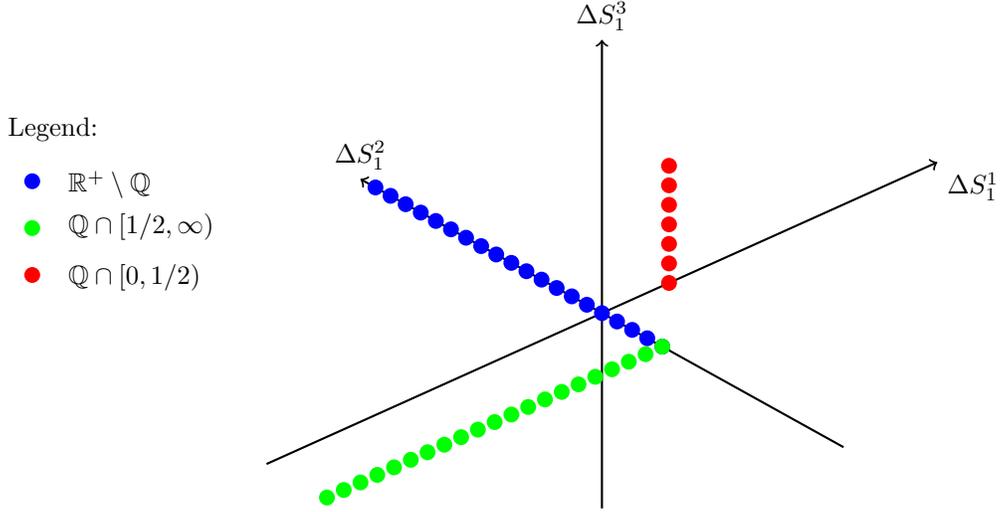

Note that Example \ref{Ex. joh1} can be easily altered to make $\Delta S_{t+1}(\Omega \cap \Sigma_t^\omega)$ compact by adding additional points. For clarity of exposition we have refrained from doing this but we conclude that Assumptions \ref{Ass. joh1} and \ref{Ass. joh2} are not sufficient for $d > 2$. We have to add a last assumption, which guarantees measurability of the corresponding Universal Arbitrage Aggregator. Intuitively it states, that locally, i.e., for every $0 \le t \le T-1$ and every $\omega \in \Omega$, there exists at most one arbitragable direction of the evolution of assets $S$, so that the first standard separator is already the Universal Arbitrage Aggregator:

\begin{Ass}\label{Ass. joh3}
For all $\omega \in \Omega$ and $0 \le t \le T-1$ we have $\xi_{t+1,\Omega \cap \Sigma_t^{\omega}}=H^*_t$, where $H^*$ is the Universal Arbitrage Aggregator of \citep{bfhmo} for the set $\Sigma_t^{\omega} \cap \Omega$.
\end{Ass}

\begin{Theorem} Suppose that $\Phi=0$ and $X\ni \omega \mapsto S_t(\omega)$ is continuous for all $1\leq t\leq T$. For an analytic $\Omega \subseteq X$ satisfying Assumptions \ref{Ass. joh1}, \ref{Ass. joh2} and \ref{Ass. joh3} the Superhedging Duality of \citep{bfhmo} extends from $\Omega^*$ to $\Omega$ for all continuous $g: X \to \R$, i.e., we have
\begin{align*}
\sup_{\Q \in \mathcal{M}_{\Omega}^f} \E_{\Q}(g)&=\inf\{x \in \R \ | \ \exists H \in \mathcal{H}(\mathbb{F}^{\mathcal{U}}) \text{ such that } x+H \circ S_T \ge g \text{ on } \Omega^* \}\\&= \inf\{x \in \R \ | \ \exists H \in \mathcal{H}(\mathbb{F}^{\mathcal{U}}) \text{ such that } x+H \circ S_T \ge g \text{ on } \Omega \}  
\end{align*}
\end{Theorem}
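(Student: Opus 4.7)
The inequality $\inf\{x : \exists H\in \mathcal{H}(\mathbb{F}^\mathcal{U}),\, x + H\circ S_T \geq g \text{ on }\Omega\}\geq \pi_{\Omega^*}(g)$ is immediate because superhedging on $\Omega\supseteq\Omega^*$ is a strictly more restrictive requirement, and the first equality of the theorem is just the Superhedging Duality of \citep{bfhmo} applied to $\Omega^*$. The plan therefore reduces to showing that any strategy superhedging $g$ on $\Omega^*$ at price $\pi_{\Omega^*}(g)$ can be modified, without additional cost and preserving universal measurability, into one which superhedges $g$ on the whole of $\Omega$. I would do this by backward induction on $t\in\{T-1,\dots,0\}$ applied to the one-step dynamic programming prices $\pi_{t,\Omega^*}(g)$ introduced in Section \ref{sec. ext}.

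\emph{One-step construction.} Fix $\omega\in\Omega$ and $0\leq t\leq T-1$. Assumption \ref{Ass. joh1} combined with the preceding lemma gives continuity of $v\mapsto\pi_{t+1,\Omega^*}(g)(v)$ and compactness of $\Sigma_t^\omega\cap\Omega^*$, so attainment as in Theorem \ref{Thm super} yields $\tilde H(\omega)\in L_\omega:=\mathrm{span}(\Delta S_{t+1}(\Sigma_t^\omega\cap\Omega^*))$ with
\[
\pi_{t,\Omega^*}(g)(\omega)+\tilde H(\omega)\cdot\Delta S_{t+1}(\tilde v)\geq\pi_{t+1,\Omega^*}(g)(\tilde v)\quad\text{for all }\tilde v\in \Sigma_t^\omega\cap\Omega^*.
\]
For $v\in\Sigma_t^\omega\cap\Omega$ decompose $\Delta S_{t+1}(v)=P(v)+Q(v)$ with $P(v)$ the orthogonal projection onto $L_\omega$; the standard separator $\xi:=\xi_{t+1,\Omega\cap\Sigma_t^\omega}$ of \citep{bfhmo} satisfies $\xi\cdot\Delta S_{t+1}(v)\geq 0$ on $\Sigma_t^\omega\cap\Omega$ with strict positivity on $\Sigma_t^\omega\cap\Omega\setminus\Omega^*$. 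If $v\notin A_t^\omega$, then $P(v)=\Delta S_{t+1}(\tilde v)$ for some $\tilde v\in\Sigma_t^\omega\cap\Omega^*$ and continuity alone closes the gap between $\tilde H(\omega)\cdot\Delta S_{t+1}(v)$ and $\pi_{t+1,\Omega^*}(g)(v)-\pi_{t,\Omega^*}(g)(\omega)$.

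\emph{Handling $A_t^\omega$ with $\xi$.} The delicate case is $v\in A_t^\omega$. Here Assumption \ref{Ass. joh2} prevents $P(A_t^\omega)$ from accumulating on $L_\omega\setminus\Delta S_{t+1}(\Sigma_t^\omega\cap\Omega^*)$; combining this with the compactness in Assumption \ref{Ass. joh1} and continuity of $g$ and of $\pi_{t+1,\Omega^*}(g)$, the shortfall $(\pi_{t+1,\Omega^*}(g)(v)-\pi_{t,\Omega^*}(g)(\omega)-\tilde H(\omega)\cdot\Delta S_{t+1}(v))^+$ is bounded uniformly over $A_t^\omega$, while $\xi\cdot\Delta S_{t+1}(v)$ stays uniformly bounded away from zero there. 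Hence a scalar $\lambda(\omega)\geq 0$ exists so that $\tilde H(\omega)+\lambda(\omega)\xi$ is a one-step superhedge of $\pi_{t+1,\Omega^*}(g)$ on all of $\Sigma_t^\omega\cap\Omega$ at the same initial cost $\pi_{t,\Omega^*}(g)(\omega)$.

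\emph{Measurable aggregation and induction.} To glue these pointwise modifications into a universally measurable process, I would invoke Assumption \ref{Ass. joh3}, which identifies $\xi_{t+1,\Omega\cap\Sigma_t^\omega}$ with the Universal Arbitrage Aggregator of \citep{bfhmo} and therefore makes $\omega\mapsto \xi_{t+1,\Omega\cap\Sigma_t^\omega}$ universally measurable; the multiplier $\lambda(\cdot)$ is then chosen measurably by a standard measurable selection argument from the continuous shortfall map. Iterating backwards from $\pi_{T,\Omega^*}(g)=g$ assembles a universally measurable $H\in\mathcal{H}(\mathbb{F}^\mathcal{U})$ superhedging $g$ on $\Omega$ starting from $\pi_{\Omega^*}(g)$. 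The main obstacle is precisely the measurable aggregation: without Assumption \ref{Ass. joh3}, several independent arbitrage directions could coexist pointwise (as in the $d=3$ counterexample of this section), and their $\omega$-selection may fail to be measurable; Assumptions \ref{Ass. joh1} and \ref{Ass. joh2} in turn are the geometric content ensuring that the shortfall on $A_t^\omega$ does not blow up and that the inductive step is well-posed via continuity of $\pi_{t+1,\Omega^*}(g)$.
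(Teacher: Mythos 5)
Your proposal follows essentially the same route as the paper: backward induction on the one-step superhedging prices, obtaining a superhedging strategy $\tilde H(\omega)$ in the span of $\Delta S_{t+1}(\Sigma_t^\omega\cap\Omega^*)$, splitting $v\in\Sigma_t^\omega\cap\Omega$ according to whether $v\in A_t^\omega$, and adding a scalar multiple $\lambda(\omega)\xi_{t+1,\Omega\cap\Sigma_t^\omega}$ of the standard separator (which Assumption \ref{Ass. joh3} makes orthogonal to the span and measurably selectable), with Assumptions \ref{Ass. joh1}--\ref{Ass. joh2} supplying the compactness/boundedness needed to choose $\lambda$. The only detail you gloss over that the paper makes explicit is the use of a Vanderbei-type uniform-Lipschitz estimate (after a Tietze extension) to quantify the shortfall when $\Delta S_{t+1}(v)$ is close to but not in $\Delta S_{t+1}(\Sigma_t^\omega\cap\Omega^*)$; otherwise the argument is the same.
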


\begin{proof}
As before, we prove the claim by backward induction over $t=0, \dots, T-1$. Let us now fix $\omega \in \Omega$. We assume $\Sigma_t^{\omega}\cap \Omega^*\neq \emptyset$ and $(\Sigma_t^{\omega}\cap \Omega) \setminus (\Sigma_t^{\omega}\cap \Omega^*) \neq \emptyset$, otherwise the claim is trivial.  We first look at the case, where $\text{proj}_{\Delta S_{t+1}(\Sigma_t^{\omega} \cap \Omega^*)}(\Delta S_{t+1}(v))$ is an element of $\Delta S_{t+1}(\Sigma_{t+1}^{\omega}\cap \Omega^*)$, i.e., there exists $v' \in \Sigma_t^{\omega}\cap \Omega^*$ such that $\text{proj}_{\Delta S_{t+1}(\Sigma_t^{\omega} \cap \Omega^*)}(\Delta S_{t+1}(v))= \Delta S_{t+1}(v')$ . Note that by Assumption \ref{Ass. joh3} the standard separator $\xi_{t+1, \Omega\cap \Sigma_t^{\omega}}$ is orthogonal to $\text{span} (\Delta S_{t+1}(\Sigma_{t}^{\omega}\cap \Omega^*))$. By definition of the superhedging price on $\Omega$ there exists an $\mathcal{F}_{t}^{\mathcal{U}}$-measurable strategy $H_{t+1}$ such that
\begin{align*}
\pi_{t,\Omega^*}(g)(v')+H_{t+1}(\omega)\Delta S_{t+1}(v') \ge \hat{\pi}_{t+1}(g)(v') \hspace{0.5cm} \text{for all } v' \in \Sigma_t^{\omega} \cap \Omega^*,
\end{align*}
where we can assume without loss of generality that $H_{t+1}(\omega) \in \text{span}(\Delta S_{t+1}(\Sigma_{t}^{\omega}\cap \Omega^*))$. Now we fix $v \in \Sigma_t^{\omega}\cap \Omega$ and $v'$ the corresponding orthogonal projection. Let $\epsilon >0$. As $\pi_{t+1,\Omega^*}(g)$ is uniformly continuous on $\Omega\cap \Sigma_t^{\omega}$, we can use \cite[Theorem 1]{vanderbei1991uniform} (in connection with Tietze's extension theorem to extend the domain to a convex set) in order to find $\delta >0$ such that
\begin{align*}
\epsilon+\pi_{t+1,\Omega^*}(g)(v')+ \frac{S_{t+1}(v)-S_{t+1}(v')}{\delta/ \epsilon} \xi_{t+1,\Sigma_t^{\omega} \cap \Omega}(v) \ge   \pi_{t+1,\Omega^*}(g)(v),
\end{align*}
where $\delta$ is chosen such that for all $w, \tilde{w} \in \Sigma_{t}^{\omega}\cap \Omega$ we have $|\pi_{t+1,\Omega^*}(g)(w)-\pi_{t+1,\Omega^*}(g)(\tilde{w})| \le \epsilon$ whenever $|S_{t+1}(w)-S_{t+1}(\tilde{w}))|<\delta$. Note that $\Delta S_{t+1}(v)-\Delta S_{t+1}(v')$ is orthogonal to $H_{t+1}(\omega)$ and
\begin{align*}
&\epsilon+\pi_{t,\Omega^*}(g)(g)(v)+\left(H_{t+1}(\omega)+\frac{\xi_{t+1,\Sigma_t^{\omega}\cap \Omega}(v)}{\delta/\epsilon}\right)(\Delta S_{t+1}(v')+\Delta S_{t+1}(v)-\Delta S_{t+1}(v'))\\
\ge &\ \epsilon+\pi_{t+1,\Omega^*}(g)(v')+ \frac{S_{t+1}(v)-S_{t+1}(v')}{\delta/ \epsilon} \xi_{t+1,\Sigma_t^{\omega} \cap \Omega}(v)
\ge  \pi_{t+1,\Omega^*}(g)(v),
\end{align*}

Next we use the assumption that $A_t^{\omega}$ is bounded and has no points of convergence in $\text{span}(\Delta S_{t+1}(\Sigma_t^{\omega}\cap \Omega^*))$ outside the set $\Delta S_{t+1}(\Sigma_t^{\omega}\cap \Omega^*)$. In particular the continuous functions $\pi_{t+1,\Omega^*}(g)$ and $H_{t+1}\Delta S_{t+1}$ are bounded on $A_t^{\omega}$. There exists $\delta >0$ such that for all $v \in \Delta S_{t+1}(\Sigma_t^{\omega} \cap \Omega^*),$ $\tilde{v} \in A_t^{\omega}$ with $|v-\tilde{v}|< \delta$ we still have 
\begin{align*}
\epsilon+\pi_{t,\Omega^*}(g)(\omega)+H_{t+1}(\omega)\Delta S_{t+1}(\tilde{v}) \ge \pi_{t+1,\Omega^*}(g)(\tilde{v}).
\end{align*}
By assumption there exists $\tilde{\delta}>0$ such that $\text{dist}(\tilde{v}, \text{span}(\Delta S_{t+1}(\Sigma_t^{\omega}\cap \Omega^*)))>\tilde{\delta}$ for all $\tilde{v} \in A_t^{\omega}$ with $\text{dist}(\tilde{v}, \Delta S_{t+1}(\Sigma_t^{\omega}\cap \Omega^*))>\delta$. Define $\pi_{\max}=\sup_{v \in A_t^{\omega}} \pi_{t+1,\Omega^*}(g)(v)<\infty$ and $C=\inf_{v \in A_t^{\omega}} H_{t+1}(\omega) \Delta S_{t+1}(v)+\pi_{t,\Omega^*}(g)(\omega)>-\infty$. Now we note that
\begin{align*}
\epsilon+\pi_{t+1,\Omega^*}(g)(\tilde{v})+H_{t+1}(\tilde{v})\Delta S_{t+1}(\tilde{v})+ \frac{|\pi_{\max}|+|C|}{\tilde{\delta}}\xi_{t+1, \Sigma_{t}^{\omega}\cap \Omega}(v) \Delta S_{t+1}(\tilde{v})  \ge\pi_{t+1,\Omega^*}(g)(\tilde{v})
\end{align*}
for all $\tilde{v} \in A_t^{\omega}$. This concludes the proof.
\end{proof}

%

\subsection{Comparison of Strong and Uniformly Strong Arbitrage}\label{Sec. compar}
We take now a closer look at the notions $\textbf{SA}(\Omega)$ and $\textbf{USA}(\Omega)$ and establish their equivalence in specific market setups. Clearly every Uniformly Strong Arbitrage is a Strong Arbitrage. In general the opposite assertion is not true: take for example $d=1$, $S_0=1$, $S_1(\omega)=\omega$, $\Omega=(1,2]$, then every $H_1>0$ is a Strong Arbitrage, but there do not exist any Uniformly Strong Arbitrages. This simple example can be generalised: a one-period market in the canonical setting with $S_0=1$ and an open convex set $\Omega$ such that $\{1\} \cap \Omega =\emptyset$ and $1 \in \bar{\Omega}$ admits a Strong Arbitrage but exhibits no Uniformly Strong Arbitrages. On the level of superhedging prices a Uniformly Strong Arbitrage on $\Omega$ corresponds to $\pi_{\Omega}(0)=-\infty$. For a financial market which exhibits a Strong Arbitrage but no uniformly Strong Arbitrages, the Pricing-Hedging duality cannot hold (as there are no martingale measures supported on $\Omega$) but $\pi_{\Omega}(0)=0$. In conclusion, the difference between Uniformly Strong Arbitrage and Strong Arbitrage can be seen as a property describing the boundary of the prediction set $\Omega$ and thus manifests itself in the boundary behaviour of the superhedging functional $$S_0 \mapsto \inf\{ x \in \R \ | \ \exists H \in \R^d \text{ s.t. } x+H (S_1-S_0) \ge 0 \text{ on } \Omega\}.$$ As it is an upper semicontinuous function, it takes the value zero on the boundary of $\Omega$, while its lower semicontinuous version takes the value $-\infty$. Nevertheless the two notions agree in specific cases, which we now explore.\\
We assume the canonical setting $X_1=\R_+^d$, $S_0(\omega)=s_0$ and set $\mathcal{F}_t=\mathcal{F}^\nf_t$ for all $0\le t\le T$. In this section we allow for countably many statically traded options, $\Lambda=\mathbb{N}$, but only of European type, $\Phi=\{\phi_n=\phi_n(S_T)\ | \ n \in \N\}$, with real-valued continuous payoffs and a common maturity $T$. We write $c_{00}=c_{00}(\Nb)$ for simplicity. 
We fix a closed subset $\Omega \subseteq (\R_+^d)^T$ and recall that martingale measures on $\Omega$ calibrated to $\Phi$ are denoted by $\mathcal{M}_{\Omega,\Phi}(\mathbb{F})$. We define $|S(\omega)|_1\coloneqq \sum_{t=1}^T \sum_{k=1}^d |S_t^k(\omega)|$ and 
denote by $C^b_{|S|_1}(\Omega)$ the space of real-valued continuous functions $f:\Omega \mapsto \R$ such that $$\sup_{\omega\in \Omega} \frac{|f(\omega)|}{|S(\omega)|_1 \vee 1}<\infty.$$  Finally, we define the calibrated supermartingale measures as
\begin{align*}
\mathcal{SM}_{\Omega, \Phi}(\mathbb{F})\coloneqq \left\{\Q \in \mathcal{P}(\Omega) \ | \ \E_{\Q}[\phi_n]\le 0 \ \forall n \in \N, \ \E_{\Q}[S_t|\mathcal{F}_{t-1}]\le S_{t-1}\textrm{ a.s.} \ \forall t\leq T\right\}.
\end{align*}
The following theorem can be seen as a unification of \citep[Theorem 1.3]{acciaio2013model}, \citep[Prop.\ 2.2, p.6]{cox2011robusta} and \citep[Cor.\ 4.6]{bartl2017duality2}. We also refer to \citep[Thm.\ 3]{bfhmo}, who extend \citep[Thm.\ 1.3]{acciaio2013model} under the assumption $\Omega= \Omega^*$ and to \citep[Thm.\ C.5]{riedsonbuz17} for a general discussion in the case $\Phi=0$. In contrast to the work of \cite{{acciaio2013model}}, we do not need to assume the existence of a convex superlinear payoff $g$, which might be artificial in some settings, but explicitly enforce tightness of martingale measures through the $\textbf{WFLVR}(\Omega)$ condition.
\begin{Theorem}
The following hold:
\begin{enumerate} 
\item \textbf{SA}$(\Omega)$ $\Leftrightarrow$ $\textbf{USA}(\Omega)$.
\item Assume $\phi_n \in C^b_{|S|_1}(\Omega)$, no short-selling in any of the assets and $$\lim_{|S_T|_1 \to \infty}\frac{(\phi_{n}(S_T))^{-}}{|S_T|_1}=0$$ for all $n \in \N$. Then
\begin{align*}
&\textbf{SA}(\Omega)\Leftrightarrow \textbf{USA}(\Omega) \Leftrightarrow \textbf{WFLVR}(\Omega)\Leftrightarrow
\mathcal{SM}_{\Omega, \Phi}(\mathbb{F})= \emptyset.
\end{align*}
\item As in \textit{(2)} assume that $\phi_n \in C^b_{|S|_1}(\Omega)$. Furthermore assume that for every sequence $(\omega_n)_{n\in \N}$ with $\lim_{n \to \infty}|S(\omega_n)|_1=\infty$, there exists a sequence $(h^k,H^k)_{k \in \N}$ of trading strategies, a constant $C>0$ and a sequence $(p^k)_{k\in \N}$ such that
\begin{itemize}
\item $\lim_{k \to \infty} \lim_{n \to \infty} \frac{(h^k\cdot \Phi+H^k\circ S_T)(\omega_n)}{|S(\omega_n)|_1\vee 1}>0$ and
\item  $|h^k\cdot \Phi+H^k\cdot S_T| \le C(|S|_1\vee 1)$ on $\Omega$ for all $k\in \N$,
\item $\lim_{k\to \infty}(h^k\cdot \Phi+H^k\circ S_T)(\omega)=-\lim_{k\to \infty}p^k$ for all $\omega \in \Omega$.
\end{itemize}
Then $$\textbf{SA}(\Omega)\Leftrightarrow \textbf{USA}(\Omega) \Rightarrow \textbf{WFLVR}(\Omega) \Leftrightarrow \mathcal{M}_{\Omega, \Phi}(\mathbb{F})= \emptyset,$$
but in general $\textbf{WFLVR}(\Omega)$ does not imply $\textbf{SA}(\Omega)$.
\end{enumerate}
\end{Theorem}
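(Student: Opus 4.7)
For part (1), the implication $\textbf{USA}(\Omega) \Rightarrow \textbf{SA}(\Omega)$ is immediate from Lemma \ref{lem easy}. The converse is the substantive direction. Given a Strong Arbitrage $(h,H) \in \adm$, the payoff $g := h \cdot \Phi + H \circ S_T$ is continuous (since in the canonical setting $\omega \mapsto S_t(\omega)$ is the coordinate map and each $\phi_n$ is continuous) and strictly positive on the closed set $\Omega \subseteq (\R_+^d)^T$. On any compact subset $K \subseteq \Omega$ continuity and strict positivity force $\inf_K g > 0$, so the only possible obstruction to $\textbf{USA}$ is unboundedness of $\Omega$. Assuming $\inf_\Omega g = 0$ for contradiction, one picks $\omega^n \in \Omega$ with $g(\omega^n) \to 0$; if $(|S(\omega^n)|_1)_n$ were bounded, closedness of $\Omega$ and continuity of $g$ would produce a point of $\Omega$ at which $g$ vanishes, contradicting $\textbf{SA}$. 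Hence $|S(\omega^n)|_1 \to \infty$, and the remaining step is to analyse $g$ along the recession cone of $\Omega$: rescaling $S(\omega^n)$ by $|S(\omega^n)|_1$ and passing to the limit yields a recession direction of $\Omega$ along which the leading-order behaviour of $g$ is non-negative, and a second-order estimate then forces $g$ to stay bounded away from zero along this direction.

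For part (2), (1) and Lemma \ref{lem easy}(2) give $\textbf{SA}(\Omega) \Leftrightarrow \textbf{USA}(\Omega) \Rightarrow \textbf{WFLVR}(\Omega)$, so it remains to establish $\textbf{WFLVR}(\Omega) \Rightarrow \mathcal{SM}_{\Omega,\Phi}(\mathbb{F}) = \emptyset$ and $\mathcal{SM}_{\Omega,\Phi}(\mathbb{F}) = \emptyset \Rightarrow \textbf{USA}(\Omega)$. For the former, let $(h^n, H^n)$ be a WFLVR sequence with lower-bounding strategy $(h,H)$ and constant $c$, and let $\Q \in \mathcal{SM}_{\Omega,\Phi}(\mathbb{F})$. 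No short-selling gives $H^n \ge 0$ componentwise, and combined with the supermartingale property this yields $\E_\Q[H^n \circ S_T] \le 0$; the calibration gives $\E_\Q[h^n \cdot \Phi] \le 0$, so $\E_\Q[h^n\cdot \Phi + H^n\circ S_T] \le 0$ for every $n$. The growth conditions $\phi_n \in C^b_{|S|_1}(\Omega)$ and $(\phi_n)^- / |S_T|_1 \to 0$ as $|S_T|_1 \to \infty$, combined with the $-c$ lower bound from the WFLVR definition, provide an integrable lower bound so that Fatou's lemma yields $\E_\Q[\lim_n (h^n\cdot \Phi + H^n \circ S_T)] \le 0$, contradicting strict positivity of the pointwise limit. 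For the converse, $\textbf{NUSA}$ means the convex cone $\mathcal{C} := \{h\cdot\Phi + H\circ S_T - \eta : (h,H) \in \adm,\ \eta \ge 0\}$ inside $C^b_{|S|_1}(\Omega)$ is disjoint from the open set of functions bounded below by a positive constant; Hahn--Banach separation in $C^b_{|S|_1}(\Omega)$ and Riesz representation then produce a probability measure on $\Omega$ whose defining inequalities, read off the cone, are exactly those of $\mathcal{SM}_{\Omega,\Phi}(\mathbb{F})$.

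For part (3), the chain $\textbf{SA}(\Omega) \Leftrightarrow \textbf{USA}(\Omega) \Rightarrow \textbf{WFLVR}(\Omega)$ is as before, and $\textbf{WFLVR}(\Omega) \Rightarrow \mathcal{M}_{\Omega,\Phi}(\mathbb{F}) = \emptyset$ reuses the Fatou argument of (2) (now the calibration and martingale property give $\E_\Q[h^n \cdot \Phi + H^n \circ S_T] = 0$ directly, without needing no-short-selling). The non-trivial direction $\mathcal{M}_{\Omega,\Phi}(\mathbb{F}) = \emptyset \Rightarrow \textbf{WFLVR}(\Omega)$ is again a Hahn--Banach separation in $C^b_{|S|_1}(\Omega)$: the absence of calibrated martingale measures lets one separate a strictly positive payoff from the cone of hedgeable payoffs, and the three bullet-point properties of the auxiliary sequence $(h^k, H^k)$ provide precisely the tightness needed to prevent mass of the separating measure from escaping to infinity and to upgrade a limiting supermartingale into a genuine martingale measure, while simultaneously furnishing the sequence witnessing $\textbf{WFLVR}(\Omega)$. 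Finally, that $\textbf{WFLVR}(\Omega) \not\Rightarrow \textbf{SA}(\Omega)$ in general is shown by an explicit one-period canonical example in which a WFLVR sequence converges only in the limit to a strictly positive payoff, with no single strategy realising strict pointwise positivity.

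The principal obstacle across all three parts is the interplay between the weighted topology on $C^b_{|S|_1}(\Omega)$ and the potential unboundedness of $\Omega$. In part (1), the recession-direction analysis is delicate because the predictable strategy $H$ may depend on past values of $S$, so the asymptotic slope of $g$ is not a single vector but a piecewise-linear object that must be tracked through all $T$ periods. In parts (2) and (3), the subtle step is identifying the Hahn--Banach separating functional with a well-behaved calibrated (super)martingale measure, and for (3) the auxiliary condition must be sharp enough to rule out loss of mass at infinity yet weak enough to hold in standard situations.
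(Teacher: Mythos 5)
Your outline of part (1) contains a genuine gap that invalidates the argument. You argue by contradiction under the assumption $\inf_\Omega g = 0$, and your recession-cone/second-order-estimate step would, if it went through, show that $g = h\cdot\Phi + H\circ S_T$ itself is bounded away from zero --- i.e.\ that the given Strong Arbitrage strategy is automatically a Uniformly Strong Arbitrage. This is false in the present setting: take $T=1$, $d=1$, $s_0 = 1$, $\Omega = [2,\infty)$, $\Phi = \{\phi\}$ with $\phi(S_1) = e^{-S_1}$ and the strategy $(h,H) = (1,0)$. Then $g(\omega) = e^{-\omega}>0$ is a Strong Arbitrage with $\inf_\Omega g = 0$, yet a Uniformly Strong Arbitrage plainly exists ($H=1$). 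Any ``second-order estimate'' along the recession direction would have to produce a lower bound for $g$ itself, which here decays to zero. The point you are missing is that the implication $\textbf{SA}(\Omega) \Rightarrow \textbf{USA}(\Omega)$ is about existence of \emph{some} USA strategy, not about the given SA strategy. The paper's proof is constructive and modifies the strategy: it rescales $(h,H)$ so that $g\ge 2|s_0|_1$ on the compact truncation $K\cap\Omega$ and then replaces $H$ by $H + \textbf{e}$, where $\textbf{e}=(1,\dots,1)$. The added long position in every asset costs nothing to obtain the $\ge 0$ inequality on the compact part (after scaling), and on $\Omega\setminus K$ the linear growth of $\textbf{e}\cdot(S_T - S_0)$ in $|S_T|_1$ produces the uniform lower bound. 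Without this modification, the closedness of $\Omega$ is not enough.

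For parts (2) and (3) the high-level strategy you describe --- Hahn--Banach separation of the payoff cone from the strictly positive functions in $C^b_{|S|_1}(\Omega)$ --- does agree with the paper's route, but your claim that ``Riesz representation then produce[s] a probability measure'' papers over the central technical obstacle. The dual of $C^b_{|S|_1}(\Omega)$ is a space of finitely additive measures, and the separating functional $\mu$ must be decomposed as $\mu = \mu^r + \mu^s$ into its countably additive and purely finitely additive parts. In (2) the paper must show $\mu^r\ne 0$; this uses specifically that no short-selling makes $\textbf{e}\in\R_+^d$ an admissible strategy, that $\textbf{e}\cdot(S_1-S_0)/( |S|_1\vee 1)\to 1$ at infinity so that $\int 1\,d\mu^s >0$ would yield a contradiction, and that $\int (\phi_n)^-/(|S|_1\vee 1)\,d\mu^s = 0$ from the growth condition, so the supermartingale-measure inequalities survive the passage from $\mu$ to $\mu^r$. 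In (3) one must show $\mu^s = 0$ outright, which is exactly what the three bullet-point conditions on the auxiliary sequence $(h^k,H^k,p^k)$ are engineered to force. Your sketch names the right mechanisms (tightness, preventing mass escaping to infinity) but does not supply the argument, and in particular does not exhibit the decisive use of symmetry of the cone $c_{00}\times\R^d$ in (3) that makes the integral of the hedgeable payoff against $\mu$ vanish rather than merely be nonpositive. Finally, your last paragraph about multiperiod difficulties is consistent with the paper, which only proves the $T=1$ case and defers the dynamic-programming extension to a separate reference.
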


\begin{Rem}
\begin{enumerate}
\item The case $|\Phi|<\infty$ is covered in \citep{bouchard2015arbitrage,bfhmo}), while the case $|\Phi|=\infty$ is not. The basic idea in both works is to inductively construct a martingale measure calibrated to a finite number of options.
\item Contrary to the case $|\Phi|<\infty$ (see \citep[proof of Theorem 1, p.1050]{bfhmo}), the set $\mathcal{M}_{\Omega,\Phi}(\mathbb{F})$ might not necessarily contain any finitely supported martingale measures.
\item An example showing that $\textbf{WFLVR}(\Omega)$ does not imply $\textbf{SA}(\Omega)$ is given in \citep[Prop. 2.2]{cox2011robusta}.
\item A special but important case of \textit{(3)} is $T=1,d=1$ and $\phi_n(S_1)=(S_1-n)^+-p^n$, where $p^n\ge 0$. In this case we can set $H^k=0$, $h^k=e_k$ for all $k\in \N$, where $e_k$ is the $k$th unit vector and note that $$\lim_{k \to \infty} \lim_{n \to \infty} \frac{(h^k\cdot \Phi+H^k(S_1-S_0))(\omega_n)}{|S(\omega_n)|_1\vee 1}=\lim_{k \to \infty} \lim_{n \to \infty} \frac{(S_1(\omega_n)-k)^+-p^k}{|S(\omega_n)|_1}=1>0$$ and
$$\lim_{k\to \infty} (S_1(\omega)-k)^+-p^k=-\lim_{k\to \infty} p^k,$$ in particular all three conditions in \textit{(3)} are satisfied.
\end{enumerate}
\end{Rem}

\begin{proof}\label{proof:1}
For simplicity of exposition we only give the proof for $T=1$. This conveys the important ideas, while the multiperiod case extends these via a dynamic programming approach and can found in \citep{WieselDPhil}.
\\
Regarding \textit{(1)}, clearly $\textbf{USA}(\Omega)\Rightarrow\textbf{SA}(\Omega)$, so we show $\textbf{SA}(\Omega)\Rightarrow\textbf{USA}(\Omega)$. Let $(h,H)\in \R^k \times \R^d$ be a Strong Arbitrage. We show that it is actually a Uniformly Strong Arbitrage. For $x\in \R^d_+$ we denote by $|x|_1\coloneqq \sum_{i=1}^d x^i$ the $\ell_1$-norm of $x$ and define the compact set $K=[0,s_0^1+2|s_0|_1] \times[0,s_0^2+2|s_0|_1]\times \dots \times [0,s_0^d+2|s_0|_1]$. Then, as $S_1 \mapsto h\cdot \Phi(S_1)+H(S_1-S_0)$ is continuous and positive on a compact set $\Omega \cap K$, there exists $\epsilon>0$ such that
\begin{align*}
h\cdot \Phi(S_1)+H\cdot(S_1-S_0) \ge \epsilon \quad \text{on }K \cap \Omega.
\end{align*}
Scaling $(h,H)$ suitably we can without loss of generality assume take $\epsilon=2|s_0|_1$. Let $\textbf{e}=(1, \dots,1)$ be the row unit vector in $\R^d$. Then
\begin{align}\label{eq. trivial}
h\cdot \Phi(S_1)+(H+\textbf{e})\cdot (S_1-S_0) \ge 2|s_0|_1-|s_0|_1=|s_0|_1 \quad \text{on }K \cap \Omega.
\end{align}
Furthermore on $\Omega \setminus K$ we have 
\begin{align*}
h\cdot \Phi(S_1)+(H+\textbf{e})\cdot (S_1-S_0) \ge \textbf{e}\cdot (S_1-S_0)\ge 2|s_0|_1-|s_0|_1=|s_0|_1.
\end{align*}
Now we show \textit{(2)}. Clearly the relation $\textbf{USA}(\Omega)\Rightarrow\textbf{WFLVR}(\Omega)$ holds and by  \textit{(1)} also $\textbf{SA}(\Omega)\Leftrightarrow\textbf{USA}(\Omega)$. Further, $\textbf{WFLVR}(\Omega)$ readily implies $\mathcal{SM}_{\Omega, \Phi}(\mathbb{F})= \emptyset$ since otherwise if $\Q\in\mathcal{SM}_{\Omega, \Phi}(\mathbb{F})$ then, by Fatou's lemma, 
\begin{align*}
0\ge\liminf_{n \to \infty} \E_{\Q}[h^n \cdot \Phi+H^n (S_1-S_0)] \ge \E_{\Q}[\liminf_{n \to \infty} h^n \cdot \Phi+H^n (S_1-S_0)]>0,
\end{align*}
a contradiction. Next we show $\textbf{NUSA}(\Omega) \Rightarrow (\mathcal{SM}_{\Omega, \Phi}(\mathbb{F})\neq \emptyset) $ following closely the argument in \citep[proof of Prop. 2.3 and Theorem 1.3, pp.\ 240-242]{acciaio2013model}. We denote by $c^+_{00}$ the subset of all non-negative sequences in $c_{00}$. We define the set 
\begin{align*}
K:= \left\{h \cdot \Phi(S_1)+H(S_1-S_0) \ \big| \ (h,H) \in c_{00}^+ \times \R^d_+\right\} \subseteq C^b_{|S|_1}(\Omega).
\end{align*}
Note that $K$ is convex and non-empty. Furthermore denote the positive cone of  $C^b_{|S|_1}(\Omega)$ by
\begin{align*}
C_{++}(\Omega)= \left\{f \in C^b_{|S|_1}(\Omega)  \ \bigg| \ \inf_{\omega \in\Omega} \frac{f(\omega)}{|S(\omega)|_1 \vee 1}>0\right\}.
\end{align*}
By $\textbf{NUSA}(\Omega)$ we have $K \cap C_{++}(\Omega)=\emptyset$. An application of Hahn-Banach theorem yields existence of a positive measure $\mu=\mu^r+\mu^s$ such that
\begin{align*}
\int_{\Omega} \frac{f}{|S|_1 \vee 1}\,d\mu &>0 \quad \text{for all }f \in C_{++}(\Omega),\\
\int_{\Omega} \frac{f}{|S|_1 \vee 1}\,d\mu&\le 0 \quad \text{for all }f \in K.
\end{align*}
We now aim to show that the normalised measure $\Q$ given by  $$d\Q:=\frac{1}{|S|_1 \vee 1} \left(\int \frac{1}{|S|_1 \vee 1}\,d\mu^r \right)^{-1}d\mu^r$$ is an element of $\mathcal{SM}_{\Omega, \Phi}$. For this let us first assume that $\mu^r=0$. Then $$\int_{\Omega} \frac{ \textbf{e} (S_1-S_0)}{|S|_1 \vee 1}\,d\mu=\int_{\Omega} \frac{|S|_1-|S_0|_1}{|S|_1 \vee 1}\, d\mu^s=\int_{\Omega} 1 \,d\mu^s>0$$
as $\mu$ is positive, which is a contradiction.
As $\int_{\Omega} \frac{(\phi_n(S_1))^{-}}{|S|_1 \vee 1}\,d\mu^s=0$, we conclude
\begin{align*}
\int_{\Omega} \frac{\phi_n(S_1)}{|S|_1 \vee 1} \, d\mu^r \le \int_{\Omega} \frac{\phi_n(S_1)}{|S|_1 \vee 1} \,d\mu \le 0 \quad \text{ for all }n \in \N.
\end{align*}
Furthermore
\begin{align*}
\int_{\Omega} \frac{S_1-S_{0}}{|S|_1 \vee 1} \, d\mu^r = 0 
\end{align*}
and thus $\textbf{NUSA}(\Omega)\Rightarrow \mathcal{SM}_{\Omega, \Phi}\neq \emptyset $ follows. \\
Lastly we show \textit{(3)}. For this we follow the same construction as in \textit{(2)}. In particular redefining 
\begin{align*}
K:= \bigg\{h \cdot \Phi(S_1)+H(S_1-S_0) \ \bigg| \ (h,H) \in c_{00} \times \R^d\bigg\} \subseteq C^b_{|S|_1}(\Omega).
\end{align*} we note that again by $\textbf{NUSA}(\Omega)$ we have $K \cap C_{++}(\Omega)=\emptyset$. Thus all that is left to show is $\mu^s=0$.
Let us assume towards a contradiction $\mu^s\ne 0$ and take $(h^k,H^k)_{k\in \N}$ such that 
\begin{align}\label{eq:limit}
\lim_{k \to \infty}\int_{\Omega} \frac{h^k\cdot \Phi(S_1)+H^k(S_1-S_0)}{|S|_1 \vee 1}d\mu^s>0.
\end{align}
Then by symmetry of $K$ and the same reasoning as in \textit{(2)} we have
\begin{align}\label{eq. calls}
\int_{\Omega} \frac{h^k\cdot \Phi(S_1)+H^k(S_1-S_0)}{|S|_1 \vee 1}\,d\mu=0 \quad \text{ for all }k\in \N.
\end{align}
Using \eqref{eq:limit} and \eqref{eq. calls}
\begin{align}\label{eq:new}
\lim_{k \to \infty}\int_{\Omega} \frac{h^k\cdot \Phi(S_1)+H^k( S_1-S_0)}{|S|_1 \vee 1}d\mu^r=-\lim_{k \to \infty}\int_{\Omega} \frac{h^k\cdot \Phi(S_1)+H^k(S_1-S_0)}{|S|_1 \vee 1}d\mu^s<0.
\end{align}
Note that for a sequence 
$(p^k)_{n\in \N}$ with
$$\lim_{k\to \infty}h^k\cdot \Phi(S_1)+H^k(S_1-S_0)=-\lim_{k\to \infty}p^k$$ for all $\omega \in \Omega$ we need to have by no \textbf{WFLVR}($\Omega$) that $\lim_{k\to \infty}p^k=0$, so the LHS of \eqref{eq:new} is equal to zero, a contradiction. 
\end{proof}

\section{Technical results and proofs} \label{sec:proofs}

\subsection{Proof of Theorems \ref{Thm. bn vs bfhmo} and \ref{Thm. S}}\label{subseq. ftap}

We start with the following technical observation:
\begin{Prop}\label{prop 2}
Let $\Omega$ be analytic. Then the FTAP of \citep{bouchard2015arbitrage} implies:
\begin{align*}
\textrm{\textbf{N1pA}}(\Omega,\mathbb{F}^{\mathcal{U}}) \Leftrightarrow \Omega= \Omega^*_{\Phi}
\end{align*}
\end{Prop}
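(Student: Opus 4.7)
My plan is to prove both directions by reducing to the Bouchard--Nutz FTAP applied to the specific set of priors $\mathfrak{P} := \mathcal{P}^f(\Omega)$, the set of all finitely supported probability measures on $\Omega$. Analyticity of $\Omega$ ensures (APS), as noted by the authors right after Theorem \ref{Thm. S}.

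For the easy direction ($\Leftarrow$), I would suppose $\Omega = \Omega^*_\Phi$ and let $(h,H) \in \Ac_\Phi(\Fb^\Uc)$ satisfy $h \cdot \Phi + H \circ S_T \ge 0$ on $\Omega$, with strict inequality at some $\omega^* \in \Omega$. Then $\omega^* \in \Omega^*_\Phi$ produces $\Q \in \mathcal{M}^f_{\Omega,\Phi}$ with $\Q(\{\omega^*\}) > 0$. Since $\Q$ is a calibrated martingale measure (and the filtrations $\mathbb{F}^\nf$, $\mathbb{F}^\Uc$, $\mathbb{F}^M$ generate the same martingale measures, by the remark in Section \ref{sec 1}), one has $\E_\Q[h \cdot \Phi + H \circ S_T] = 0$. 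But the integrand is $\ge 0$ on $\mathrm{supp}(\Q) \subseteq \Omega$ and strictly positive at $\omega^*$ with $\Q(\{\omega^*\})>0$, a contradiction.

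For the main direction ($\Rightarrow$), I would assume $\textbf{N1pA}(\Omega, \mathbb{F}^\Uc)$ and take $\mathfrak{P} = \mathcal{P}^f(\Omega)$. The heart of the proof lies in three identifications. First, a universally measurable set $N$ is $\mathfrak{P}$-polar iff $N \cap \Omega = \emptyset$, because each $\omega \in \Omega$ supports $\delta_\omega \in \mathfrak{P}$. Second, $\Qc_{\mathfrak{P},\Phi} = \mathcal{M}^f_{\Omega,\Phi}$: if $\Q \ll \P$ for some $\P \in \mathcal{P}^f(\Omega)$ then $\Q$ is finitely supported on $\Omega$, while conversely any $\Q \in \mathcal{M}^f_{\Omega,\Phi}$ is itself in $\mathfrak{P}$ and dominates itself. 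Third, using the first point, $\textbf{N1pA}(\Omega, \mathbb{F}^\Uc)$ is equivalent to $\textbf{NA}(\mathfrak{P}, \mathbb{F}^\Uc)$, since both the pathwise inequality and the strict positivity translate directly via Dirac measures. The BN FTAP then guarantees that for every $\P \in \mathfrak{P}$ there exists $\Q \in \mathcal{M}^f_{\Omega,\Phi}$ with $\P \ll \Q$; applying this to $\P = \delta_\omega$ for arbitrary $\omega \in \Omega$ yields $\Q(\{\omega\}) > 0$, i.e.\ $\omega \in \Omega^*_\Phi$. Combined with the inclusion $\Omega^*_\Phi \subseteq \Omega$ (immediate from the definition), we conclude $\Omega = \Omega^*_\Phi$.

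The one potentially delicate point is verifying (APS) for $\mathfrak{P} = \mathcal{P}^f(\Omega)$ from analyticity of $\Omega$: the sets $\mathfrak{P}_t(\omega)$ should be the finitely supported measures on the appropriate $t$-th section of $\Omega$, and analyticity of their graph follows from analyticity of the sections together with the fact that the finite-support condition is preserved under measurable selection. This is taken for granted in the paper, so the real substance of the argument is the translation dictionary in the three identifications above: once pathwise and quasi-sure notions on $\Omega$ versus $\mathcal{P}^f(\Omega)$ are aligned, both directions reduce to a direct invocation of BN.
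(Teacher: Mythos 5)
Your proposal is correct and largely parallel to the paper's argument, but there are two noteworthy differences. For the $\Rightarrow$ direction you proceed just as the paper does: verify (APS) for $\mathfrak{P}=\mathcal{P}^f(\Omega)$, identify $\Qc_{\Pfr,\Phi}=\mathcal{M}^f_{\Omega,\Phi}$ and $\textbf{N1pA}(\Omega)$ with $\textbf{NA}(\mathfrak{P})$, then invoke the Bouchard--Nutz FTAP on $\P=\delta_\omega$. You make explicit a ``translation dictionary'' (polar sets $\leftrightarrow$ sets disjoint from $\Omega$; $\Qc_{\Pfr,\Phi}=\mathcal{M}^f_{\Omega,\Phi}$; $\textbf{N1pA}\leftrightarrow\textbf{NA}$) that the paper uses only implicitly (it writes ``In particular $\Q\in\mathcal{M}^f_\Omega$'' without spelling out the identification); your version is arguably cleaner on this point. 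For the $\Leftarrow$ direction your argument is genuinely different: you give a self-contained pathwise proof by taking an alleged one-point arbitrage, picking $\Q\in\mathcal{M}^f_{\Omega,\Phi}$ charging the witness point, and contradicting $\E_\Q[h\cdot\Phi+H\circ S_T]=0$, whereas the paper re-invokes the Bouchard--Nutz FTAP (constructing for each finitely supported $\P$ a dominating $\Q$ by averaging, and then appealing to the FTAP implication from existence of dominating martingale measures to no arbitrage). Your route is more elementary and avoids the second FTAP invocation.

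One inaccuracy in your commentary: you write that the verification of (APS) for $\mathcal{P}^f(\Omega)$ ``is taken for granted in the paper,'' but in fact the bulk of the paper's proof of this very proposition is devoted to a careful measurability argument (via the maps $\Sigma,F,G,H$ and unions over $n\in\N$ of images of analytic sets) establishing that $\hat{\mathfrak{P}}_t(\omega)=\mathcal{P}^f(\text{proj}_{t+1}(\Omega\cap\Sigma_t^\omega))$ has analytic graph. The remark after Theorem \ref{Thm. S} that you cite is a forward reference to this proof, not a place where the fact is independently established. This does not affect the correctness of your proof, but it understates what needs to be checked before the Bouchard--Nutz machinery can be applied; if you were to write this up fully, the (APS) verification would be the technical heart of the argument rather than a footnote.
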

\begin{proof}
Set $\hat{\mathfrak{P}}\coloneqq \mathcal{P}^f(\Omega)$. To apply the FTAP of \citep{bouchard2015arbitrage} we only need to show that $\hat{\mathfrak{P}}_t(\omega)\coloneqq \mathcal{P}^f(\text{proj}_{t+1}(\Omega \cap \Sigma_t^{\omega}))$ has analytic graph:
we therefore fix $n \in \N$ and consider the Borel measurable function
\begin{align*}
\Sigma\colon X^n \to X^n \times \left(\R_+^d\right)^{(t+1)n} \hspace{0.5cm} (\omega^1, \dots \omega^n) \mapsto (\omega^1, \dots, \omega^n, S_{0:t}(\omega^1), \dots, S_{0:t}(\omega^n))
\end{align*} 
and note that the image $\Sigma(\Omega^n)$
is analytic, since $\Omega$ is analytic and the image of an analytic set under a Borel measurable map as well as the Cartesian product of analytic sets is analytic (see \citep[Prop. 7.38 \& 7.40, p. 165]{bertsekas1978stochastic}). Next we consider the continuous function 
\begin{align*}
F\colon &X^n \times \left(\R_+^d\right)^{(t+1)} \to X^n \times \left(\R_+^d\right)^{(t+1)n} \\ 
&(\omega^1, \dots \omega^n, x) \mapsto (\omega^1, \dots, \omega^n, x, \dots ,x).
\end{align*}
Note that 
\begin{align*}
F\left(X^n \times \left(\R_+^d\right)^{(t+1)} \right) \cap \Sigma(\Omega^n)
\end{align*}
is analytic and as projections of analytic sets are analytic
\begin{align*}
A_n \coloneqq \{ (\omega, \tilde{\omega}_1, \dots ,\tilde{\omega}_n) \ | \ &\omega \in \Omega,\ \tilde{\omega}_i \in \text{proj}_{t+1}(\Omega \cap \Sigma_t^{\omega}), \ i=1,\dots, n\}
\end{align*}
is analytic as well. Let $\Delta_n \subseteq \R^n$ denote the simplex. Since the functions
\begin{align*}
G: \ &A_n \times \Delta_n \to \Omega \times \mathcal{P}(X_n)\times \Delta_n\\
&(\omega, \tilde{\omega}_1, \dots \tilde{\omega}_n,\lambda_1, \dots, \lambda_n) \mapsto (\omega, \delta_{\tilde{\omega}_1}, \dots, \delta_{\tilde{\omega}_n}, \lambda_1, \dots, \lambda_n) 
\end{align*}
and
\begin{align*}
H:\ &\Omega \times \mathcal{P}(X_n) \times \Delta_n \to \Omega \times \mathcal{P}(X_1)\\
&(\omega, \delta_{\tilde{\omega}_1}, \dots, \delta_{\tilde{\omega}_n}, \lambda_1, \dots, \lambda_n) \mapsto \left(\omega, \sum_{i=1}^n \delta_{\tilde{\omega}_i}\lambda_i\right)
\end{align*}
are continuous, it follows that graph$\left(\hat{\mathfrak{P}}_t\right)=\bigcup_{n \in \N}H(G(A_n \times \Delta_n))$ is analytic.\\
Take now $\omega \in \Omega$ and $\P \in \mathcal{P}^f(\Omega)$ such that $\P(\{\omega\})>0$. By the FTAP of \citep{bouchard2015arbitrage} there exists $\Q \in \mathcal{M}$ such that $\Q \ll \tilde{\P}$ for some $\tilde{\P}\in \mathcal{P}^f(\Omega)$, $\E_{\Q}[\phi_j]=0$ for all $j=1, \dots,k$ and $\P \ll \Q$. In particular $\Q \in \mathcal{M}^f_{\Omega}$ and $\Q(\{\omega\})>0$. \\
Lastly assume that $\Omega= \Omega_{\Phi}^*$ and fix $\P \in \hat{\mathfrak{P}}$ such that $\text{supp}(\P)=\{\omega_1, \dots, \omega_n\}$ for some $n \in \N$. We can find $\Q_1, \dots, \Q_n \in \mathcal{M}^f_{\Omega}$ such that $\Q_i(\{\omega_i\})>0$ for $i=1, \dots, n$. Then $\Q \coloneqq 1/n \sum_{i=1}^n \Q_i \in \mathcal{M}^f_{\Omega,\Phi}$ and $\Q(\{\omega_i\})>0$ for $i=1, \dots, n$, i.e., $\P \ll \Q$.
\end{proof}

We now give a complete proof of the quasi-sure FTAP in \citep{bouchard2015arbitrage} using results from \citep{bfhmo}. We first look at the case $\Phi=0$ and start with an auxiliary lemma:

\begin{Lem}\label{Lem seperator}
Let $t \in \{1, \dots, T\}$ and $\Omega \subseteq X_t$ be analytic. Then the conditional standard separator of \citep{bfhmo} denoted by $\xi_{t, \Omega}$ is $\mathcal{F}^{\mathcal{U}}_{t-1}$-measurable.
\end{Lem}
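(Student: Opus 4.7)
The plan is to prove $\mathcal{F}^{\mathcal{U}}_{t-1}$-measurability by unpacking the bfhmo construction of $\xi_{t,\Omega}$ as an iterative procedure and applying the Jankov-von Neumann measurable selection theorem \citep[Prop. 7.50]{bertsekas1978stochastic} at each step where a choice must be made.

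First I would establish analyticity of the relevant level-set correspondence. The set-valued map $F_{t-1}\colon \omega \mapsto \Sigma_{t-1}^\omega \cap \Omega$ has graph
\[
\mathrm{graph}(F_{t-1}) = \{(\omega,\tilde\omega) \in X_t\times X_t : S_{0:t-1}(\omega) = S_{0:t-1}(\tilde\omega)\} \cap (X_t \times \Omega),
\]
which is the intersection of a Borel set (since $S$ is Borel) with an analytic set, hence analytic by \citep[Prop. 7.38]{bertsekas1978stochastic}. Pushing forward under the Borel map $\Delta S_t$, the correspondence $\omega \mapsto \Delta S_t(\Sigma_{t-1}^\omega \cap \Omega) \subseteq \R^d$ likewise has analytic graph.

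Next I would recall the bfhmo construction of $\xi_{t,\Omega}$: it is obtained from the geometry of $\Delta S_t(\Sigma_{t-1}^\omega \cap \Omega)$, typically as a unit normal to an appropriate separating hyperplane whenever a one-point arbitrage exists on the level set, and zero otherwise, with a canonical rule (e.g., lexicographic minimality) to make the choice unique. The conditions ``$0 \notin \mathrm{ri}(\mathrm{conv}(\Delta S_t(\Sigma_{t-1}^\omega \cap \Omega)))$'' and ``$\xi \in \R^d$ is a unit normal separating $0$ from the set'' are both analytically definable: they involve countable conjunctions/disjunctions of inequalities of the form $\xi \cdot \Delta S_t(\tilde\omega) \ge 0$ or $> 0$ with $\tilde\omega \in \Sigma_{t-1}^\omega \cap \Omega$, i.e., projections of analytic sets, which are analytic. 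Jankov-von Neumann then yields a universally measurable selection $\omega \mapsto \xi_{t,\Omega}(\omega)$, which is $\mathcal{F}^{\mathcal{U}}_{t-1}$-measurable since the entire construction depends on $\omega$ only through $S_{0:t-1}(\omega)$.

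The main obstacle I anticipate is correctly handling bfhmo's iterative dimension-reduction: if the construction proceeds by peeling off one arbitrageable direction at a time and restricting to a hyperplane, one must verify that the restricted level-set correspondence remains analytic at each of the (at most $d$) iterations, so that Jankov-von Neumann can be reapplied. Since intersecting an analytic set with a Borel hyperplane $\{H \cdot \Delta S_t = 0\}$ (with $H$ universally measurable in the previous step) preserves analyticity up to a universally measurable change of variables, the argument terminates after finitely many measurable selections and yields the claim.
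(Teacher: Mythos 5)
Your opening analyticity argument is sound and matches the paper's: the graph of $\omega\mapsto \Delta S_t(\Sigma_{t-1}^\omega\cap\Omega)$ is analytic, and hence for every open $O\subseteq\R^d$ the set $\{\omega : \Delta S_t(\Sigma_{t-1}^\omega\cap\Omega)\cap O\neq\emptyset\}$ is analytic, so the level-set correspondence $\psi_{t,\Omega}$ is $\mathcal{F}^{\mathcal{U}}_{t-1}$-measurable. That part is correct.

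However, from that point on you have guessed the wrong form of the object $\xi_{t,\Omega}$, and this changes the proof structure in a way that matters. The standard separator of \citep{bfhmo} is \emph{not} a single canonically chosen unit normal (e.g., lexicographically minimal). It is defined by taking the dual correspondence $\psi^*_{t,\Omega}(\omega)=\{H\in\mathbb{S}^d : H\cdot y\ge 0 \ \forall y\in\psi_{t,\Omega}(\omega)\}$, which is closed-valued and (by preservation of measurability) $\mathcal{F}^{\mathcal{U}}_{t-1}$-measurable, and then forming the weighted sum $\xi_{t,\Omega}=\sum_{n\ge 1}2^{-n}\xi^n_{t,\Omega}$ of a Castaing representation $\{\xi^n_{t,\Omega}\}_{n\in\N}$ of $\psi^*_{t,\Omega}$. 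The relevant selection tool is therefore the Castaing/Kuratowski--Ryll-Nardzewski theorem for closed-valued measurable multifunctions, producing a \emph{countable dense family} of measurable selectors — not a single Jankov--von Neumann selection. This distinction is not cosmetic: the whole point of the weighted Castaing sum, recorded in the Remark right after the lemma, is the aggregation property $\{\omega : \xi(\omega)\cdot\Delta S_t(\omega)>0\}\subseteq\{\omega : \xi_{t,\Omega}(\omega)\cdot\Delta S_t(\omega)>0\}$ for \emph{every} measurable selector $\xi$ of $\psi^*_{t,\Omega}$. A single canonical selector as you propose could easily miss arbitrage directions and would not satisfy this property, so your construction would not produce $\xi_{t,\Omega}$ and would not serve its purpose in the subsequent FTAP arguments.

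A secondary point: the ``iterative dimension-reduction'' you anticipate belongs to the construction of the Universal Arbitrage Aggregator $H^*$, which peels off directions by passing from $\Omega$ to $\{\xi_{t,\Omega}\cdot\Delta S_t=0\}\cap\Omega$ and re-extracting a new standard separator at each stage. Lemma~\ref{Lem seperator} concerns only a \emph{single} standard separator for a fixed analytic set, so no iteration is needed here; the measurability of the full aggregator is a separate (and harder) matter handled later in the paper. In summary, your analyticity groundwork is correct, but you need to replace the single Jankov--von Neumann selection with a Castaing representation of the closed-valued dual correspondence and sum it with geometric weights to recover the actual object $\xi_{t,\Omega}$.
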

\begin{proof}
We shortly recall arguments from \citep{bfhmo}[proof of Lemma 1]: let us define the multifunction 
\begin{align*}
\psi_{t, \Omega}: \omega \in X \twoheadrightarrow \{\Delta S_t(\tilde{\omega}) \ | \ \tilde{\omega}\in \Sigma_{t-1}^{\omega} \cap \Omega\}\subseteq \R^d.
\end{align*}
Then $\psi_{t, \Omega}$ is an $\mathcal{F}_{t-1}^{\mathcal{U}}$-measurable multifunction. Indeed, for $O \subseteq \R^d$ open we have
\begin{align*}
\{\omega \in X \ | \ \psi_{t,\Omega}(\omega) \cap O \neq \emptyset \}= S_{0:t-1}^{-1}(S_{0:t-1}((\Delta S_t)^{-1}(O)\cap \Omega)).
\end{align*}
As $\Delta S_t$ is Borel measurable $(\Delta S_t)^{-1}(O) \in \mathcal{F}_t^\nf$. Also as intersections, projections and preimages of analytic sets are analytic (see \citep[Prop. 7.35 \& Prop. 7.40]{bertsekas1978stochastic}), we find that $\{\omega \in X \ | \ \psi_{t,\Omega}(\omega) \cap O \neq \emptyset \}$ is analytic and in particular $\mathcal{F}_{t-1}^{\mathcal{U}}$-measurable. Let $\mathbb{S}^d$ be the unit sphere in $\R^d$, then by preservation of measurability also the multifunction
\begin{align*}
\psi_{t,\Omega}^*(\omega) \coloneqq \{ H \in \mathbb{S}^d \ | \ H \cdot y \ge 0\ \text{for all } y \in \psi_{t,\Omega}(\omega) \}
\end{align*}
is $\mathcal{F}_{t-1}^{\mathcal{U}}$-measurable and closed-valued. Let $\{\xi_{t,\Omega}^n\}_{n \in \N}$ be its $\mathcal{F}_{t-1}^{\mathcal{U}}$-measurable Castaing representation. The conditional standard separator is then defined as
\begin{align*}
\xi_{t,\Omega}= \sum_{n=1}^{\infty} \frac{1}{2^n}\xi_{t, \Omega}^n.
\end{align*}
\end{proof}
\begin{Rem}
We recall that this separator has the property that it aggregates all one-dimensional One-point Arbitrages on $\Sigma_{t-1}^{\omega}\cap \Omega$ in the sense that
\begin{align*}
\{\omega \in X \ | \ \xi(\omega) \cdot \Delta S_t(\omega)>0 \} \subseteq \{\omega \in X \ | \ \xi_{t,\Omega}(\omega) \cdot \Delta S_t(\omega)>0 \}
\end{align*}
for every measurable selector $\xi$ of $\psi_{t,\Omega}^*$.
\end{Rem}

\begin{proof}[Proof of Theorem \ref{Thm. bn vs bfhmo} for $\Phi=0$] \label{proof:case phi=0}
We start by proving the first part of Theorem \ref{Thm. bn vs bfhmo}, i.e., we are given a set of measures $\mathfrak{P}$ satisfying $(APS)$ and we need to construct $\Omega=\Omega^{\mathfrak{P}}$ such that \textit{(1)-(3)} are equivalent. We define for $\omega \in X_{t-1}$
\begin{align*}
\tilde{\chi}_{\mathcal{F}^\nf_{t-1}}(\omega)= \bigcap \{ A \subseteq \R^d \text{ closed } \ | \ \P (\Delta S_{t}(\omega, \cdot) \in A)=1 \ \forall \ \P \in \mathfrak{P}_{t-1}(\omega) \}.
\end{align*}
Then $\tilde{\chi}_{\mathcal{F}^\nf_{t-1}}$ is closed valued and $\P(\Delta S_{t}(\omega, \cdot) \in \tilde{\chi}_{\mathcal{F}^\nf_{t-1}}(\omega))=1$ for all $\P \in \mathfrak{P}_{t-1}(\omega)$ and all $\omega \in X_{t-1}$. Evidently
\begin{align*}
\tilde{\chi}_{\F_{t-1}^\nf}(\omega)&= \{x \in \R^d \ | \ \forall \epsilon>0 \ \P(\Delta S_t(\omega,\cdot) \in B(x,\epsilon))>0 \text{ for some }\P \in \mathfrak{P}_{t-1}(\omega) \} \\
&= \overline{\bigcup_{\P \in \mathfrak{P}_{t-1}(\omega)} \text{supp}( \P\circ \Delta S_t(\omega,\cdot)^{-1})}.
\end{align*}
Also it follows from \citep[Lemma 4.3, page 840]{bouchard2015arbitrage}, that $\tilde{\chi}_{\mathcal{F}^\nf_{t-1}}$ is analytically measurable. We quickly repeat their argument: let us define
\begin{align*}
l:X_{t-1} \times \mathcal{P}(X_1) \to \mathcal{P}(\R^d) \hspace{0.5cm} l(\omega, \P)= \P \circ \Delta S_{t}(\omega, \cdot)^{-1}.
\end{align*}
Then $l$ is Borel measurable. Next we consider
\begin{align*}
\mathcal{R}: X_{t-1} \twoheadrightarrow \mathcal{P}(\R^d) \hspace{0.5cm} \mathcal{R}(\omega) \coloneqq l(\omega, \mathfrak{P}_{t-1}(\omega))= \{\P \circ \Delta S_{t}(\omega,\cdot)^{-1} \ | \ \P \in \mathfrak{P}_{t-1}(\omega) \}.
\end{align*}
Since its graph is analytic, it follows that for $O \subseteq \R^d$ open
\begin{align*}
\{ \omega \in X_{t-1} \ | \ \tilde{\chi}_{\mathcal{F}^\nf_{t-1}}(\omega) \cap O \neq \emptyset \}&= \{ \omega \in X_{t-1} \ | \ R(O)>0 \text{ for some } R \in \mathcal{R}(\omega) \}\\ &= \text{proj}_{X_{t-1}} \{(\omega, R) \in \text{graph}(\mathcal{R}) \ | \ R(O)>0 \}
\end{align*}
is analytic as $R \mapsto R(O)$ is Borel.\\
We also note that for $\epsilon >0$ the function $x \mapsto R(B_{\epsilon}(x))$ is continuous, so $(x, R) \mapsto R(B_{\epsilon}(x))$ is Borel and
\begin{align*}
\text{graph}(\tilde{\chi}_{\mathcal{F}^\nf_{t-1}})&= \{(\omega, x) \in (X_{t-1} \times \R^d) \ | \ x \in \tilde{\chi}_{\mathcal{F}^\nf_{t-1}}(\omega) \}\\&= \bigcap_{\epsilon \in \Q_+}\text{proj}_{X_{t-1} \times \R^d} \left(  \{(\omega,R,x) \in (\text{graph}(\mathcal{R}) \times \R^d) \ | \ R(B_{\epsilon}(x)) >0 \} \right)
\end{align*}
is analytic. Now we define
\begin{align*}
U=\{ \omega \in X_t \ | \ \Delta S_{t}(\omega) \in \tilde{\chi}_{\mathcal{F}^\nf_{t-1}}(\omega) \}.
\end{align*}
Then 
\begin{align*}
U= \text{proj}_{X_t} (\text{graph}(\Delta S_{t}) \cap \text{graph}(\tilde{\chi}_{\mathcal{F}^\nf_{t-1}}))
\end{align*}
is analytic and by Fubini's theorem $\P(U)=1$ holds for all $\P \in \mathfrak{P}$. We now set 
\begin{align*}
\Omega^{\mathfrak{P}}= \bigcap_{t=1}^T \left\{ \omega \in X_t \ | \ \Delta S_{t}(\omega) \in \tilde{\chi}_{\mathcal{F}^\nf_{t-1}}(\omega) \right\},
\end{align*}
which is again analytic and $\P(\Omega^{\mathfrak{P}})=1$ for all $\P \in \mathfrak{P}$.\\
Having defined $\Omega^{\mathfrak{P}}$ we can now begin to prove equivalence of \textit{(1)-(3)}. If \textit{(2)} holds then \textit{(3)} follows immediately by a contradiction argument,
so we now show the more involved implications $\textit{(3)}\Rightarrow \textit{(1)}$ and $\textit{(1)}\Rightarrow \textit{(2)}$. Let us start with the proof of $\textit{(3)}\Rightarrow \textit{(1)}$: we assume that there exists $\hat{\P} \in \mathfrak{P}$ such that $\hat{\P}\left(\Omega^{\mathfrak{P}} \setminus (\Omega^{\mathfrak{P}})^*\right)>0$. We want to find $H \in \mathcal{H}(\mathbb{F}^{\mathcal{U}})$ and $\tilde{\P}\in \mathfrak{P}$ such that $H \circ S_T \ge 0$ $\mathfrak{P}$-q.s and $\tilde{\P}(H \circ S_T>0)>0$. For this we take $t=T-1$ and assume that 
\begin{align*}
\hat{\P}\left(\{\omega \in \text{proj}_{0:T-1}(\Omega^{\mathfrak{P}}) \ | \ \text{there is a One-point Arbitrage on } \Sigma_{T-1}^{\omega}\cap\Omega^{\mathfrak{P}} \}\right)>0.
\end{align*}
Let us now fix $\omega \in \{\text{proj}_{0:T-1}(\Omega^{\mathfrak{P}}) \ | \ \text{there is a One-point Arbitrage on } \Sigma_{T-1}^{\omega}\cap\Omega^{\mathfrak{P}} \}$. 
Denote by $\xi_{T,\Omega^{\mathfrak{P}}}$ the $\mathcal{F}_{T-1}^{\mathcal{U}}$-measurable standard separator of Lemma \ref{Lem seperator}. Now we define for each $\P \in \mathfrak{P}_{T-1}(\omega)$ the push-forward of $\P$ as
\begin{align*}
\P_{\Delta S_{T}(\omega, \cdot)}(A) = \P(\Delta S_{T}(\omega, \cdot) \in A),
\end{align*}
where $A \in \mathcal{B}(\R^d)$. We note that by definition
\begin{align*}
\P_{\Delta S_{T}(\omega, \cdot)}\left(\tilde{\chi}_{\mathcal{F}^\nf_{T-1}}(\omega)\right)=1
\end{align*} 
holds for all $\P \in \mathfrak{P}_{T-1}(\omega)$. With a slight abuse of notation we recall the set
\begin{align*}
B^1(\omega) \coloneqq \{\omega' \in \text{proj}_T(\Sigma_{T-1}^{\omega}\cap\Omega^{\mathfrak{P}}) \ | \ \xi_{T,\Omega^{\mathfrak{P}}}(\omega) \cdot \Delta S_T(\omega,\omega') >0 \}
\end{align*}
from \citep[proof of Lemma 1, Step 1]{bfhmo} and note that for all $\P \in \mathfrak{P}_{T-1}(\omega)$
\begin{align*}
&\P\left(\{ \omega'\in \text{proj}_T(\Sigma_{T-1}^{\omega}\cap \Omega^{\mathfrak{P}}) \ | \ \xi_{T,\Omega^{\mathfrak{P}}}(\omega) \cdot \Delta S_T(\omega, \omega')>0 \} \right)\\
&\hspace{5cm}= \P_{\Delta S_T(\omega, \cdot)} (\{x \in \R^d \ | \ \xi_{T,\Omega^{\mathfrak{P}}}(\omega) \cdot x >0\})
\end{align*}
follows. Clearly the set $\{ x \in \R^d \ | \ \xi_{T,\Omega^{\mathfrak{P}}}(\omega) \cdot x > 0 \}$ is open in $\R^d$, thus by definition of $\tilde{\chi}_{\mathcal{F}^\nf_{T-1}}(\omega)$ there is a $\tilde{\P} \in \mathfrak{P}_{T-1}(\omega)$ such that 
\begin{align*}
\tilde{\P}_{\Delta S_T(\omega, \cdot)} (\{x \in \R^d \ | \ \xi_{T,\Omega^{\mathfrak{P}}}(\omega) \cdot x >0\})>0
\end{align*}
or there are no One-point Arbitrages on $\Sigma_{T-1}^{\omega} \cap \Omega^{\mathfrak{P}}$. To finish the proof of $\textit{(3)} \Rightarrow\textit{(1)}$ we need to select $\tilde{\P}$ in a measurable way and this follows by standard arguments: 
Define the correspondence $\Psi: \R^d \times X_{T-1} \twoheadrightarrow \mathcal{P}(X_1)$ by
\begin{align*}
\Psi(H,\omega)= \{ \P \in \mathfrak{P}_{T-1}(\omega) \ | \ \E_{\P}[H \cdot \Delta S_T(\omega, \cdot)]^+>0 \}.
\end{align*}
This function has analytic graph by arguments in \citep[proof of Lemma 3.4, p.11]{nutz2014utility}, so we can employ the Jankov-von-Neumann theorem (cf. \citep[Proposition 7.49, page 182]{bertsekas1978stochastic})  to find a universally measurable kernel 
\begin{align*}
\P'_{T-1}: \R^d \times X_{T-1} \to \mathcal{P}(X_1)
\end{align*}
such that $\P'_{T-1}(H,\omega) \in \mathfrak{P}_{T-1}(\omega)$ for all $(H,\omega) \in \R^d \times X_{T-1}$ and $\P_{T-1}(H,\omega) \in \Psi(H,\omega)$ on $\{\Psi(H,\omega)\neq \emptyset\}$.
Then also the kernel
\begin{align*}
\omega \mapsto \tilde{\P}_{T-1}(\omega) \coloneqq \P_{T-1}'(\xi_{T,\Omega^{\mathfrak{P}}}(\omega), \omega)
\end{align*}
is universally measurable. Defining $\tilde{\P}\coloneqq \hat{\P}|_{X_{T-1}}\otimes \tilde{\P}_{T-1}$, which is the product measure formed from the restriction of $\hat{\P}$ to $X_{T-1}$ and $\tilde{P}_{T-1}$ gives $\tilde{\P}(\xi_{T, \Omega^{\mathfrak{P}}} \cdot \Delta S_T>0)>0$. This proves $\textit{(3)}\Rightarrow\textit{(1)}$ by backward induction.\\
Lastly we show $\textit{(1)} \Rightarrow \textit{(2)}$: let us assume $\P((\Omega^{\mathfrak{P}})^*)=1$ for all $\P \in \mathfrak{P}$. Note that by the arguments given in the proof of $\textit{(3)}\Rightarrow\textit{(1)}$ this means that 
\begin{align*}
\bigcup_{t=1}^T\{\text{proj}_{0:t-1}(\Omega^{\mathfrak{P}}) \ | \ \text{there is a One-point Arbitrage on } \Sigma_{t-1}^{\omega}\cap\Omega^{\mathfrak{P}} \}
\end{align*}
is a $\mathfrak{P}$-polar set, so in particular $0 \in \text{ri}(\tilde{\chi}_{\mathcal{F}^\nf_{t-1}}(\omega))$ for all $t=1, \dots, T$ and $\mathfrak{P}$-q.e. $\omega \in X$. Here $\text{ri}(\tilde{\chi}_{\mathcal{F}^\nf_{t-1}}(\omega))$ denotes the relative interior of the convex hull of $\tilde{\chi}_{\mathcal{F}^\nf_{t-1}}(\omega)$. Let $\hat{\P} \in \mathfrak{P}$ be fixed. 
We define for an arbitrary $\P \in \mathfrak{P}$ and $\omega \in X_{t-1}$ the support of $\P_{t-1}(\omega) \circ \Delta S_{t}^{-1}(\omega, \cdot)$ conditioned on $\mathcal{F}^\nf_{t-1}$ as
\begin{align*}
\chi_{\mathcal{F}^\nf_{t-1}}^{\P}(\omega)=\{ x \in \R^d \ | \ \P_{t-1}(\omega)(\Delta S_{t}(\omega, \cdot) \in B_{\epsilon}(x))>0 \text{ for all } \epsilon > 0 \}.
\end{align*}
Using selection arguments which are explained below, we can now find measurable selectors $\P_{(0,1)}, \dots , \P_{(0,d)},\P_{(1,1)}, \dots, \P_{(T-1,d)}$ such that
\begin{align*}
\P_{(t,1)}(\omega), \dots , \P_{(t,d)}(\omega) \in \mathfrak{P}_t(\omega)
\end{align*}
and $\P_{(0,1)}, \dots, \P_{(T-1,d)}$ fulfil the following property: define
\begin{align*}
\tilde{\P}_t(\omega)= \frac{1}{d+1} \left(\hat{\P}_t(\omega) +\sum_{i=1}^d \P_{(t,i)}(\omega) \right) 
\end{align*}
 for $t=0, \dots, T-1$ and every $\omega \in X_{t}$. Then for $\tilde{\P}= \tilde{\P}_0 \otimes \dots \otimes \tilde{\P}_{T-1}$ we have 
\begin{align*}
0 \in \text{ri}\left(\chi^{\tilde{\P}}_{{\mathcal{F}}^\nf_{t-1}}\right) \ \ \tilde{\P}\text{-a.s. for all }1 \le t \le T,
\end{align*} 
where $\text{ri}\left(\chi^{\tilde{\P}}_{{\mathcal{F}}^\nf_{t-1}}\right)$ denotes the relative interior of the convex hull of $\chi^{\tilde{\P}}_{{\mathcal{F}}^\nf_{t-1}}$.\\
We note that since $\mathfrak{P}_t(\omega)$ is convex, we have $\tilde{\P}_t(\omega) \in \mathfrak{P}_t(\omega)$ for $\omega \in X_{t}$ and by definition $\hat{\P} \ll \tilde{\P}$ holds. Now it follows from \citep[Theorem 1, page 1]{rokhlin2008proof}, that there exists a martingale measure $\Q$ equivalent to $\tilde{\P}$. The fact that  $\tilde{\P }\in \mathfrak{P}$ implies $\Q \in \mathcal{Q}_{\Pfr}$, which shows the claim.\\
We now present the measurable selection argument: we fix $t \in \{1, \dots, T\}$. Note that for all $\omega \in \text{proj}_{0:t-1}((\Omega^{\mathfrak{P}})^*)$ we conclude $0 \in \text{ri}(\Delta S_{t}(\omega,\Sigma_{t-1}^{\omega} \cap (\Omega^{\mathfrak{P}})^*))$ by definition of $(\Omega^{\mathfrak{P}})^*$, which implies by \citep[Theorem D, p.1]{bonnice1969relative} that there exist $\P^1, \dots, \P^d \in \mathfrak{P}_{t-1}(\omega)$, which might not be pairwise distinct, s.t.
\begin{align*}
0 \in \text{ri}\left(\text{supp}\left(\frac{\hat{\P}_{t-1}(\omega)+\P^1+ \dots+ \P^d}{d+1} \circ \Delta S_{t}(\omega,\cdot)^{-1}\right) \right).
\end{align*}
Note that $\omega \mapsto \hat{\P}_{t-1}(\omega)$ is universally measurable. 
We define the correspondence $\rho: \mathcal{P}(X_1)^{d+1} \twoheadrightarrow \R^d$ by
\begin{align*}
\rho: (\P^0, \P^1, \dots, \P^d) = \text{supp}\left(\frac{\P^0+ \P^1 + \dots + \P^d}{d+1} \circ \Delta S_{t}(\omega, \cdot)^{-1} \right).
\end{align*}
Note that for $O \subseteq \R^d$ open we have
\begin{align*}
\{(\P^0, \P^1, \dots, \P^d) \ | \ \rho(\P^0, \P^1, \dots, \P^d) \cap O \neq \emptyset \}&=\bigcup_{i=0}^d \{(\P^0, \P^1, \dots, \P^d )\ | \ \P^i \circ \Delta S_t(\omega,\cdot)^{-1}(O)>0 \}.
\end{align*}
Since $\P \mapsto \P(O)$ is Borel measurable, we conclude that $\rho$ is weakly measurable. Let us denote by $\mathbb{S}^d$ the unit sphere in $\R^d$. By preservation of measurability (cf. \citep[Exercise 14.12, page 653]{rockafellar2009variational}) it follows that the correspondence $\Psi: \mathcal{P}(X_1)^{d+1} \twoheadrightarrow \R^d$
\begin{align*}
\Psi(\P^0, \P^1, \dots, \P^d) = \{H \in \mathbb{S}^d \ | \ H \cdot y \ge 0 \text{ for all }y \in \rho(\P^0, \P^1, \dots, \P^d) \} 
\end{align*}
is weakly measurable. Then also the correspondence $\tilde{\Psi}:  \mathcal{P}(X_1)^{d+1} \twoheadrightarrow \R^d$
\begin{align*}
\tilde{\Psi}(\P^0, \P^1, \dots, \P^d) = &\{H \in \mathbb{S}^d \ | \ H \cdot y \le 0 \text{ for all }y \in \rho(\P^0,\P^1,\dots,\P^d) \} \\
&\cap \Psi(\P^0, \P^1, \dots, \P^d)
\end{align*}
is weakly measurable and closed-valued. Let $V$ be a countable base of $\R^d$. The set
\begin{align*}
&\{(\P^0, \P^1, \dots, \P^d) \ | \ \tilde{\Psi}(\omega, \P^1, \dots, \P^d)= \Psi(\P^0, \P^1, \dots, \P^d) \}\\
=&\bigcap_{O: O\in V} ( \{(\P^0, \P^1, \dots, \P^d) \ | \ \Psi \cap O \neq \emptyset \} \cap \{(\P^0, \P^1, \dots, \P^d) \ | \ \tilde{\Psi}\cap O \neq \emptyset \}\\  &\cup
\{(\P^0, \P^1, \dots, \P^d) \ | \ \Psi \cap O = \emptyset \} \cap \{(\P^0, \P^1, \dots, \P^d) \ | \ \tilde{\Psi}\cap O = \emptyset \} )
\end{align*}
is Borel measurable.
Note that for an arbitrary convex set $C \subseteq \R^d$ the relationship
\begin{align*}
0 \in \text{ri}(C) \Leftrightarrow (\forall\ H \in \mathbb{S}^d \text{ s.t. } H \cdot x \ge 0 \ \forall x \in C\ \Rightarrow\ H\cdot x =0 \ \forall x \in C)
\end{align*}
holds.
Let
\begin{align*}
A \coloneqq \{ (\P^0, \P^1, \dots, \P^d) \ | \ 0 \in \text{ri}\left(\rho(\P^0, \P^1, \dots \P^d)\right) \text{ for }i=1, \dots, d\}.
\end{align*}
Then from the above arguments it follows that $A$ is Borel and in particular the set-valued mapping
\begin{align*}
A(\omega, \P^0):=\{(\P^1,\dots,\P^d)\ | \ 0\in  \text{ri}\left(\rho(\P^0, \P^1, \dots \P^d)\right), \ \P^i\in \mathfrak{P}_{t-1}(\omega) \text{ for } i=1,\dots,d \}
\end{align*}
has analytic graph. We can now employ the Jankov-von-Neumann theorem (cf. \citep{bertsekas1978stochastic}, Proposition 7.49, page 182)  to find universally measurable kernels $\P_{t-1}^{i}: X_{t-1} \to \mathcal{P}(X_1)$ such that for every $\omega\in X_{t-1}$ we have $\P_{t-1}^{i}(\omega) \in \mathfrak{P}_{t-1}(\omega)$ and $$0\in \text{ri}\left(\rho(\hat{\P}_{t-1}(\omega), \P_{t-1}^1, \dots \P_{t-1}^d)\right).$$
This concludes the proof of $\textit{(1)}\Rightarrow\textit{(2)}$.\\
The second part of Theorem \ref{Thm. bn vs bfhmo} follows immediately from Proposition \ref{prop 2}.
\end{proof}

Before continuing the proof of Theorem \ref{Thm. bn vs bfhmo} let us first give a short remark on the measurability of the arbitrage strategies involved in the proof of above:

\begin{Rem}
By the FTAP of \citep{bfhmo} there exists a filtration $\tilde{\mathbb{F}}$ with $\mathbb{F}^\nf \subseteq \tilde{\mathbb{F}} \subseteq \mathbb{F}^M$ such that there is no Strong Arbitrage in $\mathcal{H}(\tilde{\mathbb{F}})$ on $\Omega^{\mathfrak{P}}$. More concretely there exists an $\mathcal{H}(\tilde{\mathbb{F}})$- and thus $\mathcal{H}(\mathbb{F}^M)$-measurable arbitrage aggregator $H^*$. So in particular if $\P(\Omega^{\mathfrak{P}} \setminus (\Omega^{\mathfrak{P}})^*)>0$ for some $\P \in \mathfrak{P}$, then $H^*$ is an $\mathcal{H}(\tilde{\mathbb{F}})$-measurable $\mathfrak{P}$-q.s arbitrage. In general the inclusion $\mathbb{\tilde{\mathbb{F}}} \subseteq \mathbb{F}^{\mathcal{U}}$ does not hold. This is why we need to construct a new $\mathbb{F}^{\mathcal{U}}$-measurable arbitrage strategy, which captures the arbitrages essential for $\mathfrak{P}$. More generally, in this paper we manage to avoid using projectively measurable sets, which were essential for the arguments in \citep{bfhmo}. In fact, all our trading strategies are universally measurable without invoking the axiom of projective determinacy.\\
Furthermore, we hope that by constructing an explicit arbitrage strategy in the proof of $\textit{(3)}\Rightarrow\textit{(1)}$ we can clarify the proof of \citep{burzoni2016universal}, Theorem 4.23, pp. 42-46 (in particular \citep{burzoni2016universal}[A.3]) by offering a similar to the above (but much simpler) reasoning for the case $\mathfrak{P}=\{\P\}$. Introducing a measurable separator $\xi$ it is apparent that $j_z$ in \citep[p.44]{burzoni2016universal} can always be chosen equal to one in our setting. Also the resulting strategy $H^{\P}$ therein can be chosen universally measurable.
\end{Rem}

To prove the first part of Theorem \ref{Thm. bn vs bfhmo} for the case $\Phi \neq 0$ we recall the following notion from \citep{bfhmo}:

\begin{Defn}[\citep{bfhmo}, Def. 4] \label{pahtspace}
A pathspace partition scheme $\mathcal{R}(\alpha^*,H^*)$ of $\Omega$ is a collection of trading strategies $H_1,...,H_{\beta} \in \mathcal{H}(\mathbb{F}^{\mathcal{U}})$, $\alpha_1, \dots, \alpha_{\beta}\in \R^k$ and arbitrage aggregators $\tilde{H}_0, \dots ,\tilde{H}_{\beta}$ for some $1 \le \beta \le k$ such that
\begin{enumerate}
\item the vectors $\alpha_i$, $1 \le i \le \beta$ are linearly independent,
\item for any $i \le \beta$
\begin{align*}
\alpha_i\cdot \Phi+H_i\circ S_T \ge 0 \hspace{0.5cm}\text{on }A_{i-1}^*,
\end{align*}
where $A_0= \Omega$, $A_i\coloneqq \{ \alpha_i\cdot \Phi+ H_i \circ S_T =0 \} \cap A_{i-1}^*$,
\item for any $i=0, \dots, \beta$, $\tilde{H}_i$ is an Arbitrage Aggregator for $A_i$,
\item if $\beta < k$, then either $A_{\beta}= \emptyset$ or for any $\alpha \in \R^k$ linearly independent from $\alpha_1, \dots, \alpha_{\beta}$ there does not exist $H$ such that
\begin{align*}
\alpha \cdot \Phi + (H \circ S_T) \ge 0 \hspace{0.5cm} \text{on } A_{\beta}^*.
\end{align*}
\end{enumerate}
\end{Defn}

\begin{Defn}[\citep{bfhmo}, Def. 5]
A pathspace partition scheme $\mathcal{R}(\alpha^*, H^*)$ is successful if $A_{\beta}^* \neq \emptyset$.
\end{Defn}

We quote the following results:

\begin{Lem}[\citep{bfhmo}, Lemma 5]
For any $\mathcal{R}(\alpha^*, H^*)$, $A_i^*= \Omega^*_{\{\alpha^j \cdot \Phi \ | \ j \le i\}}$. Moreover, if $\mathcal{R}(\alpha^*, H^*)$ is successful, then $A^*_{\beta}=\Omega_{\Phi}^*$.
\end{Lem}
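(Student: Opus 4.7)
The plan is to establish the first identity $A_i^*=\Omega^*_{\{\alpha^j\cdot\Phi:j\leq i\}}$ by induction on $i$, and then derive the second claim by exploiting property (4) of Definition \ref{pahtspace}. Write $\Psi_i:=\{\alpha^j\cdot\Phi:j\leq i\}$. The base case $i=0$ is immediate since $A_0=\Omega$ and $\Psi_0=\emptyset$, so both sides equal $\Omega^*$.

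For the inductive step I treat both inclusions. For $A_i^*\subseteq\Omega^*_{\Psi_i}$: unfolding the recursion $A_j=\{\alpha_j\cdot\Phi+H_j\circ S_T=0\}\cap A_{j-1}^*$ yields $\alpha_j\cdot\Phi+H_j\circ S_T=0$ pointwise on $A_i$ for every $j\leq i$, so any witness $\Q\in\mathcal{M}^f_{A_i}$ (whose martingale property kills $\E_\Q[H_j\circ S_T]$) automatically calibrates to $\Psi_i$. For $\Omega^*_{\Psi_i}\subseteq A_i^*$: a witness $\Q\in\mathcal{M}^f_{\Omega,\Psi_i}$ for $\omega$ also calibrates to $\Psi_{i-1}$, so the inductive hypothesis places $\mathrm{supp}(\Q)\subseteq A_{i-1}^*$; property (2) gives $\alpha_i\cdot\Phi+H_i\circ S_T\geq 0$ on $A_{i-1}^*$, hence $\Q$-a.s., and a non-negative random variable with zero $\Q$-expectation must vanish on $\mathrm{supp}(\Q)$. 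Thus $\mathrm{supp}(\Q)\subseteq A_i$ and $\omega\in A_i^*$.

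For the second claim, successfulness together with the first part yields $A_\beta^*=\Omega^*_{\Psi_\beta}$, so it suffices to identify $\Omega^*_{\Psi_\beta}=\Omega^*_\Phi$. The inclusion $\Omega^*_\Phi\subseteq\Omega^*_{\Psi_\beta}$ is immediate by linearity of calibration. The reverse inclusion is where I expect the main obstacle. If $\beta=k$ it is trivial since $\{\alpha_j\}$ spans $\R^k$, but for $\beta<k$ one needs property (4) substantively. The approach is to fix $\omega\in A_\beta^*$, examine the convex set $V:=\{\E_\Q[\Phi]:\Q\in\mathcal{M}^f_{A_\beta^*,\Psi_\beta},\ \Q(\{\omega\})>0\}\subseteq\R^k$, and argue by contradiction: assuming $0\notin V$, a Hahn-Banach separation provides $\alpha\in\R^k$ with $\alpha\cdot v\geq c>0$ on $V$, and since any $\alpha\in\mathrm{span}\{\alpha_j\}$ would force $\alpha\cdot v=0$ on $V$, necessarily $\alpha$ is linearly independent from $\alpha_1,\dots,\alpha_\beta$. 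Pricing-hedging duality on the efficient set $A_\beta^*$, which is arbitrage-free in view of the first claim, then delivers a semistatic strategy $(h,H)$ with $\alpha\cdot\Phi+h\cdot\Psi_\beta+H\circ S_T>0$ on $A_\beta^*$; absorbing the static component into $\alpha$ produces $\alpha':=\alpha+\sum_j h_j\alpha_j$, still linearly independent from $\{\alpha_j\}$, violating property (4). Consequently $0\in V$, yielding the required calibrated finitely supported martingale measure charging $\omega$ and completing the identification $A_\beta^*=\Omega^*_\Phi$.
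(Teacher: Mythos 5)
The paper does not supply a proof of this lemma: it is quoted directly as \citep[Lemma 5]{bfhmo}, so there is no in-paper argument to compare against. I therefore assess your proposal on its own terms.

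Your proof of the first identity $A_i^*=\Omega^*_{\Psi_i}$ is correct. The induction, the observation that $A_i\subseteq A_j$ forces $\alpha_j\cdot\Phi+H_j\circ S_T=0$ pointwise on $A_i$ for all $j\le i$ (so any $\Q\in\mathcal{M}^f_{A_i}$ is automatically calibrated to $\Psi_i$), and the converse use of property (2) together with the fact that a $\Q$-a.s.\ nonnegative random variable with zero $\Q$-expectation vanishes on $\mathrm{supp}(\Q)$, are exactly the right ingredients.

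For the second identity your strategy is sound, but two technical points need tightening. First, if $0\in\bar V\setminus V$ you cannot invoke strict separation; instead use \emph{proper} separation, which applies whenever $0\notin\mathrm{ri}(V)$ (in particular whenever $0\notin V$): this yields $\alpha$ with $\alpha\cdot v\ge 0$ on $V$ and $\alpha$ not identically zero on $V$, which still forces $\alpha\notin\mathrm{span}\{\alpha_1,\dots,\alpha_\beta\}$, and the duality with attainment then delivers $\alpha'\cdot\Phi+H\circ S_T\ge 0$ on $A_\beta^*$ (not $>0$, but $\ge 0$ is all that property (4) forbids). Second, the pathwise superhedging duality is stated over all of $\mathcal{M}^f_{A_\beta^*,\Psi_\beta}$, not over the $\omega$-charging subfamily generating $V$; you need the additional remark that $\sup_{V}\alpha\cdot v=\sup_{\tilde V}\alpha\cdot v$ for every linear functional, where $\tilde V$ is the unconstrained set of expectations, which follows by mixing any $\Q\in\mathcal{M}^f_{A_\beta^*,\Psi_\beta}$ with a fixed $\Q_0$ charging $\omega$ and letting the mixing weight vanish. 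With these patches the argument goes through; the essential ideas are all present.
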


\begin{Lem}[\citep{bfhmo}, Proof of Theorem 1 for $\Phi \neq 0$]
A pathspace partition scheme $\mathcal{R}(\alpha^*, H^*)$ is successful if and only if $\Omega_{\Phi}^*\neq \emptyset$.
\end{Lem}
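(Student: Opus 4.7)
The plan is to derive both directions as short corollaries of the preceding Lemma (\citep{bfhmo}, Lemma 5), which already identifies $A_i^* = \Omega^*_{\{\alpha^j\cdot\Phi\,:\,j\le i\}}$ for any scheme $\mathcal{R}(\alpha^*,H^*)$ and strengthens this to $A_\beta^* = \Omega_\Phi^*$ whenever the scheme is successful.

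For the forward implication, I would simply read off that if $\mathcal{R}(\alpha^*,H^*)$ is successful then $A_\beta^*\neq\emptyset$ by definition; the second assertion of the preceding Lemma then yields $\Omega_\Phi^*=A_\beta^*\neq\emptyset$, and there is nothing more to do.

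For the reverse implication, suppose $\Omega_\Phi^*\neq\emptyset$. The first assertion of the preceding Lemma gives $A_\beta^* = \Omega^*_{\{\alpha^j\cdot\Phi\,:\,j\le\beta\}}$. Now any $\Q\in\mathcal{M}^f_{\Omega,\Phi}$ satisfies $\E_{\Q}[\phi]=0$ for every $\phi\in\Phi$ and hence $\E_{\Q}[\alpha^j\cdot\Phi]=0$ for every $j\le\beta$, so $\mathcal{M}^f_{\Omega,\Phi}\subseteq\mathcal{M}^f_{\Omega,\{\alpha^j\cdot\Phi\,:\,j\le\beta\}}$. Taking the union of supports gives $\Omega_\Phi^*\subseteq\Omega^*_{\{\alpha^j\cdot\Phi\,:\,j\le\beta\}}=A_\beta^*$, so $\Omega_\Phi^*\neq\emptyset$ forces $A_\beta^*\neq\emptyset$, i.e.\ $\mathcal{R}$ is successful.

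The only point requiring care is notational: one needs to confirm that the $*$-operation in $\Omega^*_{\{\alpha^j\cdot\Phi\,:\,j\le\beta\}}$ is taken relative to the ambient $\Omega$ (in accordance with the convention of Section \ref{sec 1}) and to note that adding calibration constraints can only shrink the associated set of martingale measures. Once these are unpacked, no substantive obstacle remains — all the genuine work has been absorbed into the preceding Lemma, and the present statement is essentially a bookkeeping consequence of it.
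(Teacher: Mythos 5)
Your proof is correct: the forward implication is an immediate read-off from the second assertion of the quoted Lemma~5, and your reverse implication correctly exploits the inclusion $\mathcal{M}^f_{\Omega,\Phi}\subseteq\mathcal{M}^f_{\Omega,\{\alpha^j\cdot\Phi\mid j\le\beta\}}$ (calibration to $\Phi$ implies calibration to any linear combination) together with the first assertion of Lemma~5 to get $\Omega_\Phi^*\subseteq A_\beta^*$, so emptiness of $A_\beta^*$ forces emptiness of $\Omega_\Phi^*$. Note that the paper under review states this lemma without proof, citing it from the proof of Theorem~1 in \citep{bfhmo}; your reconstruction is the natural and essentially forced argument given Lemma~5, and there is no competing in-paper proof to compare against.
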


We now complete the first part of the proof of Theorem \ref{Thm. bn vs bfhmo} for the case $\Phi \neq 0$:


\begin{proof}[Proof of Theorem \ref{Thm. bn vs bfhmo} for $\Phi\neq 0$]
The existence of $\Omega^{\mathfrak{P}}$ and $\textit{(1)} \Rightarrow$ No Strong Arbitrage in $\Ac_{\Phi}(\tilde{\mathbb{F}})$ on $\Omega^{\mathfrak{P}}$, $\textit{(2)}\Rightarrow\textit{(3)}$ follow exactly as before. We now argue that $\textit{(1)}\Rightarrow\textit{(2)}$ holds in the spirit of \citep[Theorem 5.1, p. 850]{bouchard2015arbitrage}, by induction over the number $e$ of options available for static trading. In particular we can assume without loss of generality that there exists a random variable $\varphi \ge 1$ such that $|\phi_j| \le \varphi$ for all $j=1, \dots, k$ and consider the set $\mathcal{Q}_{\varphi}= \{ \Q \in \mathcal{Q}_{\Pfr} \ | \ \E_{\Q}[\varphi]< \infty \}$ in order to avoid integrability issues. So let us assume there are $e \ge 0$ traded options $\phi_1, \dots, \phi_e$, for which $\textit{(1)}\Rightarrow \textit{(2)}$ holds. We introduce an additional option $g=\phi_{e+1}$ and assume $\P\left((\Omega^{\mathfrak{P}})^*_{\{\phi_1, \dots, \phi_{e+1}\}}\right)=1$ for all $\P \in \mathfrak{P}$. Then clearly $\P\left((\Omega^{\mathfrak{P}})^*_{\{\phi_1, \dots, \phi_{e}\}}\right)=1$ for all $\P \in \mathfrak{P}$ and by the induction hypothesis there is no arbitrage in the market with options $\{\phi_1, \dots, \phi_{e}\}$ available for static trading. Let $\P \in \mathfrak{P}$. Then by exactly the same arguments as in \citep[proof of Theorem 5.1(a)]{bouchard2015arbitrage} we can use convexity of $\mathcal{Q}_{\varphi}$ and Theorem \ref{Thm super} to find a measure $\Q \in \mathcal{Q}_{\varphi}$, such that $\P \ll \Q$ and $\Q\in \mathcal{Q}_{\Pfr, \{\phi_1,\dots, \phi_{e+1}\}}$, so $\textit{(2)}$ holds.\\
Lastly it remains to show $\textit{(3)}\Rightarrow\textit{(1)}$. Let us thus assume there exists $\hat{\P} \in \mathfrak{P}$ such that $\hat{\P}(\Omega^{\mathfrak{P}} \setminus (\Omega^{\mathfrak{P}})^*_{\Phi})>0$. We want to find $(h,H) \in \admU$ and $\tilde{\P}\in \mathfrak{P}$ such that $h \cdot \Phi + H \circ S_T \ge 0$ $\mathfrak{P}$-q.s and $\tilde{\P}(h \cdot \Phi+H \circ S_T>0)>0$.
We use the properties of a pathspace partition scheme $\mathcal{R}(\alpha^*, H^*)$ recalled above. We define
\begin{align*}
m &= \min (k \in \{0, \dots , \beta\} \ | \  \tilde{\P}(A_k \setminus A_{k}^*)>0 \text{ for some }\tilde{\P} \in \mathfrak{P}) \\
\tilde{m} &= \min (k \in \{1, \dots , \beta\} \ | \  \tilde{\P}(A_{k-1}^* \setminus A_{k})>0 \text{ for some }\tilde{\P} \in \mathfrak{P}),
\end{align*}
where $A_0= \Omega^{\mathfrak{P}}$. If $\tilde{m}\le m$ then we select the strategy $(\alpha_{\tilde{m}},H_{\tilde{m}}) \in \admU$ which satisfies $H_{\tilde{m}} \circ S_T + \alpha_{\tilde{m}} \cdot \Phi \ge 0$ on $A_{\tilde{m}-1}^*$. We note that $\P(A_{\tilde{m}-1}^*)=1$ for all $\P \in \mathfrak{P}$ by definition of $m, \ \tilde{m}$ and $\{ H_{\tilde{m}} \circ S_T + \alpha_{\tilde{m}} \cdot \Phi > 0\}=A_{\tilde{m}-1}^* \setminus A_{\tilde{m}}$, so that $\tilde{\P}(H_{\tilde{m}} \circ S_T + \alpha_{\tilde{m}} \cdot \Phi > 0)>0$ for some $\tilde{\P}\in \mathfrak{P}$. If $\tilde{m}>m$, then $\P(A_m)=1$ for all $\P \in \mathfrak{P}$, $\tilde{\P}(A_m \setminus A^*_m)>0$ for some $\tilde{\P}\in \mathfrak{P}$, so we can argue as in the proof of Proposition \ref{Thm. bn vs bfhmo} for $\Phi=0$ $\textit{(3)} \Rightarrow\textit{(1)}$ using a standard separator and measurable selection of a measure in $\mathfrak{P}$.\\ As before, the second part of Theorem \ref{Thm. bn vs bfhmo} follows immediately from Proposition \ref{prop 2}. This concludes the proof.
\end{proof}

\begin{proof}[Proof of Theorem \ref{Thm. S}]
We recall the analytic set $\Omega^{\mathfrak{P}}$ from the proof of Theorem \ref{Thm. bn vs bfhmo} for $\Phi=0$ and the sets $\{C_n\}_{n \in \N}$ from \eqref{eq. app}. Now we define 
\begin{align*}
B:= \bigcup_{n \in \N} \{C_n \ | \ \P( C_n \cap (\Omega^{\mathfrak{P}})^*_{\Phi}) =0 \text{ for all }\P \in \mathfrak{P}\} \in \mathcal{B}(X).
\end{align*}
We claim that \eqref{eq. app} implies that
\begin{align*}
B \cap  (\Omega^{\mathfrak{P}})^*_{\Phi} &= \bigcup \{C \in \mathcal{S} \ | \ \P( C \cap (\Omega^{\mathfrak{P}})^*_{\Phi})=0 \text{ for all }\P \in \mathfrak{P}\}\cap (\Omega^{\mathfrak{P}})^*_{\Phi}\\
&=\bigcup\{C\in \mathcal{S} \ | \ C \cap (\Omega^{\mathfrak{P}})^*_{\Phi} \in \mathcal{N}^{\mathfrak{P}} \}\cap (\Omega^{\mathfrak{P}})^*_{\Phi}.
\end{align*}
Indeed, clearly $B \subseteq \bigcup\{C\in \mathcal{S} \ | \  C \cap (\Omega^{\mathfrak{P}})^*_{\Phi} \in \mathcal{N}^{\mathfrak{P}} \}$. 
Now assume towards a contradiction that there exists $$\omega \in \left(\bigcup\{C\in \mathcal{S} \ | \  C \cap (\Omega^{\mathfrak{P}})^*_{\Phi} \in \mathcal{N}^{\mathfrak{P}} \}\cap (\Omega^{\mathfrak{P}})^*_{\Phi}\right)\setminus (B\cap (\Omega^{\mathfrak{P}})^*_{\Phi}).$$ In particular $\omega \in C$ for some $C \in \mathcal{S}$ such that $C \cap (\Omega^{\mathfrak{P}})^*_{\Phi} \in \mathcal{N}^{\mathfrak{P}}$. By By \eqref{eq. app} there exists $n_0 \in \N$ such that $C_{n_0} \subseteq C$ and $\omega \in C_{n_0}$. This implies $\omega \in B$ and thus shows the claim.\\
Let us now first assume that $\Phi=0$ and set
\begin{align}\label{eq:star}
\Omega \coloneqq \Omega^{\mathfrak{P}} \setminus((\Omega^{\mathfrak{P}})^* \cap B) \in \mathcal{F}^{\mathcal{U}}.
\end{align}
By assumption we have $\P(\Omega)=1$ for all $\P \in \mathfrak{P}$. By definition of the $()^*$ operation
\begin{align*}
\Omega^{*}=((\Omega^{\mathfrak{P}})^* \setminus B)_{\Phi}^*=(\Omega^{\mathfrak{P}} \setminus B)^*
\end{align*}
follows. To see the above equality, take a martingale measure $\Q \in \mathcal{M}_{\Omega, \Phi}$ and assume that $\Q(\Omega \setminus (\Omega^{\mathfrak{P}} \setminus B)) >0$. As $\Omega \setminus (\Omega^{\mathfrak{P}} \setminus B)= \Omega^{\mathfrak{P}} \setminus (\Omega^{\mathfrak{P}})^*\cap B$ 
we conclude that $\Q(\Omega^{\mathfrak{P}} \setminus (\Omega^{\mathfrak{P}})^*)>0$. Since any calibrated martingale measure supported on a subset of $\Omega$ is in $\mathcal{M}_{\Omega^{\mathfrak{P}}}$ this leads to a contradiction to the definition of $(\Omega^{\mathfrak{P}})^*$. Also, $\Omega^{\mathfrak{P}} \setminus B=\Omega^{\mathfrak{P}} \cap B^c$ is the intersection of two analytic sets, so we conclude that $\Omega^*$ is analytic. Lastly, by definition of $\Omega^{\mathfrak{P}}$ we conclude $\Omega^{*}=(\Omega^{\mathfrak{P}})^*$ $\mathfrak{P}$-q.s..

The implications \textit{(1)} $\Rightarrow$ \textit{(2)} $\Rightarrow$ \textit{(3)} $\Rightarrow$ \textit{(4)} $\Rightarrow$ \textit{(5)} follow directly from the definition. Thus we only need to show \textit{(5)} $\Rightarrow$ \textit{(1)}. Let us fix $C \in \mathcal{S}$ such that $C \subseteq \Omega$. No Arbitrage de la Classe $\mathcal{S}$ on $\Omega$ implies that $\Omega^* \cap C\neq \emptyset$. 
From \eqref{eq:star} we thus conclude that $\P((\Omega^{\mathfrak{P}})^*\cap C)>0$ for some $\P \in \mathfrak{P}$. As $\Omega^*=(\Omega^{\mathfrak{P}})^*$ $\mathfrak{P}$-q.s. this implies $\P(\Omega^* \cap C)>0$.
Using a construction similar to the proof of Proposition \ref{Thm. bn vs bfhmo} for the case $\Phi=0$, we can find a measure $\tilde{\P} \in \mathfrak{P}$ such that $\tilde{\P}(C)>0$ and  0 is in the interior of the conditional support of $\tilde{\P}(\cdot |\Omega^{*})$. By  \citep[Theorem 1]{rokhlin2008proof}, we conclude that there exists a martingale measure $\Q \in \mathcal{Q}_{\Pfr}$ equivalent to $\tilde{\P}(\cdot|\Omega^{*})$, in particular $\Q(C)>0$. The case $\Phi \neq 0$ can now be treated similarly: indeed, we define $\Omega^{\mathfrak{P},\Phi}$ as in the proof of Theorem \ref{Thm. bn vs bfhmo} for $\Phi=0$, but now including the statically traded options $\Phi$ in the definition of the quasi-sure support and follow the same arguments as above. This concludes the proof.
\end{proof}

\subsection{Proof of Theorem \ref{Thm super}} \label{Sec proofsuper}
We first show that the quasi-sure superhedging theorem of \citep{bouchard2015arbitrage} implies the second part of Theorem \ref{Thm super}.

\begin{Prop}\label{prop sup}
Let $\Omega$ be an analytic subset of $X$ and $\Omega^*_{\Phi} \neq \emptyset$. Let the set $\mathfrak{P}$ satisfy (APS) and $\Nc^\Pfr=\Nc^{\mathcal{M}_{\Omega, \Phi}^f}$ 
Then $\Nc^{\mathcal{Q}_{\Pfr,\Phi}}=\Nc^{\mathcal{M}_{\Omega, \Phi}^f}$ and for an upper semianalytic function $g: X \to \R$ 
\begin{align}\label{eq. duality}
\sup_{\Q \in \mathcal{M}^f_{\Omega, \Phi}} \E_{\Q}[g]
&= \pi_{\Omega_{\Phi}^*}(g) = \pi^{\mathfrak{P}}(g)=\sup_{\Q \in \mathcal{Q}_{\Pfr,\Phi}}\E_{\Q}[g].
\end{align}
\end{Prop}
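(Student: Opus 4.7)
The strategy is to establish the four-way equality by combining the pathwise superhedging duality of \citep{bfhmo} on $\Omega_\Phi^*$ with the quasi-sure superhedging duality of \citep{bouchard2015arbitrage}, bridged by the polar-set hypothesis $\Nc^\Pfr = \Nc^{\mathcal{M}^f_{\Omega,\Phi}}$. I would dispose of the two outer equalities $\sup_{\Q \in \mathcal{M}^f_{\Omega,\Phi}} \E_\Q[g] = \pi_{\Omega^*_\Phi}(g)$ and $\pi^{\mathfrak{P}}(g) = \sup_{\Q \in \mathcal{Q}_{\Pfr,\Phi}} \E_\Q[g]$ by direct appeal to the cited duality theorems. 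The pathwise duality applies since $\Omega^*_\Phi \neq \emptyset$. For the quasi-sure duality I first verify $\textbf{NA}(\Pfr)$: a $\Pfr$-arbitrage $(h,H)$ would satisfy $h \cdot \Phi + H \circ S_T \ge 0$ outside a $\Pfr$-polar, hence $\mathcal{M}^f_{\Omega,\Phi}$-polar, set; integration against any finitely supported $\Q \in \mathcal{M}^f_{\Omega,\Phi}$ (where integrability is automatic) combined with the martingale property forces $h\cdot\Phi + H \circ S_T = 0$ $\Q$-a.s., so the strict-positivity set is $\mathcal{M}^f_{\Omega,\Phi}$-polar, hence $\Pfr$-polar, contradicting the arbitrage condition.

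Next I would bridge the two dualities via the identity $\pi_{\Omega_\Phi^*}(g) = \pi^{\Pfr}(g)$. The inequality $\pi^{\Pfr}(g) \le \pi_{\Omega_\Phi^*}(g)$ uses that $X \setminus \Omega_\Phi^*$ is $\mathcal{M}^f_{\Omega,\Phi}$-polar (each $\Q$ lives on $\Omega_\Phi^*$) and therefore $\Pfr$-polar, so every pathwise superhedge on $\Omega_\Phi^*$ automatically satisfies the inequality $\Pfr$-q.s. For the reverse, given any quasi-sure superhedge $(h,H)$ with capital $x$, the shortfall $\{x + h\cdot\Phi + H \circ S_T < g\}$ is $\Pfr$-polar, hence $\mathcal{M}^f_{\Omega,\Phi}$-polar; integrating against $\Q \in \mathcal{M}^f_{\Omega,\Phi}$ yields $\E_\Q[g] \le x$, so by the pathwise duality $\pi_{\Omega_\Phi^*}(g) \le x$, and the claim follows on infimizing over $x$. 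Attainment transfers because the pathwise attainer from \citep{bfhmo} on $\Omega_\Phi^*$ is, by the same polar-set argument, also a quasi-sure attainer.

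Finally I would verify $\Nc^{\mathcal{Q}_{\Pfr,\Phi}} = \Nc^{\mathcal{M}^f_{\Omega,\Phi}}$. The inclusion $\Nc^\Pfr \subseteq \Nc^{\mathcal{Q}_{\Pfr,\Phi}}$ is immediate from the absolute continuity built into the definition of $\mathcal{Q}_{\Pfr,\Phi}$. For the converse I invoke Theorem \ref{Thm. bn vs bfhmo}: since $\textbf{NA}(\Pfr)$ holds, for every $\P \in \Pfr$ there is $\Q \in \mathcal{Q}_{\Pfr,\Phi}$ with $\P \ll \Q$, so any $\mathcal{Q}_{\Pfr,\Phi}$-polar set is $\Pfr$-polar. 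Combining with the standing hypothesis yields $\Nc^{\mathcal{Q}_{\Pfr,\Phi}} = \Nc^\Pfr = \Nc^{\mathcal{M}^f_{\Omega,\Phi}}$. The main obstacle is the bridging step $\pi_{\Omega_\Phi^*} = \pi^\Pfr$, where one has to carefully exploit the polar-set identification together with finite support of elements of $\mathcal{M}^f_{\Omega,\Phi}$ in order to move between $\Q$-almost sure and pointwise statements without losing integrability.
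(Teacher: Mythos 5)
Your proof is correct, but it takes a genuinely different route from the paper's. The paper does \emph{not} cite the pathwise duality of \citep{bfhmo}: instead it sets $\mathfrak{P}^{\Omega}:=\mathcal{P}^f(\Omega^*_\Phi)$, proves that $\Omega^*_\Phi$ is analytic (via the analytic set $\mathcal{P}_{Z,\Phi}$ from \citep{burzoni2015model}) and hence that $\mathfrak{P}^\Omega$ satisfies (APS), and then applies the superhedging theorem of \citep{bouchard2015arbitrage} to $\mathfrak{P}^\Omega$. Because $\mathfrak{P}^\Omega$-polar sets are exactly the subsets of $X$ disjoint from $\Omega^*_\Phi$, one gets $\pi^{\mathfrak{P}^\Omega}=\pi_{\Omega^*_\Phi}$ automatically, and because every $\Q\ll\P\in\mathfrak{P}^\Omega$ is finitely supported on $\Omega^*_\Phi$, one gets $\mathcal{Q}_{\mathfrak{P}^\Omega,\Phi}=\mathcal{M}^f_{\Omega,\Phi}$; the entire duality chain then drops out of a single application of the quasi-sure theorem. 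Your route instead takes \emph{both} the pathwise duality on $\Omega^*_\Phi$ and the quasi-sure duality for the given $\mathfrak{P}$ as black boxes and ties them together via the polar-set hypothesis — clean and arguably more transparent, but it forgoes the paper's point that pathwise duality can be recovered as a corollary of the quasi-sure theorem applied to a cleverly chosen prior set, which is exactly what the unification program is after. Your verification of $\textbf{NA}(\mathfrak{P})$, the two polar-set-driven inequalities in the bridge, and the transfer of attainment are all correct; finite support of the $\Q\in\mathcal{M}^f_{\Omega,\Phi}$ indeed resolves all integrability issues.

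One dependency issue to flag: in the last step you invoke Theorem~\ref{Thm. bn vs bfhmo} to get $\Nc^{\mathcal{Q}_{\Pfr,\Phi}}\subseteq\Nc^\Pfr$, but the proof of Theorem~\ref{Thm. bn vs bfhmo} for $\Phi\neq 0$ itself calls on Theorem~\ref{Thm super}, of which the present proposition is the second half. So as written your argument risks circularity. This is easy to repair — replace the appeal to Theorem~\ref{Thm. bn vs bfhmo} by a direct appeal to the FTAP of \citep{bouchard2015arbitrage} (which gives $\P\ll\Q$ for some $\Q\in\mathcal{Q}_{\Pfr,\Phi}$ under $\textbf{NA}(\Pfr)$, and which you already verified), or follow the paper and instead use \citep[Lemma 2]{bfhmo} together with the fact that $\Nc^\Pfr=\Nc^{\mathcal{M}^f_{\Omega,\Phi}}$ forces every $\Q\in\mathcal{Q}_{\Pfr,\Phi}$ to live on $\Omega^*_\Phi$ — but do make the substitution explicit.
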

\begin{proof}
That $\mathcal{Q}_{\Pfr, \Phi}$ and $\mathcal{M}_{\Omega,\Phi}^f$ have the same polar sets follows by the definition of $\Omega^*_{\Phi}$ and \cite[Lemma 2]{bfhmo}. We now show \eqref{eq. duality}: consider
\begin{align*}
\mathfrak{P}^{\Omega}\coloneqq\mathcal{P}^f(\Omega^*_{\Phi}).
\end{align*}
Note that there is no $\mathcal{M}^f_{\Omega, \Phi}$-q.s. arbitrage iff there is no $\mathfrak{P}^{\Omega}$-q.s arbitrage.\\
We now show that $\Omega^*_{\Phi}$ is analytic if $\Omega$ is analytic.
Recall the set $\mathcal{P}_{Z, \Phi}$ from Lemma 5.4 of \citep{burzoni2015model}, page 13 defined by
\begin{align*}
\mathcal{P}_{Z,\Phi}\coloneqq \left\{ \P \in \mathcal{P}^f(X) \ | \ \exists \Q \in \mathcal{M}^f_{X,\Phi} \text{ such that } \frac{d\Q}{d\P}=\frac{c(\P)}{1+Z}\right\},
\end{align*}
where $Z= \max_{i=1, \dots, d} \max_{t=0, \dots, T} S_t^i$ and $c(\P)=(\E_{\P}[1+Z]^{-1})^{-1}$. \citep{burzoni2015model} show that the set 
\begin{align*}
\{(\omega, \P) \ | \ \omega \in X^*, \ \P \in \mathcal{P}^{\omega} \}
\end{align*}
is analytic, where $\mathcal{P}^{\omega}= \{ \P \in \mathcal{P}_{Z,\Phi} \ | \ \P(\{\omega \})>0 \}$. Note that
\begin{align*}
\{(\omega, \P) \ | \ \omega \in X^*, \ \P \in \mathcal{P}^{\omega} \} \cap \left(\Omega \times \mathcal{P}^f(\Omega) \right)
\end{align*}
is analytic and the projection of the above set to the first coordinate is exactly $\Omega^*_{\Phi}$, which shows that $\Omega^*_{\Phi}$ is analytic. 
We note $\omega \mapsto \mathfrak{P}^{\Omega}_{t}(\omega)= \mathcal{P}^f(\text{proj}_{t+1}(\Sigma_t^{\omega} \cap \Omega^*_{\Phi}))$ has analytic graph by exactly the same argument as in the proof of Proposition \ref{prop 2} replacing $\Omega$ by $\Omega^*_{\Phi}$. The result now follows from the Superhedging Theorem of \citep{bouchard2015arbitrage} and the definition of $\mathcal{M}_{\Omega,\Phi}^f$. 
\end{proof}

We now show that the classical $\P$-a.s. one-step superhedging duality can be deduced by means of pathwise reasoning:
\begin{Lem}\label{lem. p-superhedging}
Let $t\in \{0, \dots, T-1\}$ and $g: X_{t+1} \to \R$ be $\mathcal{F}^{\mathcal{U}}_{t+1}$-measurable. Let $\P \in \mathcal{P}(X_1)$ and fix $\omega \in X_t$ such that \textbf{NA}$(\P)$ holds for the one-period model $(S_t(\omega), S_{t+1}(\omega,\cdot))$. Then
\begin{align*}
\sup_{\Q \sim \P, \ \Q \in \mathcal{M}_{X_1}} \E_{\Q}[g(\omega, \cdot)]= \inf\{x \in \R \ | \ \exists H \in \R^d \text{ s.t. } x+H\Delta S_{t+1}(\omega, \cdot) \ge g(\omega, \cdot) \ \P\text{-a.s.}\}.
\end{align*}
\end{Lem}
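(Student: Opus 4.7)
The plan is to reduce the $\P$-a.s.\ one-step duality to the pathwise one-step duality of \citep{bfhmo} by encoding the pair $(\Delta S_{t+1}(\omega,\cdot), g(\omega,\cdot))$ as a pushforward to $\R^{d+1}$.

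First, I define $\mu\coloneqq \P\circ (\Delta S_{t+1}(\omega,\cdot), g(\omega,\cdot))^{-1}$ on $\R^{d+1}$ and set $K\coloneqq\text{supp}(\mu)$, which is closed and hence analytic. I view $K$ as the prediction set of an auxiliary one-period market whose price increment is $\pi_1:(y,z)\mapsto y$ (projection onto the first $d$ coordinates) and whose claim is $\pi_2:(y,z)\mapsto z$. The assumption $\mathbf{NA}(\P)$ for the original model amounts to $0\in\text{ri}(\text{conv}(\text{supp}(\P\circ\Delta S_{t+1}(\omega,\cdot)^{-1})))$, so the relative-interior construction used in the proof of Theorem~\ref{Thm. bn vs bfhmo} for $\Phi=0$ shows that every point of $K$ is charged by some finitely supported martingale measure on $K$; that is, $K^* = K$ for the auxiliary market.

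Second, I apply the one-step pathwise superhedging duality of \citep{bfhmo} on $K$ (equivalently, Theorem~\ref{Thm super}'s pathwise half with $\Phi=0$) to obtain
\begin{align*}
\pi_K(\pi_2) = \sup_{\nu\in\mathcal{M}^f_K}\int z\,d\nu,
\end{align*}
where $\mathcal{M}^f_K$ denotes the finitely supported martingale measures on $K$ whose first marginal has zero mean. Because $(y,z)\mapsto x+Hy-z$ is continuous on $\R^{d+1}$ and $K$ is the smallest closed set of full $\mu$-measure, the pointwise inequality $x+Hy\ge z$ on $K$ is equivalent to its $\mu$-a.s.\ counterpart, which by the pushforward identity is equivalent to $x+H\Delta S_{t+1}(\omega,\cdot)\ge g(\omega,\cdot)$ $\P$-a.s. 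Therefore $\pi_K(\pi_2) = \pi^\P(g)$, the right-hand side of the lemma.

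Finally, I match the two suprema. The bound $\sup_{\Q\sim\P}\E_\Q[g(\omega,\cdot)] \le \pi^\P(g)$ is the standard integration argument: integrating any superhedging inequality against a martingale $\Q\sim\P$ gives $x\ge \E_\Q[g(\omega,\cdot)]$. For the reverse, a finitely supported $\nu\in\mathcal{M}^f_K$ is lifted to a measure on $X_1$ via measurable selection of preimages under $(\Delta S_{t+1}(\omega,\cdot), g(\omega,\cdot))$, and then mixed with an equivalent martingale measure $\Q^\star\sim\P$ whose existence under $\mathbf{NA}(\P)$ is the classical one-period Dalang-Morton-Willinger theorem; the mixture is equivalent to $\P$ and has expectation of $g(\omega,\cdot)$ arbitrarily close to $\int z\,d\nu$. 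The main obstacle is this final step: atoms of $\nu$ in $K$ may lie outside the image of $(\Delta S_{t+1}(\omega,\cdot), g(\omega,\cdot))$ (since $K$ is the closure of that image), so one either approximates $\nu$ by measures whose atoms lie in the image, or appeals to a disintegration of $\P$ along $(\Delta S_{t+1}(\omega,\cdot), g(\omega,\cdot))$ to build a genuine martingale measure on $X_1$ equivalent to $\P$.
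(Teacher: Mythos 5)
Your strategy is essentially the paper's, but with one genuine improvement: by pushing forward the pair $(\Delta S_{t+1}(\omega,\cdot),g(\omega,\cdot))$ to $\R^{d+1}$ and working on $K=\mathrm{supp}(\mu)$, the claim becomes the coordinate map $\pi_2$, which is automatically continuous on $K$. The paper instead pushes forward only $\Delta S_{t+1}$, so the claim $\tilde g$ on $\chi^\P$ is merely Borel, and it has to invoke Lusin's theorem to find compacts $K_n$ on which $\tilde g$ is continuous, then pass to the limit. Your route bypasses the Lusin reduction entirely, and the identification of the $\P$-a.s.\ inequality with the pointwise inequality on $K$ is immediate from continuity of $(y,z)\mapsto x+Hy-z$. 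That is a real streamlining of the argument, and your sketch of $K^*=K$ is correct in spirit (using that $\pi_1(K)$ is dense in $\mathrm{supp}(\P\circ\Delta S_{t+1}(\omega,\cdot)^{-1})$, so $\mathrm{conv}(\pi_1(K))$ has the same closure and hence the same relative interior; one then picks finitely many points of $\pi_1(K)$ realising $0$ as a strictly positive convex combination including any given $y_0$, and lifts each to a point of $K$). Minor point: the parenthetical ``equivalently, Theorem~\ref{Thm super}'s pathwise half'' should be dropped — the lemma is an ingredient in the proof of Theorem~\ref{Thm super}, so you must invoke the one-step duality of \citep{bfhmo} directly, not the theorem you are helping to establish.

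The gap is in the final step. You have $\sup_{\Q\sim\P}\E_\Q[g(\omega,\cdot)]\le\pi^\P(g)=\pi_K(\pi_2)=\sup_{\nu\in\mathcal M^f_K}\int z\,d\nu$, and to close the circle you need $\sup_{\nu\in\mathcal M^f_K}\int z\,d\nu\le\sup_{\Q\sim\P}\E_\Q[g(\omega,\cdot)]$. You propose lifting a finitely supported $\nu$ on $K$ to a measure on $X_1$, and correctly flag that atoms of $\nu$ need not lie in the range of $(\Delta S_{t+1}(\omega,\cdot),g(\omega,\cdot))$ — $K$ is the support, hence a closure of that range — so no measurable-selection lift exists in general, and the ``disintegration'' fallback is not spelled out. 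A cleaner move, which is what the paper does, is never to lift back to $X_1$ at all: equivalent martingale measures on $\R^{d+1}$ with respect to $\mu$ correspond bijectively to equivalent martingale measures on $X_1$ with respect to $\P$ via pushforward/density transport, so the statement reduces to $\sup_{\nu\in\mathcal M^f_K}\int z\,d\nu\le\sup_{\Q\sim\mu,\,\Q\text{ mart}}\int z\,d\Q$. That inequality is exactly what the paper asserts (in one line, for $\chi^\P$ in place of $K$) and is the genuinely delicate point of the whole lemma: a finitely supported $\nu$ on $\mathrm{supp}(\mu)$ is typically singular to $\mu$, so one must approximate its atoms by $\mu$-conditionals on small balls, then correct the perturbed mean back to zero while staying equivalent to $\mu$ and controlling $\int z$. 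Your proposal, like the paper's proof, leaves this step open; if you want the proof to be self-contained this is where the real work remains.
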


\begin{proof}
As $g$ is $\mathcal{F}^{\mathcal{U}}_{t+1}$-measurable, by \citep[Lemma 7.27, p.173]{bertsekas1978stochastic} there exists a Borel-measurable function $\tilde{g}:(\R^d)^{t+1} \to \R$ such that $g(\omega)=\tilde{g}(S_{0:t+1}(\omega))$ for $\P$-a.e. $\omega \in X_{t+1}$. Assume first that $S_{t+1}\mapsto \tilde{g}(S_{0:t}(\omega),S_{t+1})$ is continuous. Define $\chi^{\P}\coloneqq\text{supp}(\P \circ \Delta S_{t+1}(\omega, \cdot)^{-1})$. Then as \textbf{NA}$(\P)$ holds $\chi^{\P}=(\chi^{\P})^*$ and thus by \citep[Theorem 2]{bfhmo} and continuity of $S_{t+1}\mapsto \tilde{g}(S_{0:t}(\omega),S_{t+1})$ as well as $S_{t+1} \mapsto H(S_{t+1}-S_t(\omega))$
\begin{align*}
\sup_{\Q \sim \P, \ \Q \in \mathcal{M}_{X_1}} \E_{\Q}[g(\omega, \cdot)]&\le \inf\{x \in \R \ | \ \exists H \in \R^d \text{ s.t. } x+H\Delta S_{t+1}(\omega, \cdot) \ge g(\omega, \cdot) \ \P\text{-a.s.}\}  \\
&=\inf\{x \in \R \ | \ \exists H \in \R^d \text{ s.t. } x+H\Delta S_{t+1}(\omega, \cdot) \ge \tilde{g}(S_{0:t}(\omega), \cdot) \ \text{on } \chi^{\P}\}\\
&= \sup_{\Q \in \mathcal{M}^f_{\chi^{\P}}} \E_{\Q}[\tilde{g}(S_{0:t}(\omega), \cdot)]\\
&\le \sup_{\Q \sim \P\circ\Delta S_{t+1}(\omega, \cdot)^{-1}, \ \Q \in \mathcal{M}_{\R^d}} \E_{\Q}[\tilde{g}(S_{0:t}(\omega), \cdot)]\\
&=\sup_{\Q \sim \P, \ \Q \in \mathcal{M}_{X_1}} \E_{\Q}[g(\omega, \cdot)].
\end{align*}
If $S_{t+1}\mapsto \tilde{g}(S_{0:t}(\omega),S_{t+1})$ is Borel-measurable, then by Lusin's theorem (see \citep[Theorem 7.4.3, p.227]{cohn2013measure}) there exists an increasing sequence of compact sets $(K_n)_{n\in \N}$ such that  $K_n \subseteq \chi^{\P}$, $\P \circ \Delta S_{t+1}(\omega, \cdot)^{-1}(K_n^c)\le 1/n$ and $\tilde{g}(S_{0:t}(\omega), \cdot)|_{K_n}$ is continuous. In particular there exists $n_0 \in \N$ such that for all $n\ge n_0$ we have $K_n=(K_n)^*$. By the above argument
\begin{align}\label{eq. jw}
&\inf\{x \in \R \ | \ \exists H \in \R^d \text{ s.t. } x+H\Delta S_{t+1}(\omega, \cdot) \ge \tilde{g}(S_{0:t}(\omega), \cdot) \ \text{on } K_n\}\\
&\qquad = \sup_{\Q \sim \P\circ\Delta S_{t+1}(\omega, \cdot)^{-1} \, \ \Q \in \mathcal{M}_{K_n}} \E_{\Q}[g(\omega, \cdot)] \nonumber
\end{align}
holds for $n \ge n_0$. The claim now follows by taking suprema in $n \in \N$ on both sides of \eqref{eq. jw}:
\begin{align*}
&\inf\{x \in \R \ | \ \exists H \in \R^d \text{ s.t. } x+H\Delta S_{t+1}(\omega, \cdot) \ge g(\omega, \cdot) \ \P\text{-a.s.}\}\\
= \ &\sup_{n \in \N}\inf\{x \in \R \ | \ \exists H \in \R^d \text{ s.t. } x+H\Delta S_{t+1}(\omega, \cdot) \ge \tilde{g}(S_{0:t}(\omega), \cdot) \ \text{on } K_n\} \\
=\ &\sup_{n \in \N}\sup_{\Q \sim \P\circ\Delta S_{t+1}(\omega, \cdot)^{-1} \, \ \Q \in \mathcal{M}_{K_n}} \E_{\Q}[\tilde{g}(S_{0:t}(\omega), \cdot)] \\
\le \ &\sup_{\Q \sim \P\circ\Delta S_{t+1}(\omega, \cdot)^{-1} \, \ \Q \in \mathcal{M}_{\R^d}} \E_{\Q}[\tilde{g}(S_{0:t}(\omega), \cdot)]\\
\le\ &\inf\{x \in \R \ | \ \exists H \in \R^d \text{ s.t. } x+H\Delta S_{t+1}(\omega, \cdot) \ge g(\omega, \cdot) \ \P\text{-a.s.}\}.
\end{align*}
\end{proof}

Using this one-step duality result under fixed $\P$ and (APS) of $\mathfrak{P}$ we now prove the first part of Theorem \ref{Thm super}, which is restated in the following proposition:
\begin{Prop}\label{prop. joh1}
Let \textbf{NA}$(\mathfrak{P})$ hold and let $g: X \to \R$ be upper semianalytic. Then there exists a measure $\P^g= \P^g_0 \otimes \cdots \otimes \P^g_{T-1}$ and an $\mathcal{F}^{\mathcal{U}}$-measurable set $\Omega_g^{\mathfrak{P}}$ with $\P(\Omega^{\mathfrak{P}}_g)=1$ for all $\P \in \mathfrak{P}$, such that
\begin{align}
\pi^{\mathfrak{P}}(g)=\pi^{\hat{\P}}(g)=\pi_{(\Omega_g^{\mathfrak{P}})_{\Phi}^*}(g)= \sup_{\Q \in \mathcal{M}_{\Omega^{\mathfrak{P}}_g,\Phi}} \E_{\Q}[g]=\sup_{\Q \in \mathcal{Q}_{\Pfr,\Phi}}\E_{\Q}[g].\nonumber
\end{align}
\end{Prop}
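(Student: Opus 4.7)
The approach is to reduce the $\mathfrak{P}$-quasi-sure superhedging problem to a $\P^g$-almost sure one through a backward dynamic programming recursion in the spirit of \citep{bouchard2015arbitrage}, and then invoke the one-step duality of Lemma \ref{lem. p-superhedging}. First I would, for each $t$ and each $\omega \in X_t$, construct a universally measurable kernel $\omega \mapsto \P^g_t(\omega) \in \mathfrak{P}_t(\omega)$ whose one-step pushforward $\P^g_t(\omega) \circ \Delta S_{t+1}(\omega, \cdot)^{-1}$ has support equal to the quasi-sure support
\[
\tilde{\chi}_{\mathcal{F}^\nf_t}(\omega) = \overline{\bigcup_{\P \in \mathfrak{P}_t(\omega)} \mathrm{supp}\bigl(\P \circ \Delta S_{t+1}(\omega, \cdot)^{-1}\bigr)}
\]
from the proof of Theorem \ref{Thm. bn vs bfhmo}. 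Separability of the Polish space $X_1$ lets me pick a countable family of dense points inside $\tilde{\chi}_{\mathcal{F}^\nf_t}(\omega)$ in a measurable way, and the Jankov--von Neumann theorem then yields measures $\P^{(k)}_t(\omega) \in \mathfrak{P}_t(\omega)$ placing mass in each of the corresponding neighbourhoods. The candidate $\P^g_t(\omega) := \sum_{k \ge 1} 2^{-k} \P^{(k)}_t(\omega)$ has the required support; convexity of $\mathfrak{P}_t(\omega)$ together with the (APS) structure is used (via an approximation by finite convex combinations if necessary) to keep $\P^g_t(\omega)$ inside $\mathfrak{P}_t(\omega)$, so that the product $\P^g = \P^g_0 \otimes \cdots \otimes \P^g_{T-1}$ belongs to $\mathfrak{P}$.

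Next I would set
\[
\Omega^{\mathfrak{P}}_g = \bigcap_{t=0}^{T-1}\bigl\{\omega \in X : \Delta S_{t+1}(\omega) \in \tilde{\chi}_{\mathcal{F}^\nf_t}(\omega)\bigr\},
\]
which lies in $\mathcal{F}^{\mathcal{U}}$ and has full measure under every $\P \in \mathfrak{P}$ by Fubini, mirroring the $\Omega^{\mathfrak{P}}$ built in Theorem \ref{Thm. bn vs bfhmo}. The key one-step identity to establish is that, for every upper semianalytic $f$ and every $\omega$ where $\mathbf{NA}(\mathfrak{P}_t(\omega))$ holds, the one-step $\P^g_t(\omega)$-a.s.\ superhedging price of $f$ coincides with the $\mathfrak{P}_t(\omega)$-quasi-sure one. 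For continuous $f$ this is immediate: the closed inequality $x + H \Delta S_{t+1}(\omega, \cdot) \ge f$ holds $\mathfrak{P}_t(\omega)$-q.s.\ iff it holds on $\tilde{\chi}_{\mathcal{F}^\nf_t}(\omega)$, which by construction equals $\mathrm{supp}(\P^g_t(\omega) \circ \Delta S_{t+1}(\omega,\cdot)^{-1})$; the extension to upper semianalytic $f$ is obtained by Lusin-type compact exhaustion exactly as in the proof of Lemma \ref{lem. p-superhedging}.

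Iterating this one-step identity along the Bouchard--Nutz dynamic programming recursion then yields $\pi^{\mathfrak{P}}(g) = \pi^{\P^g}(g)$, while \citep{bouchard2015arbitrage} deliver $\pi^{\mathfrak{P}}(g) = \sup_{\Q \in \mathcal{Q}_{\mathfrak{P}, \Phi}} \E_\Q[g]$ with attainment by some semistatic strategy in $\admU$. The pathwise identity $\pi_{(\Omega^{\mathfrak{P}}_g)^*_\Phi}(g) = \sup_{\Q \in \mathcal{M}_{\Omega^{\mathfrak{P}}_g, \Phi}} \E_\Q[g]$ follows from the pathwise superhedging duality of \citep{bfhmo} applied to $\Omega^{\mathfrak{P}}_g$, in direct analogy with Proposition \ref{prop sup}. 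Finally, since $\Omega^{\mathfrak{P}}_g$ is, by construction, a $\mathfrak{P}$-q.s.\ version of the quasi-sure support of $\mathfrak{P}$, the polar sets of $\mathcal{Q}_{\mathfrak{P}, \Phi}$ and $\mathcal{M}_{\Omega^{\mathfrak{P}}_g, \Phi}$ coincide, which closes the chain of equalities.

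The main obstacle is the measurable construction of $\P^g_t(\omega)$: keeping the countable convex combination of selectors inside $\mathfrak{P}_t(\omega)$ while having it attain the full quasi-sure support, and preserving universal measurability in $\omega$. This requires verifying analyticity of several auxiliary correspondences (support multifunctions, intersections with open balls in $\R^d$) needed for Jankov--von Neumann selection, together with a careful approximation argument to bridge the gap between the convexity of $\mathfrak{P}_t(\omega)$, which directly guarantees only closure under finite convex combinations, and the countable combination used in the construction.
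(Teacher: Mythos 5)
Your proposal has a structurally different — and ultimately incorrect — design for both $\P^g$ and $\Omega^{\mathfrak{P}}_g$, and both flaws are fatal.

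\textbf{The set $\Omega^{\mathfrak{P}}_g$ must depend on $g$, but yours does not.} You define $\Omega^{\mathfrak{P}}_g$ as the intersection of the conditional quasi-sure supports, which is precisely the $g$-independent set $\Omega^{\mathfrak{P}}$ from the FTAP construction. The paper explicitly warns, in the paragraph preceding the theorem, that the pathwise superhedging price on $(\Omega^{\mathfrak{P}})^*_{\Phi}$ can strictly exceed $\pi^{\mathfrak{P}}(g)$: with $d=T=S_0=1$, $\Phi=0$, $g=\mathds{1}_{\{S_1=0\}}$, $\mathfrak{P}=\{\frac12\lambda|_{[0,2]}\}$, the quasi-sure support is $[0,2]$, so your $\Omega^{\mathfrak{P}}_g=[0,2]$ gives $\pi_{[0,2]}(g)=1/2$, whereas $\pi^{\mathfrak{P}}(g)=0$. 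The paper instead defines $\Omega^{\mathfrak{P}}_g$ \emph{after} obtaining an optimal $\mathfrak{P}$-q.s. superhedging strategy $(h,H)$: it is $\Omega^{\mathfrak{P}}$ intersected with the set $\{\omega : \sup_{\Q\in \mathcal{Q}_{\Pfr,\Phi}} \E_\Q[g]+h\cdot\Phi(\omega)+(H\circ S_T)(\omega)\geq g(\omega)\}$, which is $g$-dependent (in the example this removes $\{0\}$, and the resulting pathwise price becomes $0$).

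\textbf{Choosing $\P^g_t(\omega)$ to attain the full quasi-sure support does not yield the one-step equality $\pi^{\P^g_t(\omega)}=\pi^{\mathfrak{P}_t(\omega)}$.} Topological density of the support misses the atomic/measure-theoretic structure that drives the $\mathfrak{P}$-q.s. price. Take $T=d=1$, $S_0=1$, $\mathfrak{P}_0=\mathrm{conv}(\{\P\in\mathcal{P}([0,2]):\P\ll\lambda\}\cup\{\delta_{1/2}\})$, and $g=\mathds{1}_{\{S_1=1/2\}}$. The quasi-sure support is $[0,2]$. A full-support measure built from absolutely-continuous selectors (say $\lambda/2$) has $\pi^{\P^g}(g)=0$, but $\pi^{\mathfrak{P}}(g)=2/3$ since $\delta_{1/2}\in\mathfrak{P}_0$ forces the hedge to cover $S_1=1/2$. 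Nothing in your selection procedure forces inclusion of a measure charging the atom. The paper's construction instead builds $\P^g_t(\omega)=\sum_k 2^{-k}\P_k$ from a sequence $(\P_n)\subset\mathfrak{P}_t(\omega)$ that is $\varepsilon$-optimal for the specific one-step Bouchard--Nutz value $\mathcal{E}_t(g)(\omega)=\sup_{\Q\in\mathcal{Q}_t(\omega)}\E_\Q[\mathcal{E}_{t+1}(g)(\omega,\cdot)]$ (so $\P^g_t$ is inherently $g$-dependent), and then establishes $\pi^{\P^g_t(\omega)}(g)=\pi^{\mathfrak{P}_t(\omega)}(g)$ by a delicate contradiction argument that exhausts $\R^d$ by compacts $K_n$, covers the candidate hedges by a countable dense family, and adjoins further measures $\P_n^{H^k_n}$ to $\P^g_t$ to defeat any hypothetical gap. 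Finally, the appeal to ``Lusin-type compact exhaustion exactly as in Lemma~\ref{lem. p-superhedging}'' does not transfer: Lemma~\ref{lem. p-superhedging} is a single-measure statement, while the equality you need to extend compares a fixed measure with an entire family $\mathfrak{P}_t(\omega)$, for which there is no single Lusin set.
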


\begin{proof}
We note that by \textbf{NA}$(\mathfrak{P})$ and Theorem \ref{Thm. bn vs bfhmo} the difference $\Omega^{\mathfrak{P}}\setminus (\Omega^{\mathfrak{P}})^*_{\Phi}$ is $\mathfrak{P}$-polar. We first take $\Phi=0$. Recall the definition of the one-step functionals given in \citep[Lemma 4.10, p. 846]{bouchard2015arbitrage} 
\begin{align*}
\mathcal{E}_{T}(g)(\omega)&=g(\omega)\\
\mathcal{E}_{t}(g)(\omega)&= \sup_{\Q \in \mathcal{Q}_{t}(\omega)} \E_{\Q}[\mathcal{E}_{t+1}(g)(\omega, \cdot)], \hspace{0.5cm} t=0, \dots, T-1.
\end{align*}
By (APS) and upper semianalyticity of g, every $\mathcal{E}_t(g)$ is upper semianalytic. We show recursively that for every $t=0, \dots, T-1$ and for $\mathfrak{P}$-q.e. $\omega \in X_t$ there exists a measure $\P \in \mathcal{P}(X_1)$ such that \textbf{NA}$(\P)$ holds and
\begin{align*}
\sup_{\Q \in \mathcal{Q}_t(\omega)} \E_{\Q}[\mathcal{E}_{t+1}(g)(\omega, \cdot)]=\sup_{\Q \sim \P, \ \Q \in \mathcal{M}_{X_1}} \E_{\Q}[\mathcal{E}_{t+1}(g)(\omega, \cdot)].
\end{align*}
Note that by measurable selection arguments and construction of $\Omega^{\mathfrak{P}}$ we conclude that for $\mathfrak{P}$-q.e. $\omega \in X_t$ the properties \textbf{NA}$(\mathfrak{P}_t(\omega))$ and $\P(\text{proj}_{t+1}(\Omega^{\mathfrak{P}} \cap \Sigma_{t}^{\omega}))=1$ hold for all $\P \in \mathfrak{P}_{t}(\omega)$. We now fix $t \in \{0, \dots, T-1\}$ and $\omega \in X_t$ such that \textbf{NA}$(\mathfrak{P}_t(\omega))$ and $\P(\text{proj}_{t+1}(\Omega^{\mathfrak{P}} \cap \Sigma_{t}^{\omega}))=1$ for all $\P \in \mathfrak{P}_{t}(\omega)$ holds. Note that there exists a sequence $(\P_n)_{n \in \N}$ such that $\P_n \in \mathfrak{P}_t(\omega)$ for all $n \in \N$ and 
\begin{align*}
\sup_{\Q \ll \P_n, \ \Q \in \mathcal{M}_{X_1}} \E_{\Q}[\mathcal{E}_{t+1}(g)(\omega, \cdot)] \uparrow \sup_{\Q \in \mathcal{Q}_t(\omega)} \E_{\Q}[\mathcal{E}_{t+1}(g)(\omega, \cdot)] \quad (n \to \infty).
\end{align*}
We see from the proof of Theorem \ref{Thm. bn vs bfhmo} for $\Phi=0$ in Section \ref{subseq. ftap} that under \textbf{NA}($\mathfrak{P}_t(\omega)$) and for a fixed $\P \in \mathfrak{P}_t(\omega)$, we can always find $\tilde{\P} \in \mathfrak{P}_t(\omega)$ such that $\P \ll \tilde{\P}$ and \textbf{NA}($\tilde{\P}$) holds. Thus we can assume without loss of generality that \textbf{NA}$(\P_n)$ holds for all $n \in \N$. Define $\hat{\P}_n:= \sum_{k=1}^{n}2^{-k}/(1-2^{-n})\P_k \in \mathfrak{P}_t(\omega)$ as well as $\P^g_t(\omega):= \sum_{k=1}^{\infty}2^{-k} \P_k$ and note that \textbf{NA}$(\hat{\P}_n)$ as well as \textbf{NA}$(\P^g_t(\omega))$ hold for all $n \in \N$. Furthermore 
\begin{align*}
\mathcal{E}_t(g)(\omega)&=\sup_{n \in \N}\sup_{\Q \ll \P_n, \ \Q \in \mathcal{M}_{X_1}} \E_{\Q}[\mathcal{E}_{t+1}(g)(\omega, \cdot)] \le \sup_{n \in \N}\sup_{\Q \sim \hat{\P}_n, \ \Q \in \mathcal{M}_{X_1}} \E_{\Q}[\mathcal{E}_{t+1}(g)(\omega, \cdot)]\\
&\le\sup_{\Q \sim \P^g_t(\omega), \ \Q \in \mathcal{M}_{X_1}} \E_{\Q}[\mathcal{E}_{t+1}(g)(\omega, \cdot)]
\\
&=\inf \{ x \in \R \ | \ \exists H\in\R^d \text{ s.t. } x + H\Delta S_{t+1}(\omega, \cdot) \ge \mathcal{E}_{t+1}(g)(\omega, \cdot)) \ \P^g_t(\omega)\text{-a.s.}\},
\end{align*}
where the last equality follows from Lemma \ref{lem. p-superhedging}. Define
\begin{align*}
\pi^{\P^g_t(\omega)}_t(g)&=\inf \{ x \in \R \ | \ \exists H\in\R^d \text{ s.t. } x + H\Delta S_{t+1}(\omega, \cdot) \ge \mathcal{E}_{t+1}(g)(\omega, \cdot) \ \P^g_t(\omega)\text{-a.s.}\}\\
\pi^{\mathfrak{P}_t(\omega)}(g)&=\inf \{ x \in \R \ | \ \exists H\in\R^d \text{ s.t. } x + H\Delta S_{t+1}(\omega, \cdot) \ge \mathcal{E}_{t+1}(g)(\omega, \cdot) \ \mathfrak{P}_t(\omega)\text{-q.s.}\}.
\end{align*}
Clearly $\pi^{\P^g_t(\omega)}(g) \le \pi^{\mathfrak{P}_t(\omega)}(g)$. Now assume towards a contradiction that the inequality is strict and set $\epsilon:=\pi^{\mathfrak{P}_t(\omega)}(g)-\pi^{\P^g_t(\omega)}(g)>0$. Furthermore note that for a sequence of compact sets $(K_n)_{n \in \N}$ such that $K_n \uparrow \R^d$ we have
\begin{align*}
&\pi^{\P^g_t(\omega)|_{K_n}}(g):= \inf \{ x \in \R \ | \ \exists H\in\R^d \text{ s.t. } x + H\Delta S_{t+1} \ge \mathcal{E}_{t+1}(g) \ \P^g_{t}(\omega)(\cdot|\Delta S_{t+1}(\omega, \cdot) \in K_n)\text{-a.s.}\}\\
&\qquad \qquad\uparrow \pi^{\P^g_t(\omega)}(g) \quad (n \to \infty),\\
&\pi^{\mathfrak{P}_t(\omega)|_{K_n}}(g):=\inf \{ x \in \R \ | \ \exists H\in\R^d \text{ s.t. } x + H\Delta S_{t+1} \ge \mathcal{E}_{t+1}(g)\ \mathfrak{P}_t(\omega)|_{K_n}\text{-q.s.}\}\\
&\qquad \qquad \uparrow \pi^{\mathfrak{P}_t(\omega)}(g) \quad (n \to \infty),
\end{align*}
where $$\mathfrak{P}_t(\omega)|_{K_n}:= \{ \P(\cdot| \Delta S_{t+1}(\omega, \cdot) \in K_n) \ | \ \P \in \mathfrak{P}_t(\omega), \ \P(\Delta S_{t+1}(\omega, \cdot) \in K_n) >0\}.$$
Choose $n \in \N$ large enough, such that $\pi^{\mathfrak{P}_t(\omega)|_{K_n}}(g)- \pi^{\P^g_t(\omega)|_{K_n}}(g)>3\epsilon/4.$ Denote by $\mathcal{H}_{K_n}$ the closed set of $H \in \R^d$ such that
\begin{align*}
\pi^{\P^g_t(\omega)|_{K_n}}(g)+ \epsilon/2+H\Delta S_{t+1}(\omega, \cdot) \ge \mathcal{E}_{t+1}(g)(\omega, \cdot)) \quad \P^g_t(\omega)(\cdot|\Delta S_{t+1}(\omega, \cdot) \in K_n)\text{-a.s.}
\end{align*}
Then for every $H \in \mathcal{H}_{K_n}$ there exists $\P_n^H \in \mathfrak{P}_t(\omega)$ such that 
\begin{align*}
\P_n^H(\{\pi^{\P^g_t(\omega)|_{K_n}}(g)+\epsilon/2+H\Delta S_{t+1}(\omega, \cdot) < \mathcal{E}_{t+1}(g)(\omega, \cdot))\} \cap \{\Delta S_{t+1}(\omega, \cdot) \in K_n\} )>0.
\end{align*}
Note that there exists a countable sequence $(H^k_n)_{k \in \N}$, which is dense in $\mathcal{H}_{K_n}$. In particular for every $H \in \R^d$ such that
\begin{align*}
\pi^{\P^g_t(\omega)|_{K_n}}(g)+ \epsilon/4+H\Delta S_{t+1}(\omega, \cdot) \ge \mathcal{E}_{t+1}(g)(\omega, \cdot)) \ \P^g_t(\omega)(\cdot|\Delta S_{t+1}(\omega, \cdot) \in K_n)\text{-a.s.}
\end{align*}
there exists $k \in \N$ such that
\begin{align*}
\pi^{\P^g_t(\omega)|_{K_n}}(g)+ \epsilon/2+H^k_n\Delta S_{t+1}(\omega, \cdot) \ge \mathcal{E}_{t+1}(g)(\omega, \cdot)) \ \P^g_t(\omega)(\cdot|\Delta S_{t+1}(\omega, \cdot) \in K_n)\text{-a.s.}
\end{align*}
Set now $\P^n=\sum_{k=1}^{\infty}2^{-k}\P^{H^k_n}_n \in \mathcal{P}(X_1)$ and note that for all $n \in \N$ large enough
\begin{align*}
&\pi^{\frac{1}{2}(\P^g_t(\omega)+\P^n)|_{K_n}}(g)-\pi^{\P^g_t(\omega)|_{K_n}}(g)\ge \epsilon/4.
\end{align*}
Taking $K_n \uparrow \R^d$ we have in particular
\begin{align*}
\mathcal{E}_t(g)(\omega)\le\sup_{\Q \sim \P^g_t(\omega), \ \Q \in \mathcal{M}_{X_1}} \E_{\Q}[g]<\lim_{n \to \infty}\sup_{\Q \sim \frac{1}{2}(\P^g_t(\omega)+\P^n)|_{K_n}, \ \Q \in \mathcal{M}_{X_1}} \E_{\Q}[g] \le \mathcal{E}_t(g)(\omega),
\end{align*}
a contradiction. Thus 
\begin{align*}
\mathcal{E}_t(g)(\omega)&=\pi^{\mathfrak{P}_t(\omega)}(g)=\pi^{\P^g_t(\omega)}(g).
\end{align*}
As $0 \in \text{ri}(\text{supp}((\P^g_t(\omega)))$ a natural universally measurable candidate for a superhedging strategy $\omega \mapsto H_{t+1}(\omega)$ is the right derivative $\lim_{\epsilon \in \Q,\ \epsilon \downarrow 0}(\mathcal{E}^{\epsilon e_i}_t(g)(\omega)-\mathcal{E}_t(g)(\omega))/\epsilon$ where $\mathcal{E}^{\epsilon e_i}_t(g)(\omega)$ is the superhedging price for the Borel-measurable stock $(S_t+\epsilon e_i)$, $i=1, \dots, d' \le d$ instead of $S_t$. This is a pointwise limit of differences of upper seminanalytic functions and thus universally measurable. For $\omega \in X_t$ such that this quantity does not exist, we set $H_{t+1}(\omega)=0$.  Furthermore in order to show that the map $\omega \mapsto \P^g_t(\omega)$ can be chosen to be universally measurable we first note that in \citep[Lemma 4.8, p.843]{bouchard2015arbitrage} the set 
\begin{align*}
\{ (\Q, \P) \in \mathcal{P}(\text{proj}_{t+1}(\Sigma_t^{\omega})) &\times \mathcal{P}(\text{proj}_{t+1}(\Sigma_t^{\omega})) \ | \ \\&\E_{\Q}[\Delta S_{t+1}(\omega, \cdot)]=0, \ \P \in \mathfrak{P}_t(\omega), \ \Q \ll \P \}
\end{align*}
is analytic. Thus we can apply the Jankov-von-Neumann selection theorem (see \citep[Proposition 7.50, p.184]{bertsekas1978stochastic}) to find $1/n$-optimisers $(\Q_t^n(\omega), \P_t^n(\omega))$ for $\mathcal{E}_t(g)(\omega)$ and the claim follows. The case $\Phi\neq 0$ can be handled by induction as in the proof of Theorem \ref{Thm. bn vs bfhmo} for $\Phi\neq 0$. \\
In conclusion we have found a strategy $(h,H) \in \admU$ such that 
\begin{align*}
\sup_{\Q \in \mathcal{Q}_{\Pfr,\Phi}} \E_{\Q}[g]+h \cdot \Phi+(H \circ S_T) \ge g \hspace{0.5cm} \mathfrak{P}\text{-q.s.}
\end{align*}
We now define 
\begin{align*}
\Omega^{\mathfrak{P}}_{g}=\Omega^{\mathfrak{P}} \cap \left\{\omega \in X \ \Bigg| \ \sup_{\Q \in \mathcal{Q}_{\Pfr,\Phi}} \E_{\Q}[g]+h \cdot \Phi(\omega)+(H \circ S_T)(\omega) \ge g(\omega)\right\} \in \mathcal{F}^{\mathcal{U}}.
\end{align*}
This concludes the proof.
\end{proof}

\begin{Rem} 
By \textbf{NA}$(\mathfrak{P})$ Proposition \ref{prop. joh1} implies for $g=0$
\begin{align*}
0 &= \inf \{ x \in \R \ | \ \exists (h,H) \in \admU \text{ such that } x + h \cdot \Phi + (H \circ S_T) \ge 0 \ \mathfrak{P}\text{-q.s.}\}\\
&=\inf_{\tilde{g}\in \mathfrak{E}_0}\inf \{ x \in \R \ | \ \exists (h,H) \in \admU  \text{ such that } x+ h \cdot \Phi+(H \circ S_T) \ge \tilde{g} \ \text{ on } (\Omega_{\tilde g}^{\Pfr})^*_{\Phi} \}\\
&=\inf_{\tilde{g} \in \mathfrak{E}_0}\sup_{\Q \in \mathcal{M}_{\Omega^{\Pfr}_{\tilde g},\Phi}} \E_{\Q}[\tilde{g}],
\end{align*}
where we define
\begin{align*}
\mathfrak{E}_0 = \{ \tilde{g}: X \to (-\infty,0] \ \mathcal{F}^{\mathcal{U}}\text{-measurable} \ | \ \tilde{g}=0 \ \mathfrak{P}\text{-q.s.}   \}.
\end{align*}
In particular for every $\tilde{g} \in \mathfrak{E}_0$ there exists $\Q \in \mathcal{M}_{X,\Phi}$ such that $\E_{\Q}[\tilde{g}]=0$. A similar result was obtained by \citep{riedsonbuz17} in a more general setup. Aggregating the martingale measures corresponding to all $\tilde{g}$ (and thus to all $\mathfrak{P}$-polar sets) to achieve a result comparable to \citep{bouchard2015arbitrage} in a setup without using (APS) of $\mathfrak{P}$ remains an open problem.
\end{Rem}

\emph{Data Availability Statement:} Data sharing is not applicable to this article as no new data were created or analyzed in this study.

\makeatletter
\renewcommand{\@biblabel}[1]{[#1]}
\makeatother
\bibliographystyle{apalike}
\bibliography{bib}

\begin{thebibliography}{}

\bibitem[Acciaio et~al., 2016]{acciaio2013model}
Acciaio, B., Beiglb{\"o}ck, M., Penkner, F., and Schachermayer, W. (2016).
\newblock A model-free version of the {F}undamental {T}heorem of {A}sset
  {P}ricing and the {S}uper-replication {T}heorem.
\newblock {\em Math. Finance}, 26(2):233--251.

\bibitem[Bartl, 2019]{bartl2016exponential}
Bartl, D. (2019).
\newblock Exponential utility maximization under model uncertainty for
  unbounded endowments.
\newblock {\em Ann. Appl. Probab.}, 29(1):577--612.

\bibitem[Bartl et~al., 2017]{bartl2017duality2}
Bartl, D., Cheridito, P., Kupper, M., Tangpi, L., et~al. (2017).
\newblock Duality for increasing convex functionals with countably many
  marginal constraints.
\newblock {\em Banach Journal of Mathematical Analysis}, 11(1):72--89.

\bibitem[Bayraktar and Zhang, 2016]{bayraktar2016fundamental}
Bayraktar, E. and Zhang, Y. (2016).
\newblock Fundamental theorem of asset pricing under transaction costs and
  model uncertainty.
\newblock {\em Math. Oper. Res.}, 41(3):1039--1054.

\bibitem[Bayraktar et~al., 2014]{bayraktar2014note}
Bayraktar, E., Zhang, Y., and Zhou, Z. (2014).
\newblock A note on the fundamental theorem of asset pricing under model
  uncertainty.
\newblock {\em Risks}, 2(4):425--433.

\bibitem[Bayraktar and Zhou, 2017]{bayraktar2015arbitrage}
Bayraktar, E. and Zhou, Z. (2017).
\newblock On arbitrage and duality under model uncertainty and portfolio
  constraints.
\newblock {\em Math. Finance}, 27(4):988--2012.

\bibitem[Bertsekas and Shreve, 1978]{bertsekas1978stochastic}
Bertsekas, D. and Shreve, S. (1978).
\newblock {\em Stochastic optimal control: The discrete time case}, volume~23.
\newblock Academic Press New York.

\bibitem[Black and Scholes, 1973]{black1973pricing}
Black, F. and Scholes, M. (1973).
\newblock The pricing of options and corporate liabilities.
\newblock {\em J. Political Econ.}, pages 637--654.

\bibitem[Blanchard and Carassus, 2019]{blanchard2019no}
Blanchard, R. and Carassus, L. (2019).
\newblock No-arbitrage with multiple-priors in discrete time.
\newblock {\em arXiv preprint arXiv:1904.08780}.

\bibitem[Bonnice and Reay, 1969]{bonnice1969relative}
Bonnice, W. and Reay, J. (1969).
\newblock Relative interiors of convex hulls.
\newblock {\em Proc. Am. Math. Soc.}, 20(1):246--250.

\bibitem[Bouchard and Nutz, 2015]{bouchard2015arbitrage}
Bouchard, B. and Nutz, M. (2015).
\newblock Arbitrage and duality in nondominated discrete-time models.
\newblock {\em Ann. Appl. Probab.}, 25:823--859.

\bibitem[Burzoni et~al., 2019a]{bfhmo}
Burzoni, M., Fritelli, M., Hou, Z., Maggis, M., and Ob{\l}{\'o}j, J. (2019a).
\newblock Pointwise arbitrage pricing theory in discrete time.
\newblock {\em Math. Oper. Res.}

\bibitem[Burzoni et~al., 2016]{burzoni2016universal}
Burzoni, M., Frittelli, M., and Maggis, M. (2016).
\newblock Universal arbitrage aggregator in discrete-time markets under
  uncertainty.
\newblock {\em Finance Stoch.}, 20(1):1--50.

\bibitem[Burzoni et~al., 2017]{burzoni2015model}
Burzoni, M., Frittelli, M., and Maggis, M. (2017).
\newblock Model-free superhedging duality.
\newblock {\em Ann. Appl. Probab.}, 27(3):1452--1477.

\bibitem[Burzoni et~al., 2019b]{riedsonbuz17}
Burzoni, M., Riedel, F., and Soner, H. (2019b).
\newblock Viability and arbitrage under {K}nightian uncertainty.
\newblock {\em arXiv preprint arXiv:1707.03335}.

\bibitem[Carassus et~al., 2019]{cow}
Carassus, L., Ob\l\'{o}j, J., and Wiesel, J. (2019).
\newblock The robust superreplication problem: A dynamic approach.
\newblock {\em SIAM J. Financial Math.}
\newblock to appear.

\bibitem[Cohn, 2013]{cohn2013measure}
Cohn, D. (2013).
\newblock {\em Measure theory}.
\newblock Springer.

\bibitem[Cont, 2006]{Cont2006}
Cont, R. (2006).
\newblock Model uncertainty and its impact on pricing derivative instruments.
\newblock {\em Math. Finance}, 16(3):519--547.

\bibitem[Cox et~al., 2016]{cox2014robust}
Cox, A., Hou, Z., and Obl\'{o}j, J. (2016).
\newblock Robust pricing and hedging under trading restrictions and the
  emergence of local martingale models.
\newblock {\em Finance Stoch.}, 20(3):669--704.

\bibitem[Cox and Ob{\l}{\'o}j, 2011]{cox2011robusta}
Cox, A. and Ob{\l}{\'o}j, J. (2011).
\newblock Robust pricing and hedging of double no-touch options.
\newblock {\em Finance Stoch.}, 15(3):573--605.

\bibitem[Davis and Hobson, 2007]{davis2007range}
Davis, M. and Hobson, D. (2007).
\newblock The range of traded option prices.
\newblock {\em Math. Finance}, 17(1):1--14.

\bibitem[Delbaen and Schachermayer, 1994]{delbaen1994general}
Delbaen, F. and Schachermayer, W. (1994).
\newblock A general version of the fundamental theorem of asset pricing.
\newblock {\em Mathematische Annalen}, 300(1):463--520.

\bibitem[Denis and Martini, 2006]{denis2006theoretical}
Denis, L. and Martini, C. (2006).
\newblock A theoretical framework for the pricing of contingent claims in the
  presence of model uncertainty.
\newblock {\em Ann. Appl. Probab.}, 16(2):827--852.

\bibitem[Dolinsky and Soner, 2014]{dolinsky2014robust}
Dolinsky, Y. and Soner, H. (2014).
\newblock Robust hedging with proportional transaction costs.
\newblock {\em Finance and Stochastics}, 18(2):327--347.

\bibitem[F{\"o}llmer and Schied, 2011]{follmer2011stochastic}
F{\"o}llmer, H. and Schied, A. (2011).
\newblock {\em Stochastic {F}inance: {A}n {I}ntroduction in {D}iscrete {T}ime}.
\newblock Walter de Gruyter.

\bibitem[Hou and Ob{\l}{\'o}j, 2018]{hou2015robust}
Hou, Z. and Ob{\l}{\'o}j, J. (2018).
\newblock On robust pricing--hedging duality in continuous time.
\newblock {\em Finance Stoch.}, 22(3):511--567.

\bibitem[Knight, 1921]{knight2012risk}
Knight, F. (1921).
\newblock {\em Risk, uncertainty and profit}.
\newblock Boston: Houghton Mifflin.

\bibitem[Merton, 1973]{Merton73}
Merton, R.~C. (1973).
\newblock Rational theory of option pricing.
\newblock {\em Bell J. Econ. Manage. Sci.}, 4(1):141--183.

\bibitem[Mykland et~al., 2003]{mykland2003financial}
Mykland, P. et~al. (2003).
\newblock Financial options and statistical prediction intervals.
\newblock {\em Ann. Statist.}, 31(5):1413--1438.

\bibitem[Nutz, 2016]{nutz2014utility}
Nutz, M. (2016).
\newblock Utility maximization under model uncertainty in discrete time.
\newblock {\em Math. Finance}, 26(2):252--268.

\bibitem[Peng, 2004]{peng2004nonlinear}
Peng, S. (2004).
\newblock Nonlinear expectations, nonlinear evaluations and risk measures.
\newblock In {\em Stochastic methods in finance}, pages 165--253. Springer.

\bibitem[Riedel, 2015]{riedel2015financial}
Riedel, F. (2015).
\newblock Financial economics without probabilistic prior assumptions.
\newblock {\em Decisions in Economics and Finance}, 38(1):75--91.

\bibitem[Rockafellar and Wets, 2009]{rockafellar2009variational}
Rockafellar, R. and Wets, R. (2009).
\newblock {\em Variational Analysis}, volume 317.
\newblock Springer Berlin.

\bibitem[Rokhlin, 2008]{rokhlin2008proof}
Rokhlin, D. (2008).
\newblock A proof of the {D}alang-{M}orton-{W}illinger theorem.
\newblock {\em arXiv preprint arXiv:0804.3308}.

\bibitem[Samuelson, 1965]{samuelson1965rational}
Samuelson, P. (1965).
\newblock Rational theory of warrant pricing.
\newblock {\em Ind. Manag. Rev.(pre-1986)}, 6(2):13.

\bibitem[Vanderbei, 1997]{vanderbei1991uniform}
Vanderbei, R. (1997).
\newblock Uniform continuity is almost {L}ipschitz continuity.
\newblock Technical report, SOR-91-11, Statistics and Operations Research
  Series.

\bibitem[Wiesel, 2020]{WieselDPhil}
Wiesel, J. (2020).
\newblock Time-consistent data driven approach to robust pricing and hedging.
\newblock DPhil thesis.

\end{thebibliography}

\end{document}